\newlength\figureheight
\newlength\figurewidth
\pgfplotsset{compat=newest}
\pgfplotsset{plot coordinates/math parser=false}
\newtheorem{theorem}{Theorem}
\newtheorem{Assumption}{Assumption}
\newtheorem{corollary}{Corollary}
\newtheorem{definition}{Definition}
\newtheorem{lemma}{Lemma}
\newtheorem{remark}{Remark}
\DeclareMathOperator*{\argmin}{arg\,min}
\DeclareMathOperator*{\argmax}{arg\,max}
\newcommand{\bk}[1]{\textcolor{red}{#1}}
\begin{document}
	\title{\textsc{\Large Indirect Inference for Locally Stationary Models }}
	\author{David T. Frazier\thanks{
			Department of Econometrics and Business Statistics, Monash University, PO
			Box 11E, Clayton Campus, VIC 3800, Australia; e-mail: david.frazier@monash.edu} 
		\qquad Bonsoo Koo\thanks{\noindent Corresponding author.
			Department of Econometrics and Business Statistics, Monash University, PO
			Box 11E, Clayton Campus, VIC 3800, Australia; e-mail: bonsoo.koo@monash.edu} \\
		Monash University}
	\title{\textsc{\Large Indirect Inference for Locally Stationary Models }}
	\date{}
	\maketitle
	
	\begin{abstract}
		We propose the use of indirect inference estimation to conduct inference in complex locally stationary models. We develop a local indirect inference algorithm and establish the asymptotic properties of the proposed estimator. Due to the nonparametric nature of locally stationary models, the resulting indirect inference estimator exhibits nonparametric rates of convergence. We validate our methodology with simulation studies in the confines of a locally stationary moving average model and a new locally stationary multiplicative stochastic volatility model. Using this indirect inference methodology and the new locally stationary volatility model, we obtain evidence of non-linear, time-varying volatility trends for monthly returns on several Fama-French portfolios.     
	\end{abstract}
	
	\textit{Key words:} semiparametric, locally stationary, indirect inference, state-space models 
	
	\textit{Journal of Economic Literature Classification}:\ C13, C14, C22
	\maketitle
	\newpage
	\section{Introduction}
	
	Time-varying economic and financial variables, and relationships thereof, are stable features in applied econometrics. Notable examples include asset pricing models with time-varying features \citep{ghysels1998stable, wang2003asset} and trending macroeconomic models \citep{stock1998median, phillips2001trending}. While classical analyses of time series are built on the assumption of stationarity, data studied in finance and economics often exhibit nonstationary features. 
	
	Many different schools of modeling and estimation methods are used to accommodate the nonstationary behavior of observed time series data. In particular, statistical tools developed for locally stationary processes provide a convenient means of conducting analyses of trending economic and financial models. Heuristically, local stationarity implies that a process behaves in a stationary manner (at least) in the vicinity of a given time point but could be nonstationary over the entire time horizon. For certain widely-studied time series models, slowly time-varying parameters ensure local stationarity under some regularity conditions; for instance, see \citet{dahlhaus1996kullback} and \citet{D97} (AR(1)), \citet{DSR06} (ARCH($\infty$)), \citet{dahlhaus2009empirical} (MA($\infty$)), \citet{KL12} (Diffusion processes) and \citet{koo2015let} (GARCH(1,1) with a time-varying unconditional variance) among many other classes of locally stationary processes. 
	
	While many classes of well-known time series models can be generalized to locally stationary processes, it is worth noting that estimation and inference procedures developed in one class of locally stationary processes often cannot be applied to a different class of locally stationary processes. In particular, many estimation methods for locally stationary processes are composed of estimation approaches that primarily focus on local regression with closed-form estimators, local maximum likelihood estimation (MLE) with a closed-form likelihood function (in the time domain) and spectral density approach (in the frequency domain), all of which could be intractable or simply difficult to implement for various locally stationary extensions of commonly used structural econometric models; we refer to \citet{vogt2012nonparametric}, \citet{DSR06} and \citet{dahlhaus2009empirical}, for examples. As such, model specifications compatible with the above statistical methods are rather limited and cannot be used for estimation and inference in more complicated locally stationary models, such as, for instance, models with latent variables or unobservable factors. 
	
	More importantly, structural models of economic and financial relationships commonly rely on the use of latent variables to represent information that is unavailable to the econometrician. This modeling approach implies, almost by definition, that simple (closed-form) representations for the conditional distributions of the endogenous variables are unavailable, with simple straightforward estimation methods often infeasible as a consequence. In such cases, if we were to extend common locally-stationary models to include the latent variables that are necessary to structurally model phenomena found in economics and finance, this would render the existing estimation methods used for such models infeasible. For instance, this situation arises in state-space models if either the measurement or state transition densities do not have closed forms, as in the case of stochastic volatility models. A secondary example is the fact that estimation of univariate locally stationary diffusion models cannot be straightforwardly extended to versions of these models with stochastic volatility.
	
	To circumvent the above issue, and to help proliferate the use of locally stationary models and methods in econometrics and finance, we propose a novel nonparametric indirect inference (hereafter, II) method to estimate locally stationary processes. Instead of estimating complex \textit{structural} locally stationary models directly, we indirectly obtain our estimator by targeting consistent estimators of simpler \textit{auxiliary} models, and use these consistent estimates to conduct inference on the structural parameters. See, \citet{smith1993estimating}, \citet{GMR93} and \citet{gourieroux1996simulation} for discussion of indirect inference in parametric models. 
	
	To illustrate the main idea behind our nonparametric II approach for locally stationary processes, we consider the following motivating example. Suppose that the true data generating process evolves according to
	\begin{equation}
	Y_{t,T}=\sqrt{\xi(t/T)}\exp\left(h_t/2\right)\varepsilon_{t},\;\text{where }h_{t}=\omega+\delta h_{t-1}+\sigma v_t,\;(\varepsilon_t,v_t)'\sim \mathcal{N}\left(0,\begin{bmatrix}1&0\\0&1 \end{bmatrix}\right),
	\label{eq: ex1_str}
	\end{equation}
	where $\xi(t/T)>0,$ for all $t\leq T$. This locally stationary multiplicative stochastic volatility (LS-SV) model decomposes volatility into a short-term, latent volatility process, $h_t$, and a slowly time-varying component, captured by $\xi(\cdot)$, and can capture a wide range of volatility behaviors. The above model allows for non-stationary, but slowly changing, volatility dynamics, which may result from the transitory nature of the business cycle.

	Suppose that we wish to estimate and conduct inference on the unknown volatility function $\xi(\cdot)$ in \eqref{eq: ex1_str}. While (G)ARCH-based versions of the locally stationary volatility model have been analyzed by several researchers (see, e.g., \citealp{DSR06}, \citealp{ER08}, \citealp{fryzlewicz2008normalized}, and \citealp{koo2015let}), since the latent volatility process, $h_t$, pollutes the observed data, $Y_{t,T}$, it is not entirely clear how to estimate parameters in \eqref{eq: ex1_str}. Indeed, largely due to this fact, locally stationary volatility models have not been previously explored in the literature, even though their stationary counterparts form the backbone of many empirical studies in finance and financial econometrics. 
	
	In this paper, we generalize the II approach of \citet{GMR93} to present a convenient estimator for unknown functions in locally stationary models, such as the LS-SV model. This approach to II estimation relies on a locally stationary auxiliary model that can be easily estimated using the observed data and that captures the underlying features of interest in the structural model. For example, in the context of the LS-SV model, a reasonable auxiliary model would be the locally stationary GARCH model:  
	\begin{flalign}
	Y_{t,T} =\sqrt{\rho(t/T)} \sigma_{t} z_{t}, \text{ where } \sigma_{t+1}^{2} &=\alpha_0+\alpha_1 z_{t}^{2}+\beta\sigma_{t}^{2} ,	\label{eq: ex1_aux}
	\end{flalign} where $\rho(t/T)>0$ for all $t\leq T$, and where $z_t$ is an error process.
	
	The remainder of this paper further develops the ideas behind this estimation method in the context of a general locally stationary model and establishes the asymptotic properties of the proposed estimation procedure under regularity conditions. To establish the asymptotic properties of these II estimators, we must first develop conditions that guarantee locally stationary models admit consistent estimators of their corresponding limit values. This is itself a novel result since the vast majority of research into locally stationary models has focused on estimators defined by relatively simple criterion functions, and all under the auspices of correct model specification. Indeed, \cite{kristensen2019local} is the only other study of which the authors are aware that treats genuinely misspecified locally stationary models. These new results for locally stationary estimators of the auxiliary model enable us to deduce the asymptotic properties of our proposed II estimator for the structural model parameters. 	
	
	The estimation procedure proposed herein is demonstrated through two Monte Carlo examples, and an empirical application. The empirical application applies the LS-SV model to examine the volatility structure of several commonly analyzed Fama-French portfolios. We find that most of these portfolios display time-varying volatility patterns that broadly track the underlying (low-frequency) expansion and contractions of the United States economy. 
	
	The remainder of the paper is organized as follows. 
	Section \ref{BBBmodel} introduces the general model and the related framework. In Section \ref{BBBestimators}, we present our general approach and define the corresponding local II (L-II) estimators for a general locally stationary model. Section \ref{BBBasymptotics} develops asymptotic results that demonstrate the properties of this estimation procedure. Simulation results for a simple example of a locally stationary moving average model of order one are discussed in Section \ref{BBBsim}. In Section \ref{sec:lssv} we analyze the locally stationary stochastic volatility model. We consider a small Monte Carlo to demonstrate our estimation method, then apply this method to analyze the volatility behavior of Fama-French portfolio returns, where we find ample evidence for smoothly time-varying nonlinear volatility dynamics over the sample period. All proofs are relegated to Appendix \ref{app: proofs}.  The tables and figures associated with the application in Section \ref{sec:lssv} are given in Appendix \ref{app: tf}. The proof of Corollary 2 and additional details for the LS-SV model are provided in the supplementary appendix.

	Throughout this paper, the following notations are used. The symbol $\mathbb{R}$ denotes the real numbers, while $\mathbb{N}$ denotes the natural numbers. For $x\in\mathbb{R}^d$, we let $\|x\|$ denote the Euclidean norm, while $|\cdot|$ denotes the absolute value function, and for $\Omega$ a $d\times d$ positive-definite matrix, we let $\Vert x\Vert^2_{\Omega}:=x'\Omega x$ denote the weighted norm of $x$. For $g:\mathbb{R}^{d}\rightarrow \mathbb{R}$ denoting a given function, we let $\|g\|_\infty:=\sup_{x\in\mathbb{R}^d}|g(x)|$ denote the sup-norm. For an unknown parameter $\theta$, the subscript $0$ denotes the true value of $\theta$. The quantities $O_p(\cdot)$ and $o_p(\cdot)$ denote the usual big $O$ and little $o$ in probability.
	$C$ denotes a generic constant that can take different values in different places. 
	
	\section{The model}
	\label{BBBmodel}
	\subsection{Structural models}
	
	We assume the researcher is interested in conducting inference on a model in the class of locally stationary processes.
	\begin{definition}
		\label{def: ls}
		Let $\{Y_{t,T}\}_{t=1,...,T;T=1,2,...}$ denote a triangular array of observations. The process $\{Y_{t,T}\}$ is locally stationary if there exists a stationary process $\{y_{t/T,t}\}$ for each re-scaled time point $t/T\in[0,1]$, such that for all $T$, 
		\begin{equation*}
		\text{{P}}\left(\max_{1\leq t\leq T}\left|Y_{t,T}-y_{t/T,t}\right|\leq C_{T}T^{-1}\right)=1,
		\end{equation*}where $\{C_{T}\}$ is a measurable process satisfying, for some $\eta>0$, $\sup_T {E}\left(\left|C_T\right|^{\eta}\right)<\infty$. 
	\end{definition}
	The magnitude of $\eta$ captures the degree of approximation of $y_{t/T,t}$ to $Y_{t,T}$, which reflects the characteristics of the underlying processes of interest. The larger $\eta$, the better the approximation. We do not specify the magnitude of $\eta$ to maintain generality, which allows us to represent various types of processes, and instead allow $\eta$ to vary from model to model. See, for instance, \citet{DSR06} for ARCH($\infty$), \citet{KL12} for diffusion processes, \citet{vogt2012nonparametric} for AR processes and \citet{dahlhaus2009empirical} for MA processes among many other processes.
	
	We consider that the process $\{Y_{t,T}\}$ is generated from the following {locally stationary} structural model:
	\begin{equation}
	\begin{aligned}
	Y_{t,T}&=r(\epsilon_{t,T};\theta_{0}(t/T)),\\
	\epsilon_{t,T}&=\varphi(\nu_{t};\theta_{0}(t/T)),\label{struct1}
	\end{aligned}
	\end{equation}
	where both $r(\cdot)$ and $\varphi(\cdot)$ are real-valued functions that are known up to the unknown function $\theta_0$. The function of interest is $\theta_{0}\in\mathcal{H}_\theta$, where $(\mathcal{H}_\theta,\|\cdot\|)$ denotes a  normed vector space of function. The structural model, and $\theta_0$ satisfy the following regularity conditions.
	\begin{Assumption}\label{ass:1}(i) For a positive $\delta=o(1)$, and $u=t/T\in[\delta,1-\delta]$, the function $\theta_0(u)$ has uniformly bounded second-derivatives with respect to $u$. (ii) The functions $r(\cdot)$, $\varphi(\cdot)$, known up to $\theta_0(\cdot)$, are twice continuously differentiable with respect to $\theta$, with uniformly bounded second derivatives. (iii) The error term $\{\nu_{t}\}_{t\ge1}$ is a white noise process with known distribution.
	\end{Assumption}	

	The structural model in \eqref{struct1} is quite general and can accommodate many interesting processes, including models with complex time-varying features, such as time-varying autoregressive conditional heteroskedasticity (ARCH). In addition, the structural model in \eqref{struct1} can always be augmented with additional exogenous regressors at the cost of additional notation. Such regressors may be used, for instance, to capture some conditionally heteroskedastic features of the data. Critically for our purposes, under Assumption \ref{ass:1}, if $\theta_0(\cdot)$ were known, simulated realizations of $\{Y_{t,T}\}$ could easily be generated from the model in equation \eqref{struct1}.\footnote{We note here that Assumption \ref{ass:1}(iii) is standard in the II literature. Indeed, \cite{GMR93} argue that this is not a real assumption since the error term ``can always be considered as a function of a white noise with a known distribution and of a parameter which can be incorporated'' into the unknown parameters.}
	If the process in \eqref{struct1} is locally stationary, inference on $\theta_{0}(\cdot)$ can be carried out through an approximate structural model defining a stationary process indexed by $u\in\mathcal{U}$, where $\mathcal{U}$ denotes the domain of re-scaled time point $u=t/T$, i.e. $\mathcal{U}=[\delta,1-\delta]$ with a positive $\delta=o(1)$:
	\begin{equation}
	\begin{aligned}
	\label{struct2}
	y_{u,t}&=r(\epsilon_{u,t};\theta_{0}(u)),\\
	\epsilon_{u,t}&=\varphi(\nu_{t};\theta_{0}(u)).
	\end{aligned}
	\end{equation}
	\begin{lemma}\label{Lem:struct_ls}
		Suppose that $\{Y_{t,T}\}$ in \eqref{struct1} is locally stationary as in Definition \ref{def: ls}. Under Assumption \ref{ass:1}, as $T\rightarrow\infty$, the process $\{y_{u,t}\}$ in \eqref{struct2} is such that
		\begin{equation}
		| Y_{t,T}-y_{u,t} | = O_p\left(\left| {t}/{T}-u \right| + T^{-1}\right).
		\label{eq: approx}
		\end{equation}  		
	\end{lemma}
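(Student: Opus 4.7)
The plan is to decompose $|Y_{t,T}-y_{u,t}|$ by a triangle inequality through the intermediate process $y_{t/T,t}$ given by Definition \ref{def: ls}, and then handle the two resulting pieces separately: one via the definition of local stationarity, the other via smoothness of $r,\varphi$ and $\theta_0$.

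\textbf{Step 1 (Triangle inequality).} Write
\[
| Y_{t,T}-y_{u,t}| \;\le\; | Y_{t,T}-y_{t/T,t}| \;+\; | y_{t/T,t}-y_{u,t}|.
\]
The first term is controlled directly by Definition \ref{def: ls}: we have $|Y_{t,T}-y_{t/T,t}|\le C_T T^{-1}$ almost surely. Since $\sup_T E|C_T|^{\eta}<\infty$ for some $\eta>0$, Markov's inequality yields $C_T=O_p(1)$ and hence $|Y_{t,T}-y_{t/T,t}|=O_p(T^{-1})$ uniformly in $t$.

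\textbf{Step 2 (Smoothness bound on the second term).} Both $y_{t/T,t}$ and $y_{u,t}$ are defined in \eqref{struct2} through the \emph{same} innovation $\nu_t$, so
\[
y_{v,t} \;=\; r\bigl(\varphi(\nu_t;\theta_0(v));\,\theta_0(v)\bigr), \qquad v\in\{t/T,u\}.
\]
By Assumption \ref{ass:1}(ii), $r$ and $\varphi$ are twice continuously differentiable in $\theta$ with uniformly bounded first derivatives, so the composition $v\mapsto r(\varphi(\nu_t;\theta_0(v));\theta_0(v))$ is Lipschitz in $\theta_0(v)$ with a deterministic constant (the chain-rule bound $\|r_\theta\|_\infty + \|r_\epsilon\|_\infty\|\varphi_\theta\|_\infty$). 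By Assumption \ref{ass:1}(i), $\theta_0$ has uniformly bounded first derivatives on $\mathcal{U}$ (implied by the uniform bound on the second derivative together with compactness of $\mathcal{U}$), hence $|\theta_0(t/T)-\theta_0(u)|\le L|t/T-u|$. Combining these Lipschitz bounds gives, for some deterministic constant $K$,
\[
|y_{t/T,t}-y_{u,t}| \;\le\; K\,|t/T-u|.
\]

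\textbf{Step 3 (Conclusion).} Summing the two bounds yields
\[
|Y_{t,T}-y_{u,t}| \;\le\; C_T T^{-1} + K|t/T-u| \;=\; O_p\bigl(T^{-1}+|t/T-u|\bigr),
\]
which is the claim in \eqref{eq: approx}. The only subtle point, and the one most likely to require care in a fully rigorous write-up, is Step 2: one must ensure that the Lipschitz constants controlling $v\mapsto r(\varphi(\nu_t;\theta_0(v));\theta_0(v))$ do not depend on $\nu_t$. This is exactly what the \emph{uniform} boundedness clause of Assumption \ref{ass:1}(ii) provides, so the constant $K$ can be chosen independently of $t,T$, and the bound holds in the desired $O_p$ sense uniformly.
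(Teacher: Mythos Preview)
Your proof is correct and follows essentially the same route as the paper's: triangle inequality through $y_{t/T,t}$, then Definition~\ref{def: ls} for the first piece and smoothness of $r,\varphi,\theta_0$ (via Assumption~\ref{ass:1}) for the second. The only cosmetic difference is that the paper writes Step~2 as a first-order Taylor expansion of $u\mapsto y_{u,t}$ around $u_0$ and reads off the $O_p(|t/T-u|)$ order from the chain-rule derivative, whereas you phrase the same thing as a Lipschitz bound; the content is identical.
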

Lemma \ref{Lem:struct_ls} is consistent with Proposition 3.1 of \citet{dahlhaus2019towards}, and implies that in the neighborhood of a re-scaled time point $u = t/T$, the local behavior of $\{Y_{t,T}\}$ can be approximated by the behavior of $\{{y}_{u,t}\}$.  Consequently, statistical analysis on $\{Y_{t,T}\}$ can be based on a collection of locally stationary processes $\{y_{u,t}:u\in\mathcal{U}\}$. 
	
	Under local stationarity, we will demonstrate that estimation of the unknown (vector) function $\theta_{0}(\cdot)$ in \eqref{struct1} can proceed through a local version of II (L-II) conducted at the time points $u=t/T$. This approach relies on the fact that, for any $u\in\mathcal{U}$, $\theta_0(u)$ in \eqref{struct2} satisfies $\theta_{0}(u)\in\Theta\subset\mathbb{R}^{d_{\theta}}$; i.e., in the locally stationary structural model we view the function of interest as a map $\theta_{0}(\cdot):\mathcal{U}\mapsto\Theta$. The assumption that $\theta_{0}(\cdot)$ is our only parameter of interest is without loss of generality as we may always redefine $\theta_{0}(\cdot)$ to include those elements (time-varying or otherwise) of the distribution for the errors that are unknown. 
	This paper is particularly concerned with estimation and inference when the structural model, \eqref{struct1}, rules out direct estimation approaches developed in the existing literature, for instance, due to the presence of latent variables that make computation of the likelihood function intractable. 
	
    Consider that our goal is to estimate the unknown map $\theta_0:\mathcal{U}\mapsto\Theta$ at a given point $u\in\mathcal{U}$. Since $\theta_0(u)\in\Theta\subset\mathbb{R}^{d_\theta}$, we associate to this unknown function (evaluated at the point $u$) a vector $\theta\in\Theta$. Even if the vector $\theta$ can not be estimated by direct means, since $\{y_{u,t}\}_{}$ is stationary (at the fixed value $u$) we can easily simulate a realization of this series by replacing $\theta_0(u)$ in equation \eqref{struct1} by $\theta$. For fixed $u \in \mathcal{U}$ and some $\theta\in\Theta$, a simulated series $\{\tilde{y}_{u,t}(\theta)\}_{t\le T}$ can be generated according to
	\begin{equation}
	\begin{aligned}
	\label{struct3}
	\tilde{y}_{u,t}(\theta)&=r(\tilde{\epsilon}_{u,t}(\theta);{\theta}),\\
	\tilde{\epsilon}_{u,t}(\theta)&=\varphi(\tilde{\nu}_{t};{\theta}),
	\end{aligned}
	\end{equation}where $\tilde{\nu}_t$ denotes a simulated realization of the random variable $\nu_t$.\footnote{The use of slightly misspecified simulators in II is not uncommon, see, e.g., \citet{DGR2007}, \citet{AEA2013}, \citet{bruins2015}, and \citet{frazier2018indirect} for examples of misspecified simulators in the context of II estimation. In this sense, we follow the above papers in that the version of the structural model used to simulate data is a (locally) misspecified version of the true DGP.} Throughout the remainder, a tilde, $\tilde{}$, over a variable will denote that this variable is simulated and when no confusion will result we drop simulated series dependence on $\theta$, e.g., we take $\tilde{y}_{u,t}$ to mean $\tilde{y}_{u,t}(\theta)$. 
	
	Given the simulated series $\{\tilde{y}_{u,t}\}_{t\le T}$, II estimation of $\theta_0(u)$ can then proceed by minimizing the difference between statistics calculated from the observed data, $\{Y_{t,T}\}_{t\le T}$, and the simulated data, $\{\tilde{y}_{u,t}\}_{t\le T}$. Repeating this procedure at a collection of points $u_1,\dots,u_m$ would then yield an estimate of the unknown function $\theta_0(\cdot)$.	
	
	\subsection{Auxiliary models and direct estimation}
	To employ our L-II estimation method, we specify an auxiliary model defined by the unknown (vector) function $\rho(\cdot)\in\mathcal{H}_\rho$, with $(\mathcal{H}_\rho,\|\cdot\|)$ a vector space of functions, and where $\rho(\cdot):\mathcal{U}\mapsto\Gamma\subset\mathbb{R}^{d_\rho}$ with $d_{\rho}\geq d_{\theta}$. Similar to the structural function of interest, for any given $u\in\mathcal{U}$ we associate to the unknown function $\rho(u)$ a vector $\rho\in\Gamma\subset\mathbb{R}^{d_{\rho}}$. In general, we will only emphasize the parameters' dependence on the point $u$ when necessary. 
		
	Reflecting the features of the true structural model, the auxiliary model is chosen such that it allows for direct estimation of $\rho(\cdot)$.	We estimate $\rho(\cdot)$ at the point $u$, i.e., $\rho=\rho(u)$, by minimizing a local criterion function: for kernel function $K(\cdot)$ and bandwidth parameter $h$, define
	\begin{equation}
	M_{T}[\rho;u]:=\frac{1}{Th}\sum_{t=1}^{T}g[Y_{t,T};\rho]K\left(\frac{u-t/T}{h}\right),
	\label{eq:aux_obj}
	\end{equation}
	where $g(\cdot)$ is a known function whose properties we later specify. Note that, technically $M_{T}[\rho;u]$ depends on the array $\{Y_{t,T}\}_{t\le T}$, however, we obviate this dependence to keep notation as simple as possible. Given $M_{T}[\rho;u]$, an estimator for $\rho(u)$ can be defined as  
	\begin{flalign}
	\hat{\rho}(u;\theta_0(u))&:=\argmin_{\rho\in\Gamma}M_{T}[\rho;u].\label{eq: rho_hat}
	\end{flalign} 
	The explicit dependence of $\hat{\rho}(u;\theta_0(u))$ on $\theta_0(u)$ clarifies that the auxiliary estimator depends on the unknown $\theta_0(\cdot)$ at the point $u$. However, throughout the remainder, to simplify notation, we obviate this explicit dependence and simply define $\hat\rho(u):=\hat{\rho}(u;\theta_0(u))$.

	It is natural to consider an auxiliary model which allows for simple estimation of the auxiliary parameters. 	One such useful class of auxiliary models will be nonlinear regression models of the type considered in \citet{robinson1991time} and \citet{zhang2015time}: for $Z_{t,T}$  a triangular array of variables that are measurable at time $t$, and exogenous with respect to the error term $\eta_t$, the auxiliary model is given as $$Y_{t,T}=f\left(Z_{t,T};\rho(t/T)\right)+\eta_t,$$ where $f(\cdot)\in\boldsymbol{F}$ is known, up to the unknown $\rho(\cdot)$, and where $$\boldsymbol{F}:=\{f:|f(x,\rho_1)-f(x,\rho_2)|\le b(x)\|\rho_1-\rho_2\|_{\infty},\;\rho_1,\rho_2\in \mathcal{H}_\rho\}.$$ The set $\boldsymbol{F}$ restricts the form of $f(\cdot)$ to be locally (in $x$) Lipschitz (in $\rho$), with this restriction being satisfied by many regression functions. Under this specific nonlinear regression model, $M_T[\cdot;u]$ could be the local least squares criterion
	\begin{flalign*}
	M_{T}[\rho;u] &= \frac{1}{T h} \sum_{t=1}^{T}[Y_{t,T}-f(Z_{t,T},\rho)]^2K\left(\frac{u-t / T}{h}\right).
	\end{flalign*}
    
    While nonlinear regression models are a useful class of auxiliary models, we do not wish to restrict our analysis solely to this class, and we therefore allow the criterion function $M_T[\rho;u]$ to be general. However, to ensure our theory can easily accommodate this case, we further specialize the structure of the auxiliary criterion function $M_{T}$: For some kernel function, $K(\cdot)$ and a bandwidth parameter, $h$, some known function $f(\cdot)\in\mathbf{F}$ and observable exogenous variables $Z_{t,T}$, we assume that
	\begin{flalign}
	M_{T}[\rho;u]&:=\frac{1}{Th}\sum_{t=1}^{T}g[Y_{t,T};f(Z_{t,T},\rho)]K\left(\frac{u-t/T}{h}\right).
	\end{flalign}
	
	
	\subsection{Estimation of structural parameters}
	\label{BBBestimators}
	
	For $\{Y_{t,T}\}_{t\le T}$ denoting a set of observations from the locally stationary structural model \eqref{struct1}, satisfying Definition \ref{def: ls}, the auxiliary estimator $\hat\rho(u)$ in \eqref{eq: rho_hat} approximates the behavior of $\rho(\cdot)$ at the point $u$. Given $\hat\rho(u)$, an estimator of $\theta_0(u)$ can then be obtained by matching $\hat\rho(u)$ against a version that is calculated based on data simulated from the model under a given $\theta\in \Theta$, and a given $u\in\mathcal{U}$. However, we note that it is unclear in general how to simulate from the non-stationary structural model defined by \eqref{struct1}.
	
	Therefore, instead of attempting to simulate from the model \eqref{struct1}, we invoke the local stationarity of $\{Y_{t,T}\}$ and generate (simulated) realization from the stationary process $\{y_{u,t}:u\in\mathcal{U}\}$, defined by \eqref{struct2}, which approximates $\{Y_{t,T}\}$ in the sense of Definition \ref{def: ls}. Such an II estimation approach is by construction ``local'' in that all we can recover is $\theta_{0}(u)$. An estimate of $\theta_0(\cdot)$ can be obtained by repeatedly applying this local II (L-II) approach at a given set of time points $\{u_{i}\}_{i=1}^m$, where $\max_{i}\Delta u_i=O(T^{-1})$ and $\Delta u_i:=u_{i}-u_{i-1}$.

	More specifically, for some fixed $u_i\in\mathcal{U}$ and a corresponding candidate for  $\theta_0(u_i)$, say,  ${\theta}^{}={\theta}^{}(u_i)\in\Theta$, L-II then simulates data $\{\tilde{y}_{u_{i},t}^{}\}_{t\le T}$ from \eqref{struct2} using simulated errors $\{\tilde{\nu}_t\}_{t\le T}$. Given $\{\tilde{y}_{u_i,t}^{}\}_{t\le T}$, we estimate the auxiliary parameters using 
	\begin{flalign}
	\hat{\rho}(u_i;{\theta}^{})&:=\argmin_{\rho\in\Gamma}\frac{1}{T}\sum_{t=1}^{T}g[\tilde{y}_{u_i,t}^{};f(\tilde{z}_{u_i,t},{\rho})]\label{eq: rho_sim},
	\end{flalign}
	which corresponds to a simulated version of the local criterion function $M_T[\rho;u] $ in the vicinity of time point $u_i$. {Note that, similar to the notation we employ for $\hat{\rho}(u_i)$, the notation $\hat{\rho}(u_i;\theta^{})$ is an abbreviation for $\hat{\rho}(u_i;\theta^{}(u_i))$.} 
	
	Using $\hat{\rho}(u_i)$ and $\hat{\rho}(u_i;{\theta}^{})$, the L-II estimator of $\theta_{0}(u_i)$ can then be calculated, for positive-definite weighting matrix $\Omega$, as 
	\begin{flalign*}
	\hat{\theta}(u_i):=\arg\max_{{\theta}\in\Theta}{-}\|\hat{\rho}(u_i)-\hat{\rho}(u_i;{\theta})\|^2_{\Omega}.
	\end{flalign*}
	Using the same simulated errors $\{\tilde{\nu}_{t}\}_{t=1}^{T}$, we may repeat the above procedure for $\{u_{i}\}_{i=1}^{m}$, with $0<u_{1}<u_{2}<\cdots<u_{m}<1$, {and $\max_{i}\Delta u_i=O(T^{-1})$}, to obtain an estimator of $\theta_{0}(\cdot)$.

	The key feature of the above L-II procedure is that, due to the locally-stationary nature of \eqref{struct2}, the simulated series $\{\tilde{y}_{u_{i},t}\}_{t\le T}$ is stationary for each $u_{i}$, $i=1,...,m$. In this way, at each time point $u_{i}$, L-II matches a nonparametric estimator against a parametric estimator. As the following section illustrates, a consequence of this estimation approach is that the estimator $\hat{\theta}(\cdot)$ will inherit the asymptotic properties of the nonparametric estimator $\hat{\rho}(\cdot)$.

	\section{Asymptotic behavior of L-II}\label{BBBasymptotics}

	This section establishes the asymptotic properties of the L-II estimator. We establish the convergence (in probability) of $\hat{\theta}(\cdot)$ to $\theta_0(\cdot)$ and provide the asymptotic distribution of $\hat{\theta}(\cdot)$ under a fairly general setup.

	Before presenting the details, we introduce the limit quantities that will be needed for our results. 
	Consider the limit objective function and its minimizer corresponding to sample quantities, i.e. \eqref{eq:aux_obj} and \eqref{eq: rho_hat}, such that, for $u\in\mathcal{U} =[\delta,1-\delta]$ and a small, positive $\delta=o(1)$,
	\[
	\rho_0(u;\theta_0(u)):=\argmin_{\rho\in \Gamma}\mathbb{M}_0[\rho;u],\text{ where } \mathbb{M}_0[\rho; u]:=\lim_{T\rightarrow \infty}E M_T[\rho;u].
	\]	
	When no confusion will result, we denote $\rho_0(u;\theta_0(u))$ by $\rho_0(u)$. The value $\rho_{0}(u)$ is the minimizer of the limit map $\rho \mapsto\mathbb{M}_0[\rho;u]$ and depends on the features of the true distribution and the true value of the unknown function, $\theta_0(\cdot)$, in the structural model.
	
Likewise, we require that the simulated auxiliary estimator has a well-defined probability limit. Recalling the stationary nature of the simulated data, $\tilde{y}_{u,t}$, such a requirement boils down to standard results for the consistency of quasi-maximum likelihood estimators for the pseudo-true value; see, e.g., \citet{white1982maximum} and \citet{white1996estimation}. The simulated counterpart to the pseudo-true parameter $\rho_0(u)$ is the map $\theta\mapsto\rho_0(u;\theta)$, which we define as
	\begin{equation*}
	{\rho}^{}_0(u;\theta):=\argmin_{\rho\in \Gamma}\tilde{\mathbb{M}}_{0}[\rho;u],\text{ where }\tilde{\mathbb{M}}_{0}[\rho;u]:= \lim_{T\rightarrow\infty}\frac{1}{T}\sum_{t=1}^{T}E g_{}[\tilde{y}_{u,t};f(\tilde{z}_{u,t};\rho)], 
	\end{equation*}and where we remind the reader that we have suppressed the dependence of the simulated series $\tilde{y}_{u,t}$ on $\theta$ for notational simplicity.

\subsection{Consistency}

To demonstrate the asymptotic properties of our proposed L-II approach, we employ the following regularity conditions. 
\begin{Assumption}
	\label{ass:uniform} (i) $\left\{ (Y_{t,T},Z_{t,T});t=1,...,T;T=1,2,...\right\}$ are triangular arrays of locally stationary processes satisfying Definition \ref{def: ls} and are $\phi$-mixing with its mixing coefficients $\phi (k)$ such that for all integers $%
	0<t<\infty$ and $k> 0$,%
	\begin{equation*}
	\phi (k):=\sup_{-T\leq t\leq T}\sup_{A\in \mathcal{F}_{-\infty }^{T,t},B\in
		\mathcal{F}_{T,t+k}^{\infty },P(A)>0}\left\vert P\left(B|A\right)
	-P\left( B\right) \right\vert,
	\end{equation*}%
	where $1\ge\phi(0)\ge \phi(1) \ge ...$ and $\mathcal{F}_{-\infty }^{T,t}$ and $	\mathcal{F}_{T,t+k}^{\infty }$ are $\sigma$-fields generated by $\{(Y_{i,T},Z_{i,T});i\le t\}$ and $\{(Y_{i,T},Z_{i,T});i\ge t+k\}$ respectively. The mixing coefficients $\phi (k)$ converge to zero as $k\rightarrow \infty$, and are such that, for some sequence $m_T$, with $1\le m_T\le T$,
	$$\exists C<\infty: T\phi(m_T)/m_T\le C,\;\; \forall T\in \mathbb{N}.$$
	{(ii) For all $u\in \mathcal{U}$ and $\theta\in\Theta$, the approximate structural process, $\{y_{u,t}\}$, defined by \eqref{struct2}, satisfies (a) $E|y_{u,t}(\theta)|^2 < \infty$, (b) $Ey_{u,t}(\theta) = C$, and 
	(c) $Cov(y_{u,t}(\theta),y_{u,s}(\theta)) =  Cov(y_{u,t+m}(\theta),y_{u,s+m}(\theta))$ for all integers $t,s,m$.}
\end{Assumption}

For an arbitrary point $u\in\mathcal{U}$, define a local neighborhood of $\rho_{0}(u)$ as $\mathcal{E}:=\{\rho\in\Gamma:\|\rho-\rho_{0}(u)\|\le\varepsilon\}$
and $\mathcal{E}^{c}:=\{\rho\in\Gamma:\|\rho-\rho_{0}(u)\|>\varepsilon\}$. 

\begin{Assumption}  
	\label{ass:auxiliary}
	(i) For $f\in\mathbf{F}$, $g[Y_{t,T};f(Z_{t,T},\rho)]$ is twice continuously and boundedly differentiable in all arguments. For all $u\in\mathcal{U}$, $\sup_{\rho\in\mathcal{E}}E|g(Y_{t,T};f(Z_{t,T},\rho))|<\infty$.\newline 
	(ii)  For all $u\in\mathcal{U}$ and any $\rho\in \mathcal{E}$, $\rho\mapsto f(Z_{t,T};\rho)$ is measurable and twice differentiable at $\rho$ and satisfies 
	$\sup_{\rho\in\mathcal{E}} E|f(Z_{t,T};\rho)|<\infty$. In addition, there exists a function $\bar{f}(\cdot)$ such that
	$\sup_{\rho\in\mathcal{E}}f(z;\rho)\le \bar{f}(z)$ with $E|\bar{f}(Z_{t,T})|_{\infty} <\infty$.\newline
	(iii) Let $q[Y_{t,T};f(Z_{t,T},\rho)]=(\partial/\partial\rho)g[Y_{t,T};f(Z_{t,T},\rho)]$. The function $q(\cdot)$ is differentiable in $\rho$, for all $\rho \in \mathcal{E}$, and is strict monotonic, in $\rho$, in a neighborhood of $\rho_0$. Moreover, there exists a constant $c_q$ such that, up to an $O_p(T^{-1})$ term,
	$$\sup_{u\in\mathcal{U}}\sup_{\rho\in\mathcal{E}}\|q[Y_{t,T};f(Z_{t,T},\rho)]\|\le c_{q}.$$ \newline
    (iv) For any given $u\in\mathcal{U}$, the map 
	\begin{equation}
	\rho \mapsto \Psi_0(\rho;u):= (\partial/\partial\rho)\mathbb{M}_0[\rho;u]\label{eq:m_lim_obj},
	\end{equation}exists and $\rho_0(u)$ is the unique zero of 	$\Psi_0(\rho;u)$; i.e. for an arbitrarily small  $\varepsilon>0$, there exists $\eta >0$ such that $%
	\inf_{\rho\in\mathcal{E}^c}\mathbb{M}_0%
	\left[ \rho;u \right] -\mathbb{M}_0\left[ \rho_{0};u\right] \geq \eta .$	\newline
	(v)  $\forall\varepsilon>0,\exists a_1,a_2>0$ 
	\[
	\sup_{\substack{u_{1}\in\mathcal{U}\\ \rho_1\in\mathcal{E}}}\sup_{\substack{u_{2}:|u_{2}-u_{1}|\le a_1\\\rho_2:\|\rho_2-\rho_1\|\le a_2}}|Eg(y_{u_{1},t};f(z_{u_{1},t},\rho_1))-Eg(y_{u_{2},t};f(z_{u_{2},t},\rho_2))|\le\varepsilon
	\]
	is satisfied. \newline	
	(vi) For any  given $u\in\mathcal{U}$, $\theta\in\Theta$, $\theta\mapsto \rho_0(u;\theta)$ is continuous and injective for all $\theta\in\Theta$. \newline 
	(vii) The parameter spaces $\Gamma\subset\mathbb{R}^{d_{\rho}}$ and $\Theta\subset\mathbb{R}^{d_{\theta}}$, with $d_{\rho}\geq d_{\theta}$, are compact. 
\end{Assumption}
\begin{Assumption} \label{ass:AAker} 

	\medskip
	\noindent (i)The kernel function $x\mapsto K(\cdot )$ is positive, symmetric around zero, and bounded. In addition: 
	(i.a) $K(\cdot )$ is $r$-times continuously differentiable for
	$x\in\mathbb{R}$, with $r\geq2$;
	(i.b) $K(\cdot )$ satisfies $\displaystyle\int K(x)dx=1$, $\kappa_2=\int K^2(x)dx<\infty$, $\int |K(x)|dx<\infty$ and either $\sup_{x}K(x)<\infty$, $K(x)=0$ for $|x|>L$ with $L<\infty$ or $|\partial K(x)/\partial x|\le C$ and for some $v>1$, $|\partial K(x)/\partial x|\le C|x|^{-v}$ for $|x|>L$;
	(i.c) $\displaystyle\mu _{i}(K)=\int x^{i}K(x)dx=0$, $
	i=1,\ldots ,r-1$, and: $\displaystyle\int x^{r}K(x)dx\neq 0$, $\displaystyle
	\int |x|^{r}|K(x)|dx<\infty $, $\displaystyle\lim_{|x|\rightarrow \infty
	}|x|K\left( x\right) =0$;
	(i.d) $K(\cdot )$ is Lipschitz continuous, i.e. $%
	|K(x)-K(x^{\prime })|\leq C|x-x^{\prime }|$ for all $x,x^{\prime }\in
	\mathbb{R}$.
	
	\medskip	
	\noindent (ii) The bandwidth $h$ is such that, as $T\rightarrow
	\infty ,$ $h\rightarrow 0$, $Th\rightarrow \infty $ and $Th\big/(m_T \log T) \rightarrow \infty$.
	
\end{Assumption}
\begin{remark}\normalfont
Assumption \ref{ass:uniform}.(i) states that we restrict our attention to locally stationary processes and allows us to utilize the asymptotic independence property for heterogeneous data. The decay rate of the $\phi$-mixing coefficient is quite weak. For instance, any exponential decay rate satisfies the condition. In addition, the $\phi$-mixing can be relaxed to strong-mixing if we restrict the form of $g(\cdot)$. For instance, for the regression objective function - whether it is linear or nonlinear - strong-mixing assumption suffices. Assumption \ref{ass:uniform}.(ii) ensures that for all $\theta\in\Theta$, there exists an approximating stationary process. It is worth noting that this is a condition for the parameter space $\Theta$, and implicitly confines the size of $\Theta$, so that we exclude the possibility of generating non-stationary simulated series. Assumption \ref{ass:auxiliary} is concerned with the auxiliary model and its objective function. Assumptions \ref{ass:auxiliary}.(i) and \ref{ass:auxiliary}.(ii) ensure uniform continuity of the objective function in a neighborhood of the pseudo-true value, $\rho_0(u)$. They also ensure the existence of a well-behaved limit of the objective function due to the dominated convergence theorem. Assumption \ref{ass:auxiliary}.(iii) is concerned with the behavior of the first-order condition and the monotonicity warranted by the minimizer of the criterion is needed to ensure that the optimizer of the auxiliary criterion is unique. In general, one can replace this condition with the high-level condition that the auxiliary estimator ``nearly minimizes'' the criterion function, however, we believe this primitive condition is more informative than invoking this alternative high-level condition. Assumption \ref{ass:auxiliary}.(iv) is an asymptotic identification condition such that the unique minimizer of $\mathbb{M}_0$ is well separated and therefore unique. Assumption \ref{ass:auxiliary}.(v) states uniform equicontinuity for the uniform LLN. Assumption \ref{ass:auxiliary}.(vi) is an identification condition and is akin to a local version of the standard II identification condition.  Assumption \ref{ass:auxiliary}.(vii) requires that the parameter spaces for $\rho(u)$ and $\theta(u)$ are compact. Finally, 
Assumption \ref{ass:AAker} describes features of the kernel function and the bandwidth, which is standard in nonparametric kernel estimation.\hfill\(\Box\)

\end{remark}

Uniform (in $u$) consistency of the L-II estimator $\hat\theta(u)$ requires the uniform convergence of the auxiliary estimators $\hat{\rho}(u)$ and $\hat{\rho}(u;\theta)$ to their limit counterparts.
	
\color{black}
\begin{theorem}\label{thm0} 
	Under Assumptions \ref{ass:1}-\ref{ass:AAker}, $\hat{\rho}(u)$, and $\hat\rho(u;\theta)$ exist and are unique w.p.1. In addition, the following are satisfied.
	\begin{enumerate}
		\item The auxiliary estimator $\hat\rho(u)$, calculated using the observed sample $\{Y_{t,T}\}_{t\leq T}$, satisfies
		\begin{equation}
		\sup_{u\in\mathcal{U}}\|\hat{\rho}(u)-\rho_0(u)\|=o_{p}(1);\label{eq:uniformity_m_est}
		\end{equation}
		\item The auxiliary estimator $\hat\rho(u;\theta)$, calculated using the simulated sample $\{\tilde{y}_{u,t}\}_{t\leq T}$, satisfies
		\begin{equation}
		\sup_{u\in\mathcal{U}}\sup_{\theta\in\Theta}\|\hat{\rho}(u;\theta)-\rho_0(u;\theta)\|=o_{p}(1).\label{eq:uniformity_m_est2}
		\end{equation}
	\end{enumerate}
\end{theorem}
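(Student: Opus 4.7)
The overall strategy is the standard M-estimator recipe: first establish uniform convergence of the localized criterion to its population limit, then invoke the well-separated-minimum condition of Assumption~\ref{ass:auxiliary}(iv) together with the argmin continuous-mapping argument (as in, e.g., Theorem~5.7 of van der Vaart) to transfer this convergence to the minimizers. Existence of $\hat\rho(u)$ and $\hat\rho(u;\theta)$ follows from compactness of $\Gamma$ (Assumption~\ref{ass:auxiliary}(vii)) together with continuity of the criterion in $\rho$ given by Assumption~\ref{ass:auxiliary}(i)--(ii). Uniqueness w.p.1 follows from the strict monotonicity in $\rho$, near $\rho_0(u)$, of the gradient $q=\partial g/\partial \rho$ (Assumption~\ref{ass:auxiliary}(iii)), which rules out two interior zeros of the first-order condition.

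For part 1, I will prove
\[
\sup_{u\in\mathcal{U}}\sup_{\rho\in\Gamma}\bigl|M_T[\rho;u]-\mathbb{M}_0[\rho;u]\bigr|=o_p(1),
\]
and then read off \eqref{eq:uniformity_m_est} from the well-separated minimum. The standard bias--variance split is $M_T-\mathbb{M}_0 = (M_T-EM_T) + (EM_T-\mathbb{M}_0)$. For the bias, I first replace $(Y_{t,T},Z_{t,T})$ by the stationary approximations $(y_{t/T,t},z_{t/T,t})$ of Definition~\ref{def: ls} and Lemma~\ref{Lem:struct_ls}; the resulting replacement error is $O_p(T^{-1})$ uniformly inside the kernel window thanks to Assumption~\ref{ass:auxiliary}(i)--(ii), which bound $g$ and its derivatives. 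After Taylor-expanding $v\mapsto Eg(y_{v,t};f(z_{v,t},\rho))$ around $v=u$ and using the bounded-second-derivative hypothesis on $\theta_0(\cdot)$ (Assumption~\ref{ass:1}(i)) together with the order-$r$ kernel moment conditions of Assumption~\ref{ass:AAker}(i.c), the bias is of order $h^r + T^{-1}=o(1)$. For the stochastic part, I bound the variance using the $\phi$-mixing coefficients via a standard big-block/small-block argument (the sum splits into blocks of length $m_T$ whose covariances decay like $\phi(m_T)$), exploiting the rate $T\phi(m_T)/m_T\le C$ and the bandwidth condition $Th/(m_T\log T)\to\infty$ of Assumption~\ref{ass:AAker}(ii); this yields pointwise convergence in $(u,\rho)$. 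The pointwise statement is upgraded to uniformity by covering the compact set $\mathcal{U}\times\Gamma$ with $\varepsilon$-balls of a fixed grid and controlling fluctuation between grid points through the equicontinuity condition Assumption~\ref{ass:auxiliary}(v) combined with the uniform Lipschitz behaviour of $\rho\mapsto g\circ f$ implied by Assumption~\ref{ass:auxiliary}(i)--(ii).

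For part 2, the simulated process $\{\tilde y_{u,t}(\theta)\}_{t}$ is stationary in $t$ for each fixed $(u,\theta)$ by Assumption~\ref{ass:uniform}(ii), and the simulated criterion \eqref{eq: rho_sim} carries no kernel weights, so there is no nonstationary bias to manage. A standard ULLN for stationary mixing processes gives pointwise convergence of the simulated criterion to $\tilde{\mathbb{M}}_0[\rho;u]$, and the upgrade to uniformity over the compact set $\mathcal{U}\times\Theta\times\Gamma$ again proceeds through equicontinuity: the differentiability of $r,\varphi$ in $\theta$ (Assumption~\ref{ass:1}(ii)) and of $f,g$ in $\rho$ (Assumption~\ref{ass:auxiliary}(i)--(ii)) together with the envelope in Assumption~\ref{ass:auxiliary}(ii) provides a uniform Lipschitz bound on $(\theta,\rho)\mapsto g[\tilde y_{u,t}(\theta);f(\tilde z_{u,t}(\theta),\rho)]$, and continuity in $u$ is delivered by Assumption~\ref{ass:auxiliary}(v). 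The injectivity of $\theta\mapsto\rho_0(u;\theta)$ from Assumption~\ref{ass:auxiliary}(vi) is not needed here (it will be used only in the subsequent consistency of $\hat\theta(u)$), but well-separatedness of $\tilde{\mathbb{M}}_0[\cdot;u]$ at $\rho_0(u;\theta)$, inherited from Assumption~\ref{ass:auxiliary}(iv) applied to the simulated DGP, closes the argmin step in the same way as in part 1.

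The main technical obstacle is the joint $(u,\rho)$ uniform control of the kernel-weighted sum under mere $\phi$-mixing and triangular-array local stationarity in part 1: the bias step requires the Lemma~\ref{Lem:struct_ls} approximation error to be controlled uniformly inside the shrinking kernel window of width $h$, and the stochastic step requires a uniform LLN for heterogeneous mixing arrays with kernel weights where the effective sample size per local window is only $Th$. Balancing these against the mixing rate $T\phi(m_T)/m_T\le C$ is where Assumption~\ref{ass:AAker}(ii) is sharp; once those uniform rates are in hand, the argmin step, and hence part 2 as well, is routine M-estimator calculus.
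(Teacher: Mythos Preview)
Your proposal is correct, but the route differs from the paper's in Part~1. You work directly with the criterion $M_T[\rho;u]$, establish $\sup_{u,\rho}|M_T-\mathbb{M}_0|=o_p(1)$ via a bias--variance split, and then invoke the argmin continuous-mapping theorem. The paper instead works with the \emph{score} $\Psi_T(\rho;u)=\partial M_T/\partial\rho$: it proves $\sup_{u}\sup_{\delta}|\Psi_T(u,\rho_0(u)+\delta)-E\Psi_T(u,\rho_0(u)+\delta)|\to 0$ by a covering argument combined with an explicit Bernstein-type exponential inequality for $\phi$-mixing arrays (developed as two auxiliary lemmas), and then uses the strict monotonicity of $q$ in Assumption~\ref{ass:auxiliary}(iii) to sandwich $\hat\rho(u)$ between $\rho_0(u)\pm\varepsilon\iota$, rather than appealing to the well-separated-minimum condition directly. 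In Part~2 the paper does revert to a criterion-level decomposition $M_T(\rho)-M_T(\rho_0)=\mathcal{M}_1+\mathcal{M}_2$ very close to what you outline, so your Part~2 essentially matches.

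What each approach buys: the paper's score-plus-monotonicity route delivers almost-sure (not merely in-probability) uniform convergence via Borel--Cantelli and the exponential tail bound, and it makes essential use of Assumption~\ref{ass:auxiliary}(iii), which in your treatment serves only to guarantee uniqueness. Your criterion-level route is the more standard M-estimator argument and is conceptually cleaner, leaning on Assumption~\ref{ass:auxiliary}(iv) for the argmin step; it would yield $o_p(1)$ directly without the exponential-inequality machinery. One minor correction: after replacing $(Y_{t,T},Z_{t,T})$ by the Definition~\ref{def: ls} approximants inside the kernel window, the replacement error is $O_p(h+T^{-1})$ rather than $O_p(T^{-1})$ (since $|t/T-u|\le Lh$ there), but this is still $o(1)$ and does not affect the conclusion.
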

\begin{remark}\normalfont
We note here that Theorem \ref{thm0} is of independent interest. The result in equation \eqref{eq:uniformity_m_est} is one of the first results, to our knowledge, on the uniform consistency of estimators in general locally stationary models. The only other results in this direction that the authors are aware of are those for (quasi) maximum likelihood estimators in \cite{kristensen2019local} and \cite{dahlhaus2019towards}.\hfill\(\Box\)
\end{remark}
\begin{remark}\normalfont
As stated earlier, a useful class of auxiliary models for  L-II is the class of nonlinear regression models. 
Suppose that the auxiliary model is given by
\[
Y_{t,T}=f(Z_{t,T};\rho  (t/T))+\eta_{t},
\]	where $\eta_t$ is strictly stationary and $\phi$-mixing with $E|\eta_t|<\infty$ and independent of the explanatory variables $Z_{t,T}$. The estimator of the auxiliary parameter is given as
\begin{flalign}
\hat{\rho}(u)&=\argmin_{\rho\in\Gamma}M_T[\rho;u]\label{eq: rho_nonlinear},\text{ where }M_T[\rho;u]=\frac{1}{Th}\sum_{t=1}^{T}\left(Y_{t,T}-f(Z_{t,T};\rho)\right)^2 K\left(\frac{u-t/T}{h}\right).
\end{flalign} 
For this specific choice of auxiliary model and criterion function, we have the following immediate corollary to Theorem \ref{thm0}.  
\begin{corollary}
	\label{Lem:uniform_theta}
	Under Assumptions \ref{ass:1}-\ref{ass:AAker}, for $\hat{\rho}(u)$ defined as in \eqref{eq: rho_nonlinear} and $\hat{\rho}(u;\theta)$ its simulated counterpart, we have
	\begin{equation*}
	\sup_{u\in\mathcal{U}}\|\hat{\rho}(u)-\rho_0(u)\|=o_{p}(1),\text{ and }\sup_{u\in\mathcal{U}}\sup_{\theta\in\Theta}\|\hat{\rho}(u;{\theta})-\rho_0(u;{\theta})\|=o_{p}(1).
	\end{equation*}	
\end{corollary}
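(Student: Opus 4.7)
The plan is to recognize the criterion \eqref{eq: rho_nonlinear} as the specialization of the general local M-estimation setup of Theorem \ref{thm0} to the choice $g[y;f(z,\rho)] = (y-f(z,\rho))^2$, and then to verify that the pieces of Assumption \ref{ass:auxiliary} that mention $g$ do hold for this specific $g$, given the extra regression structure (stationary $\phi$-mixing $\eta_t$ with $E|\eta_t|<\infty$, independent of $Z_{t,T}$, and $f\in\mathbf{F}$). Once this is done, parts 1 and 2 of Theorem \ref{thm0} deliver the two displayed conclusions immediately.

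First I would dispatch the routine smoothness and moment pieces. Differentiability of $g$ in $(y,f)$ is automatic because $g$ is a quadratic polynomial, so the relevant part of Assumption \ref{ass:auxiliary}(i) reduces to the smoothness of $f$ already imposed in Assumption \ref{ass:auxiliary}(ii). The moment bound $\sup_{\rho\in\mathcal{E}}E[(Y_{t,T}-f(Z_{t,T},\rho))^2]<\infty$ follows from the second-moment condition on $y_{u,t}$ in Assumption \ref{ass:uniform}(ii)(a), together with the dominating function $\bar{f}$ provided in \ref{ass:auxiliary}(ii) and $E|\eta_t|<\infty$. The equicontinuity in Assumption \ref{ass:auxiliary}(v) is a short Lipschitz argument using $f\in\mathbf{F}$ and the smoothness of $u\mapsto(y_{u,t},z_{u,t})$ implicit in the local-stationarity approximation supplied by Lemma \ref{Lem:struct_ls}.

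The substantive step, and where I expect the bulk of the work, is verifying parts (iii) and (iv). Writing $q = \partial g/\partial\rho = -2(Y_{t,T}-f(Z_{t,T},\rho))(\partial f(Z_{t,T},\rho)/\partial\rho)$ and differentiating once more, the Hessian term containing the residual $(Y_{t,T}-f)$ has mean zero at $\rho=\rho_0(u)$ by orthogonality of the regression projection, so the limiting Hessian reduces to $2E[(\partial f/\partial\rho)(\partial f/\partial\rho)^{\prime}]$, which is positive definite under a local identification condition on the gradient of $f$ (the one place where a primitive rank condition on $f$ is implicitly used). Positive definiteness in a neighborhood of $\rho_0(u)$ gives local strict convexity of $\mathbb{M}_0[\rho;u]$, yielding both the well-separated unique minimizer required in \ref{ass:auxiliary}(iv) and the strict (Jacobian-sense) monotonicity of $q$ required in \ref{ass:auxiliary}(iii). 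Parts (vi) and (vii) are maintained exogenously, so no further work is needed there.

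Plugging $g(y;f) = (y-f)^2$ into Theorem \ref{thm0} then delivers both $\sup_{u\in\mathcal{U}}\|\hat\rho(u)-\rho_0(u)\|=o_p(1)$ and $\sup_{u\in\mathcal{U}}\sup_{\theta\in\Theta}\|\hat\rho(u;\theta)-\rho_0(u;\theta)\|=o_p(1)$, which are exactly the two claims. The one technical point I would want to handle carefully is transferring the $L^2$ control from the approximating stationary process $y_{u,t}$ back to the triangular array $Y_{t,T}$ uniformly in $u\in\mathcal{U}$; this I would handle via Lemma \ref{Lem:struct_ls} combined with Assumption \ref{ass:uniform}(ii)(a), exploiting the $O_p(|t/T-u|+T^{-1})$ bound to absorb the array-versus-stationary discrepancy into a uniform $o_p(1)$ error.
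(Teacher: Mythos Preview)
Your route---verify Assumption~\ref{ass:auxiliary} for the quadratic $g(y,f)=(y-f)^2$ and then invoke Theorem~\ref{thm0} as a black box---is not quite what the paper does, and it has a genuine gap. The boundedness pieces of Assumption~\ref{ass:auxiliary} do not hold for the squared loss without extra restrictions: part~(i) asks that $g$ be \emph{boundedly} differentiable in all arguments, and part~(iii) asks that the score $q=-2(Y_{t,T}-f(Z_{t,T},\rho))\,\partial f/\partial\rho$ satisfy a uniform bound $\|q\|\le c_q$. Neither is available when $Y_{t,T}$ (equivalently $\eta_t$) is unbounded; only $E|\eta_t|<\infty$ is assumed. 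That uniform bound on $q$ is precisely what the proof of Theorem~\ref{thm0} uses (via Lemma~\ref{Lem:uniform_u} and Lemma~\ref{lem:uniformity}, where $|W_{t,T}|\le d$ is required for the exponential inequality), so you cannot invoke Theorem~\ref{thm0} off the shelf here. Your paragraph on (iii) checks only the monotonicity/rank condition and silently skips the bound.

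The paper's proof takes a different, more direct route that exploits the additive regression structure $Y_{t,T}=f(Z_{t,T};\rho_0(u))+\eta_t+O(T^{-1})$. Writing $p_t(\rho)=f(Z_{t,T};\rho)-f(Z_{t,T};\rho_0)$, one expands
\[
M_T(\rho)-M_T(\rho_0)=\underbrace{E[p_t(\rho)^2]}_{\mathcal{M}_1(\rho)}+\underbrace{\Big\{E[p_t^2]\Big(\tfrac{1}{Th}\sum_t K_{ut}-1\Big)+\tfrac{1}{Th}\sum_t K_{ut}\big(p_t^2-E p_t^2\big)-\tfrac{2}{Th}\sum_t K_{ut}\eta_t p_t(\rho)\Big\}}_{\mathcal{M}_2(\rho)}.
\]
The identification piece $\mathcal{M}_1$ is now $E[p_t(\rho)^2]$, and the stochastic piece $\mathcal{M}_2$ involves $p_t$ and the cross term $\eta_t p_t$ rather than the unbounded score $q$. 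Independence of $\eta_t$ from $Z_{t,T}$ centers the cross term, and control of $p_t$ comes from $f\in\mathbf{F}$ and compactness of $\Gamma$. From here the paper says the uniform-in-$(u,\rho)$ negligibility of $\mathcal{M}_2$ is ``analogous to the proof of Theorem~\ref{thm0}.'' So the two arguments share the same endgame, but the key step---replacing the score-based decomposition by the $p_t$/$\eta_t p_t$ decomposition---is exactly what circumvents the boundedness requirement your plan cannot meet.
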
	\hfill\(\Box\)
\end{remark}

The (uniform) consistency of $\hat{\rho}(u)$ and $\hat{\rho}(u;\theta)$ allows us to deduce the uniform consistency of the L-II estimator. 
	
	\begin{theorem}\label{thm1}
		Let Assumptions \ref{ass:1}-\ref{ass:AAker} be satisfied. For $\Omega$ a symmetric, positive-definite weighting matrix, the estimator $$\hat{\theta}(u):=\arg\max_{\theta\in\Theta}-\|\hat{\rho}(u)-\hat{\rho}(u;\theta)\|^2_{\Omega},$$ satisfies $$\sup_{u\in\mathcal{U}}\|\hat{\theta}(u)-\theta_{0}(u)\|_{}=o_{p}(1).$$
	\end{theorem}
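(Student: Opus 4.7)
The plan is to deduce the uniform consistency of $\hat\theta(u)$ from Theorem \ref{thm0} through a uniform-in-$u$ argmax argument. Define the sample and limit criteria
\[
Q_T(\theta;u):=-\|\hat\rho(u)-\hat\rho(u;\theta)\|_\Omega^2,\qquad Q_0(\theta;u):=-\|\rho_0(u)-\rho_0(u;\theta)\|_\Omega^2.
\]
The first observation is that $\rho_0(u)=\rho_0(u;\theta_0(u))$: both are the unique minimizers of the same limiting criterion evaluated under the true data-generating value $\theta_0(u)$, since by Lemma \ref{Lem:struct_ls} the observed process $Y_{t,T}$ and the approximating stationary process $y_{u,t}$ obtained by fixing $\theta=\theta_0(u)$ share the same local distribution. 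Combined with the injectivity of $\theta\mapsto\rho_0(u;\theta)$ in Assumption \ref{ass:auxiliary}(vi), this identifies $\theta_0(u)$ as the unique maximizer of $Q_0(\cdot;u)$.

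Next, I would establish uniform convergence
\[
\sup_{u\in\mathcal{U}}\sup_{\theta\in\Theta}|Q_T(\theta;u)-Q_0(\theta;u)|=o_p(1).
\]
Writing
\[
Q_T(\theta;u)-Q_0(\theta;u)=\bigl[\hat\rho(u;\theta)-\hat\rho(u)+\rho_0(u)-\rho_0(u;\theta)\bigr]'\Omega\bigl[\hat\rho(u;\theta)+\hat\rho(u)-\rho_0(u;\theta)-\rho_0(u)\bigr],
\]
the first bracket is $o_p(1)$ uniformly in $(u,\theta)$ by the two assertions of Theorem \ref{thm0}, while the second bracket is uniformly $O_p(1)$ because $\Gamma$ is compact and $\rho_0(u;\theta)$, $\rho_0(u)$ lie in $\Gamma$. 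The weighting matrix $\Omega$ is fixed and positive-definite, so the Cauchy–Schwarz step goes through, and the product is $o_p(1)$ uniformly.

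The third step is the well-separation of $\theta_0(u)$ as a maximizer of $Q_0(\cdot;u)$, uniformly in $u$. For each $u$, injectivity plus continuity of $\theta\mapsto\rho_0(u;\theta)$ on the compact $\Theta$ yields, for any $\varepsilon>0$,
\[
\inf_{\|\theta-\theta_0(u)\|\ge\varepsilon}\|\rho_0(u;\theta)-\rho_0(u)\|_\Omega^2>0.
\]
To upgrade this to a uniform statement
\[
\inf_{u\in\mathcal{U}}\inf_{\|\theta-\theta_0(u)\|\ge\varepsilon}\|\rho_0(u;\theta)-\rho_0(u)\|_\Omega^2\ge c(\varepsilon)>0,
\]
I would use the compactness of $\mathcal{U}=[\delta,1-\delta]$ together with joint continuity of $(u,\theta)\mapsto\rho_0(u;\theta)$ and continuity of $u\mapsto\theta_0(u)$ (the latter from the smoothness in Assumption \ref{ass:1}(i); the former follows by combining Assumption \ref{ass:auxiliary}(v) with Assumption \ref{ass:auxiliary}(vi) and a standard maximum-theorem argument). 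A compactness/contradiction argument — a hypothetical violating sequence $(u_n,\theta_n)$ would admit a convergent subsequence whose limit would violate pointwise identification — then delivers the uniform separation.

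With uniform convergence of $Q_T$ to $Q_0$ and uniform well-separation, the standard argmax continuity argument gives $\sup_{u\in\mathcal{U}}\|\hat\theta(u)-\theta_0(u)\|=o_p(1)$: for any $\varepsilon>0$, on the event $\{\sup_{u,\theta}|Q_T-Q_0|<c(\varepsilon)/3\}$ one has $Q_T(\hat\theta(u);u)\ge Q_T(\theta_0(u);u)\ge -c(\varepsilon)/3$, forcing $\|\rho_0(u;\hat\theta(u))-\rho_0(u)\|_\Omega^2<c(\varepsilon)$ and hence $\|\hat\theta(u)-\theta_0(u)\|<\varepsilon$ for all $u$. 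The main obstacle I anticipate is the uniform well-separation step, since pointwise identification (Assumption \ref{ass:auxiliary}(vi)) must be promoted to a uniform-in-$u$ lower bound; everything else is mechanical given Theorem \ref{thm0}.
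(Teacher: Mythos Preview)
Your proposal is correct and follows essentially the same route as the paper: uniform convergence of the criterion (yours via a bilinear factorization, the paper's via the reverse triangle inequality on the unsquared norm) combined with uniform identification, then the standard argmax inequality chain. You are actually more explicit than the paper in justifying the uniform-in-$u$ well-separation step via compactness and joint continuity, whereas the paper simply asserts it from injectivity and continuity of $\theta\mapsto\rho_0(u;\theta)$.
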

\begin{remark}\normalfont
	Theorem \ref{thm1} requires, among other things, a condition guaranteeing identification of $\theta_{0}(u)$ for any $u\in\mathcal{U}$. This requires that, for any $u\in\mathcal{U}$ and for some $\theta\in\Theta$, $\rho_{0}(u;{\theta})$ is able to match $\rho_0(u)$, and that this matching be unique. Recalling that $\rho_{0}(u_{})=\rho_{0}(u_{};\theta_0({u}))$, this identification requires that $\theta_0(u)$ be the unique solution, in $\theta$, to$$\rho_0(u;\theta_0(u))=\rho_0(u;\theta)$$ {for $u\in\mathcal{U}$.} For  $\rho_0(\cdot;\theta)$ continuous and strictly monotonic, in $\theta$, for any $u$, in the case of $d_{\rho}=d_{\theta}=1$, this defines $\theta(\cdot)$ as $$\theta(\cdot)=\rho_0^{-1}(\cdot;\rho_{0}(\cdot;\theta_{0}(\cdot)).$$ Therefore, under continuity and monotonicity of $\rho^{}(\cdot;\theta)$, in $\theta$, for any $u\in{\mathcal{U}}$, $\theta_{0}(\cdot)$ is identified. Such a condition is equivalent to the injectivity conditions required by Theorem \ref{thm1}, which is a necessary condition required of parametric II (\citealt{GMR93}). Therefore, we see that if $\theta(t/T)=\theta$ for all $t/T$, i.e., the unknown function is constant, this identification condition is equivalent to the identification condition generally employed in parametric II estimation and Theorem \ref{thm1} reduces to the standard consistency result for II estimation.\hfill\(\Box\)
\end{remark}

\subsection{Asymptotic distribution}
In what follows, let
$\Psi_{T}(\rho;u):=\sum_{t=1}^{T}q[Y_{t,T};f(Z_{t,T},\rho)]K\left(\frac{u-t/T}{h} \right)/Th$ and recall the definitions $\Psi_0(\rho;u):=(\partial/\partial\rho)\mathbb{M}_0[\rho;u]$ and $\mathcal{E}:=\{\rho\in\Gamma:\|\rho-\rho_{0}(u)\|<\varepsilon\}$. We deduce the asymptotic distribution of the L-II estimator under the following high-level regularity conditions.
	 
\begin{Assumption}\label{ass:anormal}For fixed $u\in\mathcal{U}$, the following are satisfied. 
\begin{enumerate}
	\item There exists a matrix $V(u)$, satisfying $0<\inf_{u\in\mathcal{U}}\|V(u)\|\le\sup_{u\in\mathcal{U}}\|V(u)\|<\infty$, such that
	{$$\sqrt{Th}\Psi_{T}(\rho_0(u);u)\rightarrow_d \mathcal{N}\left(0,\kappa_2 V(u)\right).$$}
\item For $V(u)$ as in the above assumption, and for $\tilde{y}^0_{u,t}=\tilde{y}_{u,t}(\theta_0(u))$ denoting a realization simulated under $\theta_0(u)$, 
$
	\frac{1}{\sqrt{T}} \sum_{t=1}^{T}\left\{q(\tilde{y}^0_{u,t},\rho_0(u))\right\}\rightarrow_d \mathcal{N}\left(0,V(u)\right).
$
	\item For some $\varepsilon>0$, $\sup_{u\in\mathcal{U}}\sup_{\rho\in\mathcal{E}}\left\|\frac{\partial \Psi_{T}(\rho;u)}{\partial\rho^{\prime}}-\frac{\partial\Psi_0(\rho;u)}{\partial\rho^{\prime}}\right\|=o_{p}(1)$.
	\item $\Psi_0(\rho;u)$ and $\partial\Psi_0(\rho;u)/\partial\rho^{\prime}$ are Lipschitz continuous in both $u$ and $\rho$.
	\item $\partial\Psi_0(\rho_0(u);u)/\partial\rho^{\prime}$ is invertible for all $u\in\mathcal{U}$. 
	\item $\sup_{u\in\mathcal{U}}\|\partial \rho_0(u;\theta_0(u))/\partial u\|<\infty$,  $\sup_{u\in\mathcal{U}}\|\partial^2 \rho_0(u;\theta_0(u))/\partial u^2\|<\infty$, and $\partial \rho_0(u;\theta_0(u))/\partial \theta'$ is full column rank for all $u\in\mathcal{U}$. 
\end{enumerate}

\end{Assumption}
\begin{remark}\normalfont
	Assumption \ref{ass:anormal} amounts to a local version of the uniform convergence and asymptotic normality conditions required to demonstrate asymptotic normality of parametric II estimators. We note that it is feasible to consider more primitive assumptions that can guarantee the conditions in Assumption \ref{ass:anormal} (see, e.g., \cite{kristensen2019local} and \cite{dahlhaus2019towards} for discussion).  However, such an approach would require considerable technical effort and is not necessarily germane to the main message of this paper. Therefore, we leave the study of more primitive approaches to obtaining the required regularity in Assumption \ref{ass:anormal} for future research. \hfill\(\Box\)
\end{remark}
\begin{theorem}\label{thm2}
If Assumptions \ref{ass:1}-\ref{ass:anormal} are satisfied, and if $Th^3=o(1)$, then as $T\rightarrow\infty$ $$\sqrt{Th}\left[\hat{\theta}(u)-\theta_{0}(u)\right]\rightarrow_{d}\mathcal{N}\left(0,\kappa_2[Q^{-1}W V W' Q^{-1}](u)\right),
$$
where 
$Q(u)=\left\{\frac{\partial\rho_{0}(u;\theta)^{\prime}}{\partial\theta}\Omega\frac{\partial\rho_{0}(u;\theta)}{\partial\theta^{\prime}}\right\}\bigg{|}_{\theta=\theta_{0}(u)}$ and $W(u)=\frac{\partial \rho_0(u;\theta)^{\prime}}{\partial\theta}\Omega \left(\frac{\partial \Psi_{0}(\rho;u)}{\partial\rho^{\prime}}\right)^{-1}\bigg{|}_{\theta=\theta_{0}(u),\;\rho=\rho_{0}(u)}$.
\end{theorem}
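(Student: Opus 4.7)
The plan is to combine first-order expansions of the L-II and auxiliary criteria, exploiting the key feature that the observed auxiliary estimator $\hat{\rho}(u)$ converges at the nonparametric rate $\sqrt{Th}$, whereas the simulated auxiliary estimator $\hat{\rho}(u;\theta_0(u))$ converges at the parametric rate $\sqrt{T}$, so only the observed component contributes to the asymptotic distribution at the $\sqrt{Th}$ scale. First, I would write the first-order condition of $\hat{\theta}(u)$: under Assumption \ref{ass:auxiliary}.(i)-(ii) and Assumption \ref{ass:anormal}.6,
\[
\frac{\partial \hat{\rho}(u;\hat{\theta}(u))^{\prime}}{\partial \theta}\,\Omega\,\bigl[\hat{\rho}(u)-\hat{\rho}(u;\hat{\theta}(u))\bigr]=0.
\]
A Taylor expansion of $\hat{\rho}(u;\hat{\theta}(u))$ about $\theta_0(u)$, combined with consistency of $\hat{\theta}(u)$ from Theorem \ref{thm1} and uniform convergence of $\partial \hat{\rho}(u;\theta)/\partial\theta^{\prime}$ to $\partial \rho_0(u;\theta)/\partial\theta^{\prime}$ (obtained from the simulated FOC via an implicit-function argument combined with Theorem \ref{thm0}), yields
\[
\sqrt{Th}\,[\hat{\theta}(u)-\theta_0(u)] = Q(u)^{-1}\frac{\partial \rho_0(u;\theta_0(u))^{\prime}}{\partial\theta}\,\Omega\,\sqrt{Th}\,[\hat{\rho}(u)-\hat{\rho}(u;\theta_0(u))]+o_p(1).
\]

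Using the identity $\rho_0(u;\theta_0(u))=\rho_0(u)$, I decompose $\hat{\rho}(u)-\hat{\rho}(u;\theta_0(u)) = [\hat{\rho}(u)-\rho_0(u)]-[\hat{\rho}(u;\theta_0(u))-\rho_0(u)]$. For the simulated term, a standard M-estimator argument based on Assumption \ref{ass:anormal}.2, \ref{ass:anormal}.3, and \ref{ass:anormal}.5 delivers $\sqrt{T}[\hat{\rho}(u;\theta_0(u))-\rho_0(u)]=O_p(1)$, so its contribution at the $\sqrt{Th}$ scale is $O_p(\sqrt{h})=o_p(1)$. For the observed term, expanding the auxiliary FOC $\Psi_T(\hat{\rho}(u);u)=0$ about $\rho_0(u)$ and invoking Theorem \ref{thm0}, Assumption \ref{ass:anormal}.3, and the invertibility in Assumption \ref{ass:anormal}.5 gives
\[
\sqrt{Th}\,[\hat{\rho}(u)-\rho_0(u)] = -\left[\frac{\partial \Psi_0(\rho_0(u);u)}{\partial \rho^{\prime}}\right]^{-1}\sqrt{Th}\,\Psi_T(\rho_0(u);u)+o_p(1).
\]
Assumption \ref{ass:anormal}.1 then yields $\sqrt{Th}\,\Psi_T(\rho_0(u);u)\rightarrow_{d}\mathcal{N}(0,\kappa_2 V(u))$. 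Chaining these expansions produces the asserted normal limit with variance $\kappa_2\,Q(u)^{-1}W(u)V(u)W(u)^{\prime}Q(u)^{-1}$, after recognizing that $W(u)=\bigl(\partial \rho_0(u;\theta_0(u))/\partial\theta\bigr)^{\prime}\,\Omega\,\bigl[\partial\Psi_0(\rho_0(u);u)/\partial\rho^{\prime}\bigr]^{-1}$.

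The main obstacle is controlling the two Taylor remainders at the sharpened $(Th)^{-1/2}$ rate. One must show that the intermediate-point Jacobians $\partial \hat{\rho}(u;\bar{\theta})/\partial\theta^{\prime}$ and $\partial\Psi_T(\bar{\rho};u)/\partial\rho^{\prime}$ converge uniformly (over shrinking neighborhoods of $\theta_0(u)$ and $\rho_0(u)$) to their population limits; this combines the uniform consistency results of Theorem \ref{thm0} with the uniform-in-$\rho$ convergence and Lipschitz regularity of Assumption \ref{ass:anormal}.3-4. A second delicate point is the role of the undersmoothing condition $Th^3=o(1)$, which, together with the order-$r\ge 2$ kernel (Assumption \ref{ass:AAker}), guarantees that the $O(h^r)$ kernel bias in $E\Psi_T(\rho_0(u);u)$ is $o((Th)^{-1/2})$: since $r\ge 2$, $\sqrt{Th}\,h^r\le\sqrt{Th^{5}}=h\sqrt{Th^3}\to 0$, which ensures the limiting distribution is correctly centered at zero.
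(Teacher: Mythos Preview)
Your proposal is correct and follows essentially the same route as the paper's proof: both start from the first-order condition of the L-II criterion, Taylor-expand $\hat\rho(u;\hat\theta(u))$ about $\theta_0(u)$, decompose $\hat\rho(u)-\hat\rho(u;\theta_0(u))$ into observed and simulated pieces (the latter being $O_p(T^{-1/2})$ and hence negligible after scaling by $\sqrt{Th}$), and then expand the observed auxiliary first-order condition $\Psi_T(\hat\rho(u);u)=0$ about $\rho_0(u)$ to invoke Assumption~\ref{ass:anormal}.1. Your discussion of the undersmoothing condition $Th^3=o(1)$ and the kernel bias is more explicit than the paper's (which absorbs this into the high-level Assumption~\ref{ass:anormal}.1), but the overall architecture is the same.
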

\begin{remark}\normalfont
	In the L-II context, the bandwidth, $h$, affects the structural estimates through the estimated auxiliary parameter $\hat{\rho}(\cdot)$. Therefore, the bandwidth must be chosen with respect to the estimated auxiliary parameters. Theorem \ref{thm2} demonstrates that so long as the bandwidth satisfies $Th^3=o(1)$, the L-II estimator $\hat{\theta}(\cdot)$ will be asymptotically normal, and will not exhibit any asymptotic bias. Indeed, \cite{dahlhaus2019towards} argue that, for many different classes of locally stationary models estimated by local maximum likelihood, such a choice of bandwidth is optimal in terms of mean squared error. However, if one considers a slower rate for $h$, the resulting L-II estimator will be contaminated by an asymptotic bias. In such cases, the results given in \cite{kristensen2019local}, in particular their Corollary 1, can be used to deduce the general form of the bias. \hfill\(\Box\)
\end{remark}
\begin{remark}\normalfont
	For $H\geq1$ denoting the number of model simulations, the reader may notice that the $(1+1/H)$ term that generally appears in the asymptotic distribution of II estimators is absent in Theorem \ref{thm2}. The absence of this term is a consequence of matching a parametric estimator, $\hat\rho(u;\theta)$, against a nonparametric estimator, $\hat\rho(u)$. Recall that the estimator $\hat\rho(u;\theta)$ is based on $H$ simulated paths of length $T$, i.e., $TH$ total observations. Consequently, under regularity conditions, $\|\hat\rho(u;\theta_0(u))-\rho_0(u;\theta_0(u))\|=O_p\{(H^2{T})^{-1/2}\}$. In contrast, the local nature of the nonparametric estimator ensures that $\hat\rho(u)$ is based on an effective sample of $Th$ observations, which ensures that $\|\hat\rho(u)-\rho_0(u)\|=O_p(1/\sqrt{Th})$. Therefore, since $\rho_0(u)=\rho_0(u;\theta_0(u))$,  
	\begin{flalign*}
	\left\{\hat\rho(u;\theta_0(u))-\rho_0(u;\theta_0(u))\right\}+\left\{\rho_0(u;\theta_0(u))-\hat\rho(u)\right\}&=\left\{\hat\rho(u;\theta_0(u))-\rho_0(u)\right\}+\left\{\rho_0(u)-\hat\rho(u)\right\}\\&=O_p\left\{(H^2{T})^{-1/2}+(Th)^{-1/2}\right\},
	\end{flalign*}
	and the dominant order is $O_p(1/\sqrt{Th})$. Indeed, scaling the above by $\sqrt{Th}$ the first term is $O_p(\sqrt{h}/H)=o_p(1)$ for any $H\geq1$, since $h\rightarrow0$\bk.{ Therefore, the term $(1+1/H)$ will} not appear in the asymptotic distribution of the L-II estimator. 
	
	Intuitively, since L-II uses $TH$ simulated data points and $Th$ ``observed data points'', we are in a regime where we have more simulated data than observed data. In particular, since $h\rightarrow0$ as $T\rightarrow\infty$, the number of simulated observations, $TH$, diverges faster than the number of ``observed data points'', $Th$. In parametric II estimation it is well-known that if the number of simulated observations diverges faster than the number of observed data points, the $(1+1/H)$ factor does not appear in the asymptotic variance. \hfill\(\Box\)

\end{remark}
\begin{remark}\normalfont
We note that our L-II approach and its asymptotic behavior differ from the ``kernel-based'' II approach of \cite{billio2003kernel}. In the confines of a \textit{fully parametric} structural model, \cite{billio2003kernel} generate a simulated conditional auxiliary criterion function, via kernel smoothing, which is used to construct estimators of the auxiliary parameters. Matching auxiliary estimators based on the observed and simulated data, then ``knocks out'' the bias of the nonparametric estimator in the asymptotic distribution of the kernel II estimator. As such, in the authors \textit{parametric} context, the bandwidth used in estimation will have little impact on the behavior of the structural parameter estimates, and, hence, the researcher has liberty to choose this tuning parameter as they see fit. However, unlike \cite{billio2003kernel}, our structural model is \textit{nonparametric} and we must rely on local simulation (and estimation) of the structural model, in the neighborhood of the time point $u$. This local approach is required since in our context there is no reason to believe that a global approach will guarantee identification. As a result, we must pay the price for nonparametric estimation, which results in a slower rate of convergence.\hfill\(\Box\)
\end{remark}

\begin{remark}\normalfont
As is generally true of nonparametric estimators, the asymptotic distributions given in Theorem \ref{thm2} will only accurately reflect the sampling properties of the estimator in relatively large samples. As such, in cases with moderate sample sizes, we suggest the use of bootstrap techniques to conduct inference. While the bootstrap theory for locally stationary processes is still evolving, the bootstrap procedures of \cite{paparoditis2002local}, \cite{dowla2013local}, and \cite{kreiss2015bootstrapping} have been shown to be consistent in a wide variety of LS models.\hfill\(\Box\)
\end{remark}

\section{Simple example} 
\label{BBBsim}
In this section, we consider a simple generalization of the time-varying moving average model that allows the roots of the moving average lag polynomial to be time-varying. After presenting the model, we demonstrate how our L-II approach can be applied to estimate the model and present simulation results on the effectiveness of this strategy. 
\subsection{MA(1) time-varying parameters}
We consider the semiparametric locally stationary MA(1)-process
\begin{equation}
Y_{t,T}=\epsilon_{t}+\epsilon_{t-1}\theta_{0}(t/T).
\label{eq:fma1}
\end{equation}We further assume that $\epsilon_t$ is a white noise process with mean zero and unit variance, and $E\vert \epsilon_t\vert ^{4+\eta}<\infty$ for any arbitrarily small positive number $\eta$, and {we have that $\sup_{u\in \mathcal{U}}\vert\theta_0(u)\vert < 1$.} 

Our goal is to estimate the unknown function $\theta_0(\cdot)$ {via our L-II approach.} 
In doing so, we approximate (\ref{eq:fma1}) by a family of stationary MA(1) processes indexed by $u\in \mathcal{U}$  with some small trimming positive $\delta=o(1)$,
\begin{equation}
y_{u,t}=\epsilon_{t}+\epsilon_{t-1}\theta_{0}(u),\;\;\theta_0(u)\in[-1+\delta,1-\delta] \text{ } \forall u\in \mathcal{U}
\label{eq:fma1_ls}
\end{equation}
We consider an auxiliary model with the a locally stationary AR(1) structure:
\begin{equation}
y_{u,t}=\rho(u)y_{u,t-1}+\nu_{t},\text{ where }\rho(u)\in [-1+\delta,1-\delta].\label{eq: tv_reg}
\end{equation}For fixed $u$, the auxiliary model is a simple AR(1) model.

Recall that, in parametric MA models, when the roots lie near the region of non-inveribility, the resulting estimators can display a loss in accuracy. Therefore, since for any fixed $u$, the structural model is well-approximated by a parametric MA(1) model, it is likely that the same issue will be present if $\sup_u|\theta_0(u)|$ is close to unity.

We use the above auxiliary model to present a L-II estimator of $\theta_{0}(u)$. Algorithm \ref{II} describes the L-II estimation procedure for \eqref{eq:fma1}.
\begin{algorithm}
	\caption{L-II algorithm for locally stationary MA(1) processes}\label{II}
	\begin{algorithmic}[1]
		\State Based on observed data and auxiliary model \eqref{eq: tv_reg}, the estimator $\hat{\rho}(u)$ is defined as $$\hat{\rho}(u)={\sum_{t=1}^{T}Y_{t-1,T}Y_{t,T}K\left(\frac{u-t/T}{h}\right)}\bigg{/}{\sum_{t=1}^{T}Y_{t-1,T}^{2}K\left(\frac{u-t/T}{h}\right)} , $$where $K(\cdot)$ is a kernel function and $h$ is a bandwidth parameter.
		\State Based on the structural model \eqref{eq:fma1_ls}, given $j=1,\dots,H$ independent simulated realizations $\{\tilde{\epsilon}^{[j]}_{t}\}_{t=1}^{T}$, we generate $\{\tilde{y}^{[j]}_{u,t}; j=1,...,H\}$.
		\State Based on the simulated data $\{\tilde{y}^{[j]}_{u,t}\}_{j=1,...,H}$, obtain a set of estimators $\{\hat{\rho}^{[j]}(u;\theta)\}_{j=1,...,H}$ defined as
		$$\hat{\rho}^{[j]}(u;\theta)={\sum_{t=1}^{T}\tilde{y}^{[j]}_{u,t-1}(\theta)\tilde{y}^{[j]}_{u,t}(\theta)}\bigg{/}\sum_{t=1}^{T}\left(\tilde{y}^{[j]}_{u_i,t-1}(\theta)\right)^2$$and define $\hat{\rho}(u_{};\theta)=\frac{1}{H}\sum_{j=1}^{H}\hat{\rho}^{[j]}(u_{};\theta).$
		\State Define the estimator $\hat{\theta}(u)$ as the solution of  $\argmax_{\theta\in \Theta}-\|\hat{\rho}(u)-\hat{\rho}(u;\theta)\|$.
		\State Repeat the above procedure for different time points, say $\{u_i\}_{i=1,...,m}$ to estimate  $\theta_0(\cdot)$. 
	\end{algorithmic}
\end{algorithm}

In comparison with the general structure, the time-varying AR(1) auxiliary model in \eqref{eq: tv_reg} corresponds to taking $z_{u,t}=y_{u,t-1}$ and considering that $g(y_{u,t};\rho)=(y_{u,t}-\rho(u)y_{u,t-1})^2$. Note that it would also be possible to consider additional lags of $y_{u,t}$ in $z_{u,t}$ to accommodate LS-MA models of higher order. It is also useful to note that under weak conditions on the error term, the process $y_{t,T}$ defined in the auxiliary model \eqref{eq: tv_reg} is strong-mixing; see \citet{orbe2005nonparametric}.
	
In this specific model, using the result of Corollary \ref{Lem:uniform_theta}, we can deduce the consistency result in Theorem \ref{thm1} to obtain the following uniform convergence of $\hat{\theta}(u) $ in the LS-MA(1) model to $\theta_0(u)$.\footnote{The proof of Corollary \ref{corol1} follows from Corollary \ref{Lem:uniform_theta}, however, for clarity we give a more primitive proof in the Supplementary appendix.} 

\begin{corollary}\label{corol1}
	Under {Assumptions} \ref{ass:1}-\ref{ass:AAker}, if $\sup_{u\in\mathcal{U}}|\rho_0(u)|<1$ with uniformly bounded second-derivatives, the estimator $\hat{\theta}(u):=\arg\max_{{\theta}\in\Theta}-\|\hat{\rho}(u)-\hat{\rho}(u;\theta)\|_{}$ satisfies $\sup_{u\in\mathcal{U}}\|\hat{\theta}(u)-\theta_{0}(u)\|=o_{p}(1).$
\end{corollary}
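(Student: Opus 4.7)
The plan is to derive this corollary as a direct consequence of Corollary \ref{Lem:uniform_theta} combined with Theorem \ref{thm1}, with the bulk of the model-specific work being to compute the binding function explicitly and check the identification condition on the restricted parameter space.

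First, I would work out $\rho_0(u;\theta)$ in closed form. For fixed $u$ and $\theta\in[-1+\delta,1-\delta]$, the approximating process $y_{u,t}=\epsilon_t+\theta\,\epsilon_{t-1}$ is a stationary MA(1) with variance $1+\theta^2$ and lag-one covariance $\theta$. Since the auxiliary criterion $g(y;f(z,\rho))=(y-\rho z)^2$ has population first-order condition $E[(y_{u,t}-\rho y_{u,t-1})y_{u,t-1}]=0$, the pseudo-true value is the standard Yule--Walker quantity
\[
\rho_0(u;\theta)=\frac{\theta}{1+\theta^2},
\]
and in particular $\rho_0(u)=\rho_0(u;\theta_0(u))=\theta_0(u)/(1+\theta_0(u)^2)$. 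This map inherits uniformly bounded second derivatives in $u$ from the corresponding assumption on $\theta_0(\cdot)$, satisfying the smoothness hypothesis of the corollary.

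Second, I would verify the injectivity condition of Assumption \ref{ass:auxiliary}(vi) on $\Theta=[-1+\delta,1-\delta]$. The derivative
\[
\frac{\partial\rho_0(u;\theta)}{\partial\theta}=\frac{1-\theta^2}{(1+\theta^2)^2}
\]
is bounded away from zero on $\Theta$, so $\theta\mapsto\rho_0(u;\theta)$ is strictly monotone and therefore injective. Crucially, this injectivity holds uniformly in $u$ because the binding function does not depend on $u$ beyond through its argument, yielding a well-defined inverse with uniformly bounded derivative.

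Third, I would invoke Corollary \ref{Lem:uniform_theta}. The AR(1) criterion is the canonical nonlinear-regression criterion covered there, and the assumed moment bound $E|\epsilon_t|^{4+\eta}<\infty$, together with the smoothness of $\theta_0(\cdot)$ and the stipulation $\sup_u|\theta_0(u)|<1$, delivers the $\phi$-mixing condition needed (see \citet{orbe2005nonparametric}) and the required integrability. Hence $\sup_u|\hat\rho(u)-\rho_0(u)|=o_p(1)$ and $\sup_{u,\theta}|\hat\rho(u;\theta)-\rho_0(u;\theta)|=o_p(1)$, and Theorem \ref{thm1} delivers $\sup_u|\hat\theta(u)-\theta_0(u)|=o_p(1)$. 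The main obstacle I anticipate is confirming that identification holds \emph{uniformly} in $u\in\mathcal{U}$: pointwise injectivity of $\theta/(1+\theta^2)$ on $[-1+\delta,1-\delta]$ is immediate, but extending this to the uniform-in-$u$ identification statement underlying the L-II step is what truly earns the uniform rate for $\hat\theta(\cdot)$. The fact that $\rho_0(u;\theta)$ factors through $\theta$ alone makes this tractable here, whereas in more complex structural models this step would require genuine model-specific analysis.
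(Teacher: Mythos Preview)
Your proposal is correct and follows essentially the same route as the paper: compute the binding function $\rho_0(u;\theta)=\theta/(1+\theta^2)$, verify strict monotonicity on $[-1+\delta,1-\delta]$ to obtain Assumption~\ref{ass:auxiliary}(vi), then invoke Corollary~\ref{Lem:uniform_theta} and Theorem~\ref{thm1}. The paper acknowledges this path in a footnote and, in its supplementary appendix, additionally supplies a more primitive verification of the uniform convergence of $\hat\rho(u)$ and $\hat\rho(u;\theta)$ via truncation, covering arguments, and an exponential inequality of \citet{liebscher1996strong}; your higher-level appeal to Corollary~\ref{Lem:uniform_theta} is equivalent in content but less self-contained.
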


\subsection{Monte Carlo experiments}
We demonstrate the usefulness of the L-II approach using a series of Monte Carlo experiments. We consider a sample size of $T=1000$ generated according to the LS-MA(1) model $$Y_{t,T}=\epsilon_{t}+\epsilon_{t-1}\theta_{0}(t/T),\;\epsilon_{t}\sim \mathcal{N}(0,1).$$Data is generated according to one of three functional specifications for $\theta_{0}(u)$:
\begin{itemize}
	\item[(a)] $\theta_{0}(t/T)=0.5\cdot(t/T)^2$;
	\item[(b)] $\theta_{0}(t/T)=0.25+(t/T)-(t/T)^2$;
	\item[(c)] $\theta_{0}(t/T)=0.5$.
\end{itemize} 
For inference on $\theta_{0}(\cdot)$, we use Algorithm \ref{II} with a Gaussian kernel and the rule of thumb bandwidth $h=1.06 T^{-1/5}$. We take $H=2$ for all simulation experiments.\footnote{As demonstrated in Theorem \ref{thm2}, the choice of $H$ does not have an asymptotic impact on the estimates. However, in finite samples this choice may affect the estimated values of $\theta_0(\cdot)$, since a larger value of $H$ generally yields a smoother criterion function, and potentially a more accurate optimizer.} We estimate $\theta_{0}(\cdot)$ across the grid of points $u\in\{.05,.10,.20,\dots,.90,.95\}$.

We consider 5,000 replications of the above design across the three different specifications for $\theta_{0}(\cdot)$. The following three figures illustrate the sampling distribution, across the Monte Carlo replications for each of the three specifications.

\begin{figure}[h]
		\centering
		\begin{subfigure}[b]{0.7\textwidth}
			\includegraphics[height=3cm, width=1\linewidth]{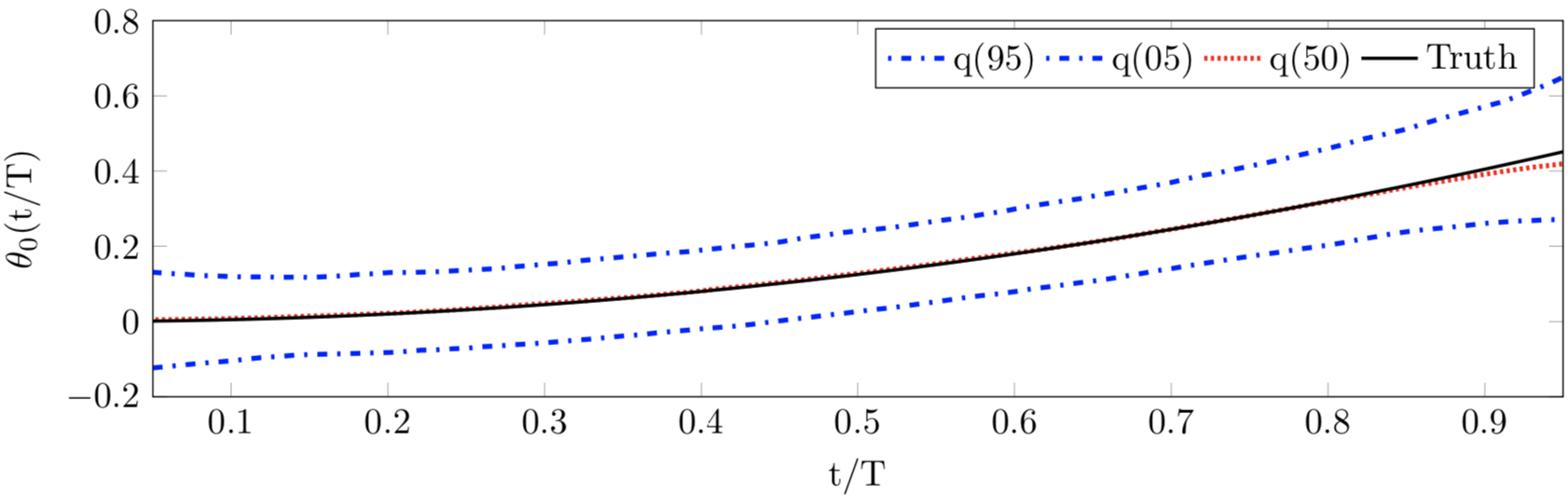}
			\caption{$\theta_{0}(t/T)=0.5\cdot(t/T)^2$}
			\label{fig:1} 
		\end{subfigure}		
		\begin{subfigure}[b]{0.7\textwidth}
			\includegraphics[height=3cm, width=1\linewidth]{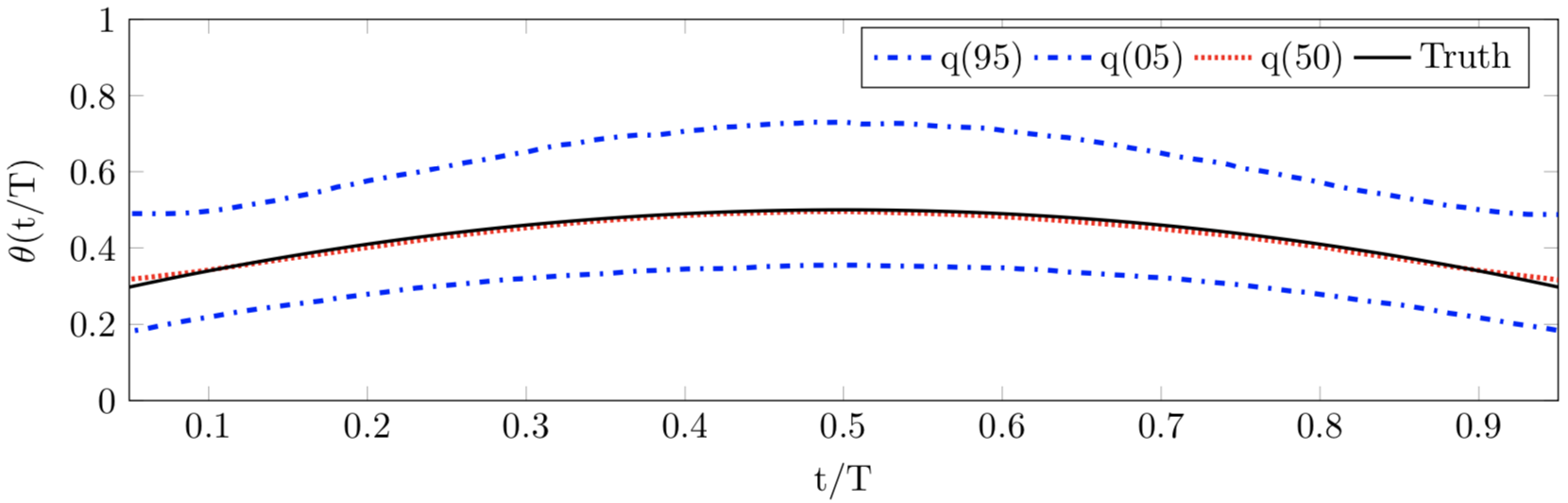}
			\caption{$\theta_{0}(t/T)=0.25+(t/T)-(t/T)^2$}
			\label{fig:2}
		\end{subfigure}	
		\begin{subfigure}[b]{0.7\textwidth}
		\includegraphics[height=3cm, width=1\linewidth]{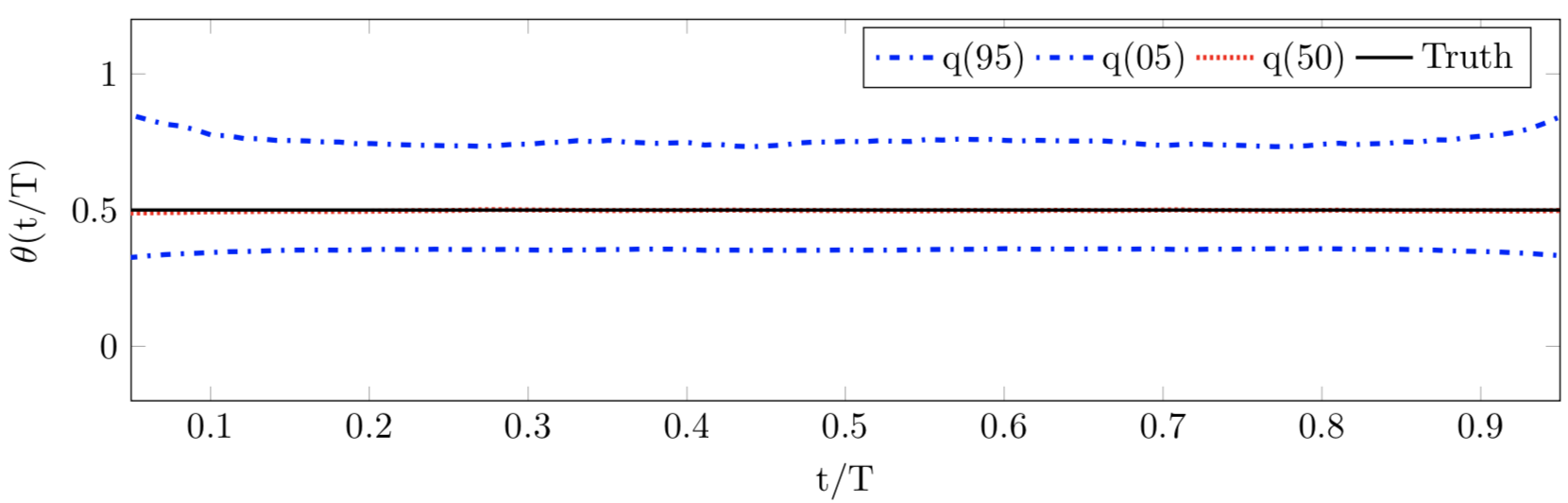}
		\caption{$\theta_{0}(t/T)=0.5$}
		\label{fig:3}
		\end{subfigure}	
	\caption{Sampling distribution of $\hat{\theta}(u)$ in the LS-MA(1) model based on 5,000 Monte Carlo replications. The dashed dotted lines represent the 0.05 and 0.95 quantiles across the Monte Carlo replications, while the dotted line represents the 0.50 quantile. The solid line represents the respective true unknown function.}
	\label{fig:simul}
\end{figure}

Figure \ref{fig:simul} demonstrates the ability of the L-II approach to obtain consistent estimators of the unknown function $\theta_{0}(\cdot)$ over $u\in\{.05,.10,.20,\dots,.90,.95\}$ across the three Monte Carlo designs. The bounds are truncated due to the well-known boundary bias associated with local constant nonparametric estimation. We note that, outside of these bounds, given the relatively short nature of the time series, these estimators are likely to be poorly behaved. This issue can be addressed through the use of local-linear smoothing approaches.

\section{Time-varying multiplicative stochastic volatility model}\label{sec:lssv}
The use of stochastic volatility to capture the conditional heteroskedastic movements of asset returns is now commonplace in economics and finance. Recently, however, several authors have suggested that volatility should be decomposed into short and long-run components (see, e.g, \citealp{ER08} and \citealp{engle2013stock}). Such a decomposition has given rise to the class of multiplicative time-varying GARCH models, e.g.  \cite{koo2015let}. Such models decompose volatility into a short-run  component, which is conveniently captured via a GARCH model, and a long-run component that slowly varies with larger macroeconomic factors that are captured nonparametrically. 


The class of multiplicative GARCH models can capture both short and long-run features, however, it is generally accepted that stochastic volatility models are superior to GARCH models in terms of modeling flexibility and their overall ability to capture fluctuations in short-run  volatility. Given this feature, one would suspect that a multiplicative extension of the standard SV model should perform well in many cases. While such a model would be similar to multiplicative GARCH models, the introduction of latent stochastic volatility ensures that direct estimation approaches become infeasible. However, this issue is immaterial for our L-II estimation approach since we can simulate the latent volatility 

To this end, in this section we propose a new model where volatility evolves as the product of a short and long-run component: the long-run component is captured by a slowly time-varying function, and the short-run  component is captured via an autoregressive SV model. In the context of simulation experiments, we demonstrate that our L-II approach can accurately estimate this new model. We then apply this model to analyze the volatility of monthly returns on twenty-five Fama-French portfolios, with the results indicating that long-run volatility changes dramatically over the sample period under analysis. 

Given the general nature of this paper, we leave a thorough discussion on the theoretical properties of this new SV model for future study.

\subsection{Model}
We now consider a multiplicative extension of the traditional stochastic volatility model. The observed demeaned data is generated according to
\begin{align}
Y_{t,T} &= \sqrt{\xi(t/T)}\exp{(h_t/2)}\nu_{1,t} ,\nonumber \\
h_{t+1} &= \mu + \phi h_t + \nu_{2,t}, \label{m1}
\end{align}
and where
\begin{equation*}
\begin{bmatrix}
\nu_{1,t}\\
\nu_{2,t}
\end{bmatrix}
\overset{iid}{\sim} \mathcal{N}
\begin{pmatrix}
\begin{bmatrix}
0\\0  
\end{bmatrix},
\begin{bmatrix}
1 &  \gamma_{\nu} \sigma\\
\gamma_\nu\sigma &  \sigma^2
\end{bmatrix}
\end{pmatrix},
\end{equation*}
with $\gamma_{\nu}$ the correlation coefficient between $\nu_{1,t}$ and $\nu_{2,t}$. 
In this model, the long-run trend is captured by the deterministic function $\sqrt{\xi({t/T})}$ whereas the short-run dynamics, $h_t$, are represented by the stochastic volatility model. We implicitly assume that $\{Y_{t,T}\}$ changes smoothly over time and if it were not for $\xi(\cdot)$, the slowly time-varying long-run trend, then $\{Y_{t,T}\}$ would be stationary. That is, we implicitly maintain that $\xi(\cdot)$ is uniformly positive and twice continuously differentiable, and $h_t$ is stationary, so that the process $\{Y_{t,T}/\sqrt{\xi(t/T)}\}_{}$ would be stationary. In the supplementary material, we give precise conditions on the function $\xi(\cdot)$ and the remaining parameters that ensure the resulting model is locally stationary.  

Directly estimating the structural model \eqref{m1}, and conducting statistical inference on the resulting estimates, is generally infeasible with existing methods. Instead, we propose to conduct inference on the structural model through L-II and by using as our auxiliary model the following locally stationary multiplicative GJR-GARCH model:
\begin{align}
y_{u,t} &= \sqrt{\tau(u)}\sigma_t z_t ,\nonumber \\
\sigma^2_{t+1} &= \omega + \alpha\sigma^2_{t} + \beta\left(\frac{y_{u,t}}{\sqrt{\tau(u)}}\right)^2 + \gamma \left(\frac{y_{u,t}}{\sqrt{\tau(u)}}\right)^2 I_{t}, \label{m2}
\end{align}
where $z_t \overset{iid}{\sim}\mathcal{N}(0,1)$ and $I_t = 0$ if ${y_{u,t}}/{\sqrt{\tau(u)}} \geq 0$, and $I_t = 1$ if ${y_{u,t}}/{\sqrt{\tau(u)}}< 0$.
In this setting, we will use the parameters in the auxiliary model, $\rho(\cdot) = (\tau(\cdot),\omega,\alpha,\beta,\gamma)'$, to conduct inference on the parameters of interest in the structural model, $\theta(\cdot) = (\xi(\cdot),\mu,\phi,\gamma_{\nu},\sigma)'$.

\citet{koo2015let} demonstrate that locally stationary multiplicative GARCH models can be estimated relatively easily. Note, however, that the symmetry of a GARCH(1,1) model would ensure that it is an unsuitable auxiliary model, as there is no parameter that can be readily matched to the correlation coefficient $\gamma_{\nu}$. Therefore, we employ the GJR-GARCH(1,1) model so that the leverage effect $\gamma_{\nu}$ is captured by the asymmetry parameter $\gamma$ in the auxiliary model. 

\subsubsection{Estimation procedure}

Before we discuss estimation of the LS-SV model, we note that, due to the multiplicative nature of the model for $Y_{t,T}$ in \eqref{m1}, an additional identification restriction is required in order to identify the unknown parameters. The restriction can be imposed on either the long-run or the short-run  part. For instance, while \citet{koo2015let} impose a restriction on the long-run component, \citet{engle2013stock} impose a restriction on the short-run  component. For our L-II, we impose a restriction on the short-run component for the LS-SV model because the L-II is applied over a finite number of fixed time points and therefore, a restriction on the long-run component in the structural model is difficult to implement. 

In particular, we impose the restriction that $\mu = 0$ for the structural model. Equivalently, for the auxiliary multiplicative GJR-GARCH model, we restrict $\omega = 1-\alpha-\beta -\frac{\gamma}{2}$ such that the GJR-GARCH process has unit unconditional variance ($\frac{\omega}{1-\alpha-\beta-\frac{\gamma}{2}}$ = 1). Under this setup, we conduct our L-II as follows. 

\paragraph{Estimation of the auxiliary model:}
Using the observations $\{Y_{t,T}\}_{t\le T}$, we estimate the auxiliary multiplicative GJR-GARCH model \`{a} la \citet{ER08} and \citet{koo2015let}. Specifically, from \eqref{m2}, for $\mathcal{I}_t$ denoting the information set at $t$, 
\begin{flalign*}
{E}(\log y^2_{u,t}|\mathcal{I}_{t-1})&=\log \tau(u) + {E}(\log \sigma^2_t|\mathcal{I}_{t-1}) + {E}(\log z^2_t|\mathcal{I}_{t-1})\\
&=\log \tau(u) \exp(C) = \log \tau^{\ast}(u)
\end{flalign*}
under the stationarity of $\sigma^2_t$ and $z_t$ and $\tau^{\ast}(u)=\tau(u)\exp(C)$ with $C=E(\log \sigma^2_t z_t^2|\mathcal{I}_{t-1})$.
 
We obtain an initial estimate $\log\hat{\tau}^{\ast}(u)$ as
\[
\log \hat{\tau}^{\ast}(u) = \operatorname{arg min}_{\tau^{\ast} \in \mathbb{R}_{+}}\sum_{t=1}^{T}(\log y^2_{u,t}-\log \tau^{\ast}(u))^2 K_h(u-t/T),
\]
where $K_h(\cdot)=K(\cdot/h)/h$ with a bandwidth $h$. Once we obtain $\hat{\tau}^{\ast}(u)$,  we calculate the intermediate estimator $\check{\tau}(u)$:
\[
	\check{\tau}(u)=\frac{\hat{\tau}^{\ast}(u) }{\int_{0}^{1}\hat{\tau}^{\ast}(u) du}
\]
because 
\[
\frac{{\tau}^{\ast}(u) }{\int_{0}^{1}{\tau}^{\ast}(u) du} = \frac{{\tau}(u)\exp(C)}{\int_{0}^{1}{\tau}(u)\exp(C) du}=\tau(u),
\]
when we impose a restriction that $\int_0^1\tau(u)du=1$.

{Note that the restriction, $\int_0^1\tau(u)du=1$ is not a model restriction but rather an estimation restriction that can be re-normalized or reconstructed arbitrarily. 
Once $\check{\tau}(u)$ is obtained, we estimate the GJR-GARCH parameters via maximum likelihood estimation based on the following transformed data $\check{y}_{u,t}=y_{u,t}\big{/}\sqrt{\check{\tau}(u)}$ and obtain the estimators $(\check{\omega},\check{\alpha},\check{\beta},\check{\gamma})'$. However, note that $\check{\rho}=(\check{\tau}(\cdot),\check{\omega},\check{\alpha},\check{\beta},\check{\gamma})'$ does not satisfy the restriction $\omega = 1-\alpha-\beta -\frac{\gamma}{2}$. To obtain a vector of parameter estimates that satisfy this restriction, we calculate $\hat{\tau}(u)=\check{\tau}(u)\left({\check{\omega}}/{1-\check{\alpha}-\check{\beta}-\frac{\check{\gamma}}{2}}\right)$ and use $\hat{\tau}(u)$ to construct $\hat{y}_{u,t} = y_{u,t}\big{/}\sqrt{\hat{\tau}(u)}$. Estimating the parameters in the GJR-GARCH model using the transformed dataset $\{\hat{y}_{u,t}\}_{t\le T}$ then yields $(\hat{\omega},\hat{\alpha},\hat{\beta},\hat{\gamma})'$. The vector of estimates $\hat{\rho}=(\hat{\tau}(u),\hat{\omega},\hat{\alpha},\hat{\beta},\hat{\gamma})'$ is then used in L-II as the auxiliary parameter estimates.\footnote{Imposing a restriction in maximum likelihood estimation is usually difficult but we avoid complicated constrained optimization in this way. This restriction or constraint is important for the L-II of this particular model. Another type of constraint is required for another type of structural and auxiliary models for L-II. We believe that imposing a general type of constraint in the context of L-II will open up another important research topic. We leave the analysis of constrained L-II for future research.}  

\paragraph{Simulation of the structural model:}
Based on \eqref{m1}, for a given $u\in\mathcal{U}$, we simulate $H$ independent structural processes under the restriction $\mu=0$, for some value of $\theta\in\Theta$ according to:
\begin{align}
\tilde{y}^{[j]}_{u,t} &= \sqrt{{\xi}(u)}\exp{(\tilde{h}^{[j]}_{t}/2)}\tilde{\nu}^{[j]}_{1,t} \nonumber \\
\tilde{h}^{[j]}_{t+1} &= {\phi} \tilde{h}^{[j]}_t + \tilde{\nu}^{[j]}_{2,t}, \label{simul}
\end{align}
with
\begin{equation*}
\begin{bmatrix}
\tilde{\nu}_{1,t}\\
\tilde{\nu}_{2,t}
\end{bmatrix}
\overset{iid}{\sim} \mathcal{N}
\begin{pmatrix}
\begin{bmatrix}
0\\0  
\end{bmatrix},
\begin{bmatrix}
1 &  {\gamma_{\nu}}{\sigma}_{}\\
{\gamma_{\nu}}{\sigma}_{} &  {\sigma}^2_{} 
\end{bmatrix}
\end{pmatrix},
\end{equation*}
In the simulation step, we restrict $\mu = 0$ to impose unit unconditional variance for the multiplicative SV model, which is compatible with the restriction on the auxiliary model,  $\omega = 1-\alpha-\beta -\frac{\gamma}{2}$.

\paragraph{Estimation of the simulated structural model via the auxiliary model and L-II:}
{For a given time point $u\in\mathcal{U}$, based on the simulated data $\{\tilde{y}^{[j]}_{u,t};j=1,...,H\}$, we first obtain a set of estimators $\{\hat{\rho}^{[j]}(u;\theta)\}_{j=1}^{H}$. Note that when $\{\hat{\rho}^{[j]}(u;\theta)\}_{j=1}^{H}$ is estimated for each fixed time point, $u$,  the parameter $\tau(u)$ in the auxiliary model is an unknown constant, not a function. This implies that we just estimate the GJR-GARCH model based on the simulated data $\{\tilde{y}^{[j]}_{u,t};j=1,...,H\}$, to obtain $\{\check{\omega}^{[j]},\check{\alpha}^{[j]},\check{\beta}^{[j]},\check{\gamma}^{[j]}\}_{j=1}^{H}$ and then obtain $\{\hat{\tau}^{[j]}(u)\}_{j=1}^{H}$, such that $\hat{\tau}^{[j]}(u)=\frac{\check{\omega}}{1-\check{\alpha}-\check{\beta}-\check{\gamma}/2}$ thanks to the restriction $\omega = 1-\alpha-\beta -\frac{\gamma}{2}$. Then we create transformed or normalized data $\hat{y}_{u,t}=\tilde{y}^{[j]}_{u,t}\big{/}\sqrt{\hat{\tau}^{[j]}(u)}$ and obtain $\{\hat{\omega}^{[j]},\hat{\alpha}^{[j]},\hat{\beta}^{[j]},\tilde{\gamma}^{[j]}\}_{j=1}^{H}$. From $\{\hat{\rho}^{[j]}(u;\theta)\}_{j=1}^{H}$ we can then construct $\hat{\rho}(u;\theta)=\sum_{j=1}^{H}\hat{\rho}^{[j]}(u;\theta)/{H}$.}

{Based on $\hat{\rho}(u)$ and $\hat{\rho}(u;\theta)$, we search for the best candidate for the given time point $u$ and define the estimator $\hat{\theta}(u)$ as the solution to:  $\argmax_{\theta\in \Theta}-\|\hat{\rho}(u)-\hat{\rho}(u;\theta)\|^2_{\Omega}$
where $\Theta$ is the parameter space for $\theta_0(u)$. The above procedure can then be repeated across a grid of points, say $\{u_i\}_{i=1,...,m}$ to estimate the whole functional form of $\theta_0(\cdot)$. } 

Summing up, Algorithm \ref{alg:msv} is employed for the L-II estimation of the locally stationary multiplicative stochastic volatility model.

\begin{algorithm}
	\caption{L-II algorithm for multiplicative stochastic volatility processes}\label{alg:msv}
	\begin{algorithmic}[1]					
		\State Using the auxiliary model (\ref{m2}), $\hat{\rho}(u)$ is estimated based on the observed data and under the identification restriction $\omega = 1-\alpha-\beta -\frac{\gamma}{2}$.
		\State For fixed $u$, simulate $H$ independent structural processes $\{\tilde{y}^{[j]}_{u,t}; j=1,...,H\}$ according to the structural model in equation \eqref{m1}, and under the restrictions $\mu = 0$ and $\theta \in \Theta$.
		\State Based on the simulated data $\{\tilde{y}^{[j]}_{u,t}\}_{j=1,...,H}$, using the auxiliary model (\ref{m2}), obtain $\hat{\rho}(u;\theta)=\sum_{j=1}^{H}\hat{\rho}^{[j]}(u;\theta)/{H}$, again, under the restriction of $\omega = 1-\alpha-\beta -\frac{\gamma}{2}$. 
		\State 	Based on $\hat{\rho}(u)$ and $\hat{\rho}(u;\theta)$, we search for the best candidate for the given time point $u$ and define the estimator $\hat{\theta}(u)$ as the solution to:  $\argmax_{\theta\in \Theta}-\|\hat{\rho}(u)-\hat{\rho}(u;\theta)\|^2_{\Omega}$.
		\State Repeat the above procedure across a grid of points, say $\{u_i\}_{i=1,...,m}$, to estimate $\theta_0(\cdot)$. 
	\end{algorithmic}
\end{algorithm}

\subsubsection{Monte Carlo experiment}

We now conduct a Monte Carlo experiment to illustrate L-II estimation of the locally stationary multiplicative stochastic volatility (LS-SV) model . We fix the sample size to be $T=200$, and we generate 5000 Monte Carlo replications from the LS-SV model in equation \eqref{m1} with parameters values given by $$\mu=0,\phi=0.2,\gamma_{\nu}=-0.5,\sigma=1,$$ and where the long-run volatility component is given by 
$$\xi(t/T)=0.2\sin(0.5\pi t/T)+0.8\cos(0.5\pi t/T).$$
We take as our auxiliary model for this Monte Carlo experiment the LS-GJR-GARCH(1,1) auxiliary model in equation \eqref{m2}.

Similar to the Monte Carlo experiments for the LS-MA(1) model, we estimate the auxiliary parameter via local constant estimation with a Gaussian kernel and rule of thumb bandwidth. We again set the number of simulations to be $H=2$. For full details of the estimation procedure, please refer to Algorithm \ref{alg:msv}. Across each Monte Carlo replication we apply the LS-II approach, and record the estimated function $\hat{\xi}(\cdot)$.\footnote{Results for the parametric components of the model are similar to those obtained for other II estimators, and are not presented for the sake of brevity.}  The estimation results for the unknown function are presented graphically in Figure \ref{sv_fig}. Similar to the results for the LS-MA(1) model, the LS-II procedure yields good estimates of the unknown function.\footnote{Similar to the previous Monte Carlo, we truncate the function estimate due to boundary bias problems associated with the local-constant smoothing approach considered in this implementation.}

\begin{figure}[h]
	\centering 
		\centering
	\resizebox{9cm}{6cm}{
	\input{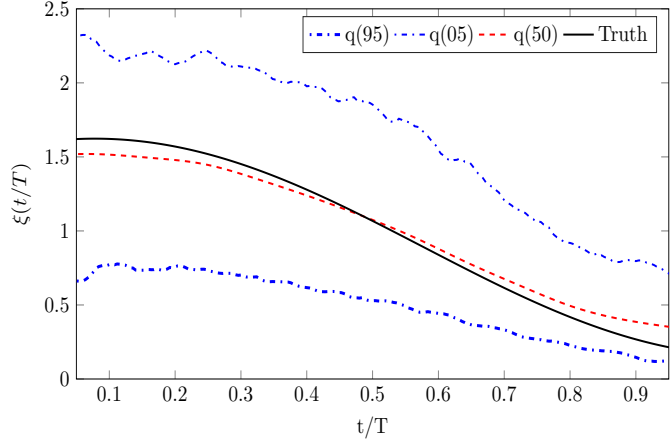} 			
	}	
	\caption{Sampling distribution of the estimated function $\hat{\xi}(u)$ across the 5,000 Monte Carlo replications. The dashed dotted paths represent the 0.05 and 0.95 quantiles across the Monte Carlo replications, while the dashed path represents the 0.50 quantile. The solid line represents the true unknown function and is given by $\xi(t/T)=0.2\sin(0.5\pi t/T)+0.8\cos(0.5\pi t/T)$.}
	\label{sv_fig}
\end{figure}

\subsection{Empirical application: LS-SV model}\label{BBBappl_lsma}
Herein, we analyse the behavior of monthly returns from January 1952 until December 2018 on 25 Fama-French portfolios formed from the intersection of five portfolios on size and five portfolios on book-to-market, and where the breakpoints for the portfolios are taken from the NYSE quintiles and are ordered from smallest to largest.\footnote{The data is freely available from Kenneth French's website.} The monthly return series on the Fama-French portfolios covers a long period of observation, and it is unlikely that these series display constant conditional covariance features over the entire sample period. In particular, while it is fairly widely accepted that these portfolios seem to display constant mean dynamics, the large fluctuations in the volatility of these series do not engender confidence that the conditional variance is constant throughout the sample period.\footnote{Considering an ARCH test of the demeaned returns for each of the 25 portfolios, where each test uses five lags, returns overwhelming support for the alternative hypothesis across all portfolios. The specific values can be found in the supplementary appendix.}  

Moreover, given the long time-span over which the data is measured, we argue that it is not realistic to assume that the volatility dynamics that were present in the 1950s have persisted unchanged until 2018. In particular, it is likely that underlying macroeconomic factors would cause these portfolios to exhibit patterns of volatility that display both short-term and long-run fluctuations, which can not be adequately captured by a stationary volatility model. To capture the long-run volatility patterns in the data, we consider a LS-SV version of the Fama-French three factor model. For $r_{t,j},\;j\in\{1,...,25\}$, denoting excess returns on the $j$-th portfolio,  we assume that $r_{t,j}$ evolves according to
\begin{flalign*}
r_{t,j}&=\alpha+\beta_1 r_{t,m}+\beta_{2}\text{SMB}_{t}+\beta_{3}\text{HML}_{t}+\epsilon_{t,j},\\\epsilon_{t,j}&=\sqrt{\xi_j(t/T)}\exp(h_{t,j}/2)\nu_{t,1j},
\end{flalign*}where $r_{t,m}$ denotes excess returns on the market factor, $\text{SMB}_t$ is the size factor, and $\text{HML}_t$ is the value factor. We model the short-term volatility component $h_{t,j}$ as
\begin{flalign*}
h_{t,j}&=\phi_j h_{t-1,j}+\sigma_{v,j}\nu_{t,2j},\;\;\;\text{corr}(\nu_{t,1j},\nu_{t,2j})=\gamma_{\nu,j},
\end{flalign*}
where  we require that the mean of the short-term SV component be zero to ensure the scale of $\xi(\cdot)$ can be properly identified. The above LS-SV model considers that volatility is the composition of two components: a long-run volatility trend that moves slowly and is captured by $\xi_j(t/T)$, and a term, measured by $h_{t,j}$, that captures short-term fluctuations around $\xi_j(t/T)$.

Estimation in the above LS-SV model can be carried out in two steps: first, we estimate the regression parameters to obtain $\hat{\alpha},\hat{\beta}_1,\hat{\beta}_2,\hat{\beta}_3$; in the second step, the residuals $$y_{t,j}=\left(r_{t,j}-\hat{\alpha}-\hat{\beta}_{1}r_{t,m} -\hat{\beta}_{2}\text{SMB}_{t}-\hat{\beta}_{3}\text{HML}_{t}\right)$$ are used within the L-II algorithm for the LS-SV model, along with a LS-GJR-GARCH auxiliary model (we refer the reader to Algorithm \ref{alg:msv} for specific implementation details). Before moving on, we note that the two-step nature of the L-II approach in this example means that it is straightforward to treat more complicated regression functions, such as, for instance, models with time-varying $\alpha$ and $\beta$. We refer the interested reader to the supplementary appendix where we consider an alternative specification for the conditional mean function that allows $\alpha,\;\beta$ to be time-varying.\footnote{These results largely mirror those given in the main text, and so we relegate these details to the supplementary material. In particular, we find that time varying versions of $\alpha$ and $\beta$ do not meaningfully deviate from constants for the sample period under analysis.}

L-II is used to estimate the short-term and long-run volatility components for all 25 portfolios. However, given the nature of the above estimation approach, uncertainty quantification is carried out using the local block-bootstrap (LBB) of \cite{paparoditis2002local}. The LBB is operationally similar to the block bootstrap but accounts for the changing stochastic structure of the observation process. Given observed data $y_{1},\dots,y_{T}$ the LBB generates a bootstrapped series of data, $y_{1}^*,\dots,y_{T}^*$, via the following steps. 
\begin{itemize}
	\item Select an integer block size $b$, and a fraction $B\in(0,1]$ such that $T\cdot B$ is an integer. 
	\item For $\lceil x \rceil$  the smallest integer that is greater than or equal to $x$, define $q:=\left(\lceil TB \rceil-1\right)$. For $i=0,1\dots,q$, let $k_0,\dots,k_q$ be i.i.d. integers generated from the uniform distribution that assigns probability $w(k)=1/(2TB+1)$ to the value $k$ when $-TB\leq k\leq TB$ and zero else. 
	\item Construct the bootstrap series $y_{1}^*,\dots,y_T^*$ by setting $y^*_{j+ib}=y_{j+ib+k_i}$ for $j=1,\dots,b$, and where $k_i$ is as given above and for $i=0,\dots,q$. 
\end{itemize}

In the following examples, across each of the 25 portfolios, we implement the LBB using $R=999$ bootstrap replications. Furthermore, we set the LBB block size, $b$, to be $b=10$, and take the local bootstrap parameter, $B$, to be $B\approx0.11$. 

The estimation results for $\alpha$ and $\beta$ are given in Table \ref{tab:beta_parms}, and the results for the parametric SV components are given in Table \ref{tab:sv_parms}. Focusing on the values of $\alpha,\;\beta$, we see that these estimated parameters are generally statistically significant and have the anticipated signs. Analysing Table \ref{tab:sv_parms}, we see that the short-term volatility parameters generally have statistically significant autocorrelation coefficients between 0.5 and 0.7, which indicates a moderate amount of short-term volatility persistence. The majority of the estimated values for $\sigma_v$ are between 1.5 and 2.0, indicating a relatively large level of noise in the short-term volatility process. Interestingly, none of the estimated leverage effects are statistically significant for the short-term volatility process. To ensure that this insignificance is not an artifact of the chosen auxiliary model, in Table \ref{tab:tarch} we report 99\% confidence intervals for the corresponding LS-GJR-GARCH auxiliary parameter $\gamma$, which captures the impact of asymmetric news on volatility, and where the confidence intervals are calculated using QMLE sandwich form standard errors. For 24 out of the 25 portfolios, the resulting LS-GJR-GARCH asymmetry parameter is statistically insignificant at the one percent significance level.
 
Leverage effects account for asymmetric reactions to volatility, possible due to larger macroeconomic forces. By their very nature, these macroeconomic forces are generally slowly varying, and their impact on volatility can then be adequately captured using the time-varying volatility approach considered herein. The insignificance of the estimated leverage effects can then be interpreted as follows: by decomposing volatility into a short-term  and long-run component, and by modeling the impact of such macroeconomic forces nonparametrically, the leverage effect is soaked-up by the long-run volatility component; its inclusion in the short-term volatility component is then redundant and, hence, statistically insignificant.

We present the estimates of $\xi(\cdot)$ graphically in Figures \ref{FF_fig1}-\ref{FF_fig5} in the appendix. The reported confidence bounds are the corresponding pointwise, for each value of $u=t/T$, confidence bounds obtained using the LBB.

The long-run volatility component captures gradual changes in volatility, possibly due to slowly-varying macroeconomic factors that affect returns (see, e.g, \citealp{ER08} and \citealp{engle2013stock} for a detailed discussion). Given this aim, the results in Figures \ref{FF_fig1}-\ref{FF_fig5} are compelling as they closely align with the larger macroeconomic risk profile of returns over the sample period under analysis. In particular, during the 1950s to the early 1960s most series display relatively low volatility that is either flat or slightly increasing till the early-to-mid 1960s, with the overall trend of most series decreasing after about 1965. This overall trend is then maintained all the way through the great moderation of the 1980s. However, after the end of the great moderation, virtually every series exhibits a significant upswing in long-run volatility. This pattern then continues and culminates around the time of the global financial crisis in the late 2000s, after which there is another sustained decrease in long-run volatility. 

Given how well our results correspond to the overarching long-run volatility patters, we note that more than half of these return series now exhibit an additional steeping of long-run volatility. This may indicate that since 2016 we have entered into a new period of long-run macroeconomic volatility.

\section{Discussion}
\label{BBBconcl}
We propose a novel indirect inference estimator for locally stationary processes and thereby extend, for the first time, the use of indirect inference estimation to general classes of semiparametric models with slowly time-varying parameters. As part of this study, we also propose a novel local stationary multiplicative stochastic volatility (LS-SV) model. We leave two important topics for future research: the efficiency of the L-II estimator, and the ensuing semiparametric efficiency bound for the class of locally stationary models considered in this paper; and the incorporation of shape restriction for nonparametric estimation within L-II, which may improve efficiency, e.g. \citet{horowitz2017nonparametric}, at the cost of a more complicated estimation approach. 

\newpage 
\small{
	\bibliographystyle{apalike}
	\bibliography{LII}}
\newpage

\appendix
\section{Proofs of main results}
\label{app: proofs}
	\begin{proof}[Proof of Lemma \ref{Lem:struct_ls}]
	From the triangle inequality, for all $u_0\in \mathcal{U}$, we have
	\begin{equation*}
	\left\vert Y_{t,T}-y_{u_0,t}\right\vert \leq \left\vert Y_{t,T}-y
	_{t/T,t}\right\vert +\left\vert y_{t/T,t}-y_{u_0,t}\right\vert \le O_{p}(T^{-1})+\left\vert y_{t/T,t}-y_{u_0,t}\right\vert,
	\end{equation*}where the $O_p(T^{-1})$ term follows from Definition \ref{def: ls}. Now, consider $\left\vert y_{t/T,t}-y_{u_0,t}\right\vert$ and expand $y_{t/T,t}$, via \eqref{struct2}, in a neighborhood of $u_{0}$:
	\begin{flalign*}
	y_{t/T,t}&=y_{u_{0},t}+\left[ {t}/{T}-u_{0}\right] \left.
	\frac{\partial y_{u,t}}{\partial u}\right|_{u=u_{0}}+\frac{1}{%
		2}\left[ {t}/{T}-u_0\right] ^{2}\left. \frac{\partial ^{2}y_{u,t}}{%
		\partial u^{2}}\right|_{u=u_{0}}+O_{p}\left( \left[ {t}/{T}-u_0%
	\right] ^{2}\right)\\
	&=y_{u_{0},t}+\left[ {t}/{T}-u_{0}\right] \left.
	\left[\frac{\partial r}{\partial\theta}+\frac{\partial r}{\partial\varphi}\frac{\partial\varphi}{\partial\theta}\right]\frac{\partial\theta_0}{\partial u} \right| _{u=u_{0}}+o_p\left( \left\vert {t}/{T}-u_0%
	\right\vert\right).
	\end{flalign*}%
	From Assumption \ref{ass:1}, in particular the (uniform) bounded second-derivatives of $r(\cdot),\varphi(\cdot),\theta_0(\cdot)$, it follows that
	$
	\left\vert y_{t/T,t}-y_{u_0,t}\right\vert = O_{p}\left(
	\left\vert {t}/{T}-u_0\right\vert \right)$. We then have that
	\begin{equation*}
	\left\vert Y_{t,T}-y_{u_0,t}\right\vert \leq O_p(T^{-1}) + O_{p}\left(
	\left\vert {t}/{T}-u_0\right\vert \right)=O_p\left(\left\vert {t}/{T}-u_0\right\vert+T^{-1}\right).
	\end{equation*} 				
\end{proof}	

\subsection{Proof of Theorem \ref{thm0}}
\begin{proof}
	Theorem \ref{thm0} consists of two uniform consistency results: 1. uniform consistency of the auxiliary estimator, $\hat{\rho}(u)$ based on the observed sample $\{Y_{t,T}\}$ to the pseudo-true value $\rho_0(u)$; 2. uniform consistency of the auxiliary estimator, $\hat{\rho}(u;\theta)$ based on the simulated sample $\{\tilde{y}_{u,t}\}$ to the pseudo-true value $\rho_0(u;\theta)$.
	
	Our proof strategy is twofold. In Part 1, firstly we show that for the true $\theta_0(u)$, 
	\begin{equation}
	\sup_{u\in{\mathcal{U}}}\|\hat{\rho}(u)-\rho_0(u)\|=\sup_{u\in{\mathcal{U}}}\|\hat{\rho}(u;\theta_0(u))-\rho_0(u;\theta_0(u))\|=o_p(1),
	\end{equation} 
	which proves the first part as in \eqref{eq:uniformity_m_est} of Theorem \ref{thm0}.
	
	In Part 2, combined with Part 1, we show the uniform consistency over $u\in\mathcal{U}$ and $\theta\in\Theta$, i.e. the second part as in \eqref{eq:uniformity_m_est2}  of Theorem \ref{thm0}  by using the simulated data and local stationarity.  
	
	\bigskip 
	
	\noindent\textbf{Part 1:}	
	In what follows, we suppress the dependence of $Y_{t,T}$ and $\rho(u)$ on $\theta_0$.
	
	Define $$\Psi_{T}(u,\rho(u))=\sum_{t=1}^{T}w_{t}(u)q(Y_{t,T};f(Z_{t,T},\rho))$$
	where $w_{t}(u)=(Th)^{-1}K_{ut}$, $q(Y_{t,T};f(Z_{t,T},\rho))=(\partial/\partial\rho)g(Y_{t,T};f(Z_{t,T},\rho))$ and $K_{ut}=K\left((u-t/T)/h\right)$. By construction, 
	\begin{equation}
	\Psi_{T}(u,\hat{\rho}(u))=0.
	\label{eq:H_T_zero}
	\end{equation}	
	For an arbitrarily small number $\varepsilon>0$, let $\|\hat{\rho}(u)-\rho_{0}(u)\|\le\varepsilon$.
	Firstly, we focus on the existence of unique minimizer of $M_T(\rho)$ or solution to \eqref{eq:H_T_zero}. 
	We consider w.l.o.g. $D$ as a compact $d_{\rho}$-dimensional set
	in the vicinity of the origin. We divide $D$ into $N$ disjoint coverings
	of the form such that $B_{j}=\{\delta:\|\delta-\delta_{j}\|\le\epsilon_{T}\};j=1,...,N$ for some $\epsilon_T>0$ and $\epsilon_T=o(1)$.
	Since $D$ is compact, it can be covered by a finite number of $B_{j}$s
	for $j=1,...,N$ and $N\le c/\epsilon_{T}$. 
	\begin{flalign*}
	& \sup_{u\in\mathcal{U}}\sup_{\delta\in D}|\Psi_{T}(u,\rho_{0}(u)+\delta)-E\Psi_{T}(u,\rho_{0}(u)+\delta)|\\
	\le & \sup_{u\in\mathcal{U}}\max_{1\le j\le N}\sup_{D\cap B_{j}}\vert \Psi_{T}(u,\rho_{0}(u)+\delta)-\Psi_{T}(u,\rho_{0}(u)+\delta_{j})\vert\\
	& +\sup_{u\in\mathcal{U}}\max_{1\le j\le N}\vert \Psi_{T}(u,\rho_{0}(u)+\delta_{j})-E\Psi_{T}(u,\rho_{0}(u)+\delta_{j})\vert\\
	& +\sup_{u\in\mathcal{U}}\max_{1\le j\le N}\sup_{D\cap B_{j}}\vert E\Psi_{T}(u,\rho_{0}(u)+\delta_{j})-E\Psi_{T}(u,\rho_{0}(u)+\delta)\vert=\mathcal{S}_{1}+\mathcal{S}_{2}+\mathcal{S}_{3}.
	\end{flalign*}
	Due to Assumption \ref{ass:auxiliary}.(iii) and Assumption \ref{ass:AAker}.(i), 
	\[
	\mathcal{S}_{1}\le c_{q}C\|\delta-\delta_{j}\|=O_p(r_{T}).
	\]
	where $r_T=((m_T\log T)\big/Th)^{1/2}$.
	For $\mathcal{S}_{3}$, in a similar way, for some $\epsilon$, 
	\[
	P\left(\max_{1\le j\le N}\sup_{D\cap B_{j}}\vert E\Psi_{T}(u,\rho_{0}(u)+\delta_{j})-E\Psi_{T}(u,\rho_{0}(u)+\delta)\vert>\epsilon\right)=O(r_{T}).
	\]
	For $\mathcal{S}_{2}$, 
	\[
	\mathcal{R}_{T}=\sup_{u\in\mathcal{U}}\max_{1\le j\le N}|\Psi_{T}(u,\rho_{0}(u)+\delta_{j})-E\Psi_{T}(u,\rho_{0}(u)+\delta_{j})|
	\]
	\begin{equation}
	P(\mathcal{R}_{T}>\epsilon)\le\sum_{j=1}^{N}P(\sup_{u\in\mathcal{U}}|\Psi_{T}(u,\rho_{0}(u)+\delta_{j})-E\Psi_{T}(u,\rho_{0}(u)+\delta_{j})|>\epsilon)\label{eq:lemma2}
	\end{equation}
	Due to Lemma \ref{Lem:uniform_u}, for some finite numbers, $\epsilon$, $N$, $C_1$ and $C_2$,
	\[
	P(\mathcal{R}_{T}>\epsilon)\le N C_1 e^{-C_2 Th/m_T}.
	\]
	Note that $e^{-C_2Th/m}<T^{-C_2\tau}$ with $\tau\rightarrow\infty$ as
	$T\rightarrow\infty$. 
	This implies that 
	\[
	\sum_{T=1}^{\infty}P(r_T\mathcal{R}_{T}>\epsilon)<\infty.
	\]
	Combining all the above results with the Borel-Cantelli Lemma yields 
	\begin{equation}
	\sup_{u\in\mathcal{U}}\sup_{\rho\in\Gamma}P\left\{ |\Psi_{T}(u,\rho)-E\Psi_{T}(u,\rho)|\ge\epsilon\right\} \rightarrow0\text{ w.p.1}.
	\label{eq:stochastic_sup}
	\end{equation}
	By Assumptions \ref{ass:auxiliary}.(iii) and \ref{ass:auxiliary}.(iv), and Assumption \ref{ass:AAker}, for any $\delta \in \mathbb{R}^{d_{\rho}}$ which satisfies that $\Psi_T(u,\rho_0(u)+\delta)\neq 0$, 
	$\Psi_{0}(u,\rho_0+\delta)\neq 0$ so that \eqref{eq:stochastic_sup} implies it with probability approaching to zero for all $u\in\mathcal{U}$ as $T$ tends to infinity.
	For the uniform consistency, due to Assumption \ref{ass:auxiliary}.(iii), the strict monotonicity of $q(\cdot)$ at the pseudo-true value, $\rho_0$
	implies for $u\in\mathcal{U}$, and for $\iota$ a $d_\rho$ dimensional vector of ones,
	\[
	[\Psi_{0}(u,\rho(u)+\varepsilon\cdot \iota)]_j<0<[\Psi_{0}(u,\rho(u)-\varepsilon\cdot \iota)]_j,\text{ for }j=1,\dots,d_\rho.
	\]
	where $\Psi_{0}(u,\rho)$ is defined as in \eqref{eq:m_lim_obj} and where, for $X\in\mathbb{R}^{d_\rho}$, $[X]_j$ denotes the $j$-th element of the vector. 
	This implies that for all $u\in\mathcal{U}$, as $T\rightarrow\infty$,
	\begin{equation}
	[\Psi_{T}(u,\rho(u)+\varepsilon\cdot\iota)]_j<0<[\Psi_{T}(u,\rho(u)-\varepsilon\cdot\iota)]_j,\text{ for }j=1,\dots,d_\rho\label{eq:H_T}
	\end{equation}
	By construction, (\ref{eq:H_T}) means that for all $u\in\mathcal{U}$, w.p.1.,
	\[
	\rho_0(u)-\varepsilon\cdot\iota<\hat{\rho}(u)<\rho_0(u)+\varepsilon\cdot\iota
	\]
	due to \eqref{eq:H_T_zero} and $K(\cdot)>0$ in Assumption \ref{ass:AAker}. In combination with Assumption \ref{ass:auxiliary}.(v), for $\theta_0$, the first part of Theorem \ref{thm0} as in \eqref{eq:uniformity_m_est}  holds:
	\[
	\sup_{u \in \mathcal{U}}\left\Vert\hat{\rho}(u)-\rho_{0}(u)\right\Vert=\sup_{u \in \mathcal{U}}\left\Vert\hat{\rho}(u;\theta_0(u))-\rho_{0}(u;\theta_0(u))\right\Vert\rightarrow 0 \ \ \ \ \text{w.p.1}.
	\]
	\medskip
	
	\noindent\textbf{Part 2:} 
	We first show that, for a given $u$, the resulting auxiliary criterion function, based on the observed data, is uniformly well-behaved and close to its limit counterpart. By virtue of the stationary nature of the simulated data, and, in particular, Assumptions \ref{ass:uniform}.(ii) and \ref{ass:auxiliary}.(v), we show that the same conclusion remains for the simulated criterion function. Lastly, continuity of the simulated objective function, in $\rho$, and compactness of the parameter spaces, $\Theta$ and $\Gamma$, can be used to show that $\hat\rho(u;\theta)$ is uniformly close to $\rho_0(u;\theta)$ in $\theta$, for all $u\in\mathcal{U}$, which yields the result. When no confusion will result, we again suppress the dependence of observed quantities on $\theta_0$ and simulated quantities on $\theta\in\Theta$, respectively.
	
	
	Simplify notation by denoting $g(\rho):=g(Y_{t,T};f(Z_{t,T},\rho))$ and $g(\rho_{0}):=g(Y_{t,T};f(Z_{t,T},\rho_{0}))$ and define $p_{t}(\rho)=\left[g(\rho)-g(\rho_{0})\right]$. Consider 
	\begin{flalign*}
	& M_{T}(\rho)-M_{T}(\rho_{0})\\
	= & \frac{1}{Th}\sum_{t=1}^{T}g(Y_{t,T};f(Z_{t,T},\rho))K_{ut}-\frac{1}{Th}\sum_{t=1}^{T}g(Y_{t,T};f(Z_{t,T},\rho_{0}))K_{ut}\\
	= & E\left[g(\rho)-g(\rho_{0})\right]-E\left[g(\rho)-g(\rho_{0})\right]\\
	& +\frac{1}{Th}\sum_{t=1}^{T}\left(\left[g(\rho)-g(\rho_{0})\right]-E\left[g(\rho)-g(\rho_{0})\right]\right)K_{ut}\\
	& +\frac{1}{Th}\sum_{t=1}^{T}E\left[g(\rho)-g(\rho_{0})\right]K_{ut}\\
	= & \underbrace{Ep_{t}(\rho)}_{\mathcal{M}_{1}(\rho)}+\underbrace{Ep_{t}(\rho)\left[\frac{1}{Th}\sum_{t=1}^{T}K_{ut}-1\right]+\frac{1}{Th}\sum_{t=1}^{T}\left[p_{t}(\rho)-Ep_{t}(\rho)\right]K_{ut}}_{\mathcal{M}_{2}(\rho)}
	\end{flalign*}
	
	\noindent Firstly regarding $\mathcal{M}_{1}(\rho)$, due to Assumptions
	\ref{ass:auxiliary}.(i), (ii) and (vi), 
	with dominated convergence theorem,
	$\mathcal{M}_{1}(\rho)=E[g(\rho)-g(\rho_{0})]$ is continuous at
	$\rho_{0}(u)$, $\mathcal{M}_{1}(\rho)$ is nonstochastic and constant
	with respect to $\mathcal{E}$. For identifiability, due to Assumption
	\ref{ass:auxiliary}.(iv), $|\mathcal{M}_{1}(\rho)|>0$ for all $\rho\in\Gamma$ except
	for $\rho_{0}$, i.e. $|\mathcal{M}_{1}(\rho)|>0$ whenever $\rho\neq\rho_{0}(u)$.
	This and continuity of $\mathcal{M}_{1}(\rho)$ imply that $\mathcal{M}_{1}(\rho)$
	is bounded away from 0 whenever $\rho\in\mathcal{E}^{c}$, i.e. $\rho$
	is outside of a neighborhood of $\rho_{0}(u)$. Furthermore, by compactness
	of $\Gamma$ and continuity, $\sup_{\rho\in\Gamma}|\mathcal{M}_{1}(\rho)|<\infty$.
	
	\noindent Meanwhile, with respect to $\mathcal{M}_{2}(\rho)$, we
	have two components. Firstly, for the first term of $\mathcal{M}_{2}$,
	\[
	\sup_{\rho\in\Gamma}\left|Ep_{t}(\rho)\left[\frac{1}{Th}\sum_{t=1}^{T}K_{ut}-1\right]\right|\stackrel{p}{\rightarrow}0
	\]
	since, as $T\rightarrow\infty$, $\frac{1}{Th}\sum_{t=1}^{T}K_{ut}\rightarrow1$
	and $\sup_{\rho\in\Gamma}\left|\mathcal{M}_{1}(\rho)\right|<\infty$
	as mentioned previously. For the second term of $\mathcal{M}_{2}(\rho)$,
	we need to show
	\[
	\sup_{\rho\in\Gamma}\left|\frac{1}{Th}\sum_{t=1}^{T}\left[p_{t}(\rho)-Ep_{t}(\rho)\right]\right|\stackrel{p}{\rightarrow}0.
	\]
	We discuss two cases: 1) middle part 2) tail part. For some constant
	$C<\infty$, let us define $p_{t}^{\ast}(\rho)=p_{t}(\rho)1(|p_{t}(\rho)|\le C)$
	where $1(\cdot)$ is the indicator function and $p_{t}^{\ast\ast}(\rho)=p_{t}(\rho)1(|p_{t}(\rho)|>C)$
	or $p_{t}^{\ast\ast}(\rho)=p_{t}(\rho)-p_{t}^{\ast}(\rho)$. 
	\begin{flalign*}
	& E\left|\frac{1}{Th}\sum_{t=1}^{T}K_{ut}\left\{ p_{t}(\rho)-Ep_{t}(\rho)\right\} \right|\\
	\le & E\left|\frac{1}{Th}\sum_{t=1}^{T}K_{ut}\left[p_{t}^{\ast}(\rho)-Ep_{t}^{\ast}(\rho)\right]\right|+E\left|\frac{1}{Th}\sum_{t=1}^{T}K_{ut}\left[p_{t}^{\ast\ast}(\rho)-Ep_{t}^{\ast\ast}(\rho)\right]\right|
	\end{flalign*}
	For any fixed $\rho$, 
	\[
	E\left|\frac{1}{Th}\sum_{t=1}^{T}K_{ut}\left[p_{t}^{\ast\ast}(\rho)-Ep_{t}^{\ast\ast}(\rho)\right]\right|\le\frac{2}{Th}\sum_{t=1}^{T}|K_{ut}|E|p_{t}^{\ast\ast}(\rho)|
	\]
	which can be arbitrarily small for $C$ and $T$ large enough irrespective
	of $\rho$.
	
	For some constant $0<J<C$ such that data is selected via Kernel $(u-t/T)/h\le J$,
	\begin{flalign*}
	& E\left|\frac{1}{Th}\sum_{t=1}^{T}K_{ut}\left[p_{t}^{\ast}(\rho)-Ep_{t}^{\ast}(\rho)\right]\right|\\
	= & E\left|\frac{1}{Th}\sum_{|t-uT|\le JTh}^{T}K_{ut}\left[p_{t}^{\ast}(\rho)-Ep_{t}^{\ast}(\rho)\right]+\frac{1}{Th}\sum_{|t-uT|>JTh}^{T}K_{ut}\left[p_{t}^{\ast}(\rho)-Ep_{t}^{\ast}(\rho)\right]\right|\\
	\le & E\left|\frac{1}{Th}\sum_{|t-uT|\le JTh}^{T}K_{ut}\left[p_{t}^{\ast}(\rho)-Ep_{t}^{\ast}(\rho)\right]\right|+\frac{2C}{Th}\sum_{|t-uT|>JTh}^{T}\left|K_{ut}\right|
	\end{flalign*}
	The second term tends to zero as $T\rightarrow\infty$. For the first
	term, 
	\begin{flalign*}
	& E\left[\frac{1}{Th}\sum_{|t-s|\le JTh}^{T}K_{ut}\left[p_{t}^{\ast}(\rho)-Ep_{t}^{\ast}(\rho)\right]\right]^{2}\\
	\le & \frac{C^{2}}{T^{2}h^{2}}\left\{ \sum_{|t-uT|\le JTh}K^2_{ut}+\sum_{|t-uT|\le JTh}\sum_{|s-uT|\le JTh;s\neq t}\left|K_{ut}K_{st}\right|\phi(|t-s|)\right\} \\
	= & O\left(\frac{C^{2}}{Th}\sum_{j\le Th}\phi(j)\right)
	\end{flalign*}
	where $\phi(\cdot)$ is the $\phi$-mixing coefficient defined as in Assumption \ref{ass:uniform}.(i). Due to Assumption \ref{ass:uniform}.(i),
	the term tends to zero in probability for each fixed $\rho\in\Gamma$ and consequently, $\sup_{\rho\in\Gamma}\left|\mathcal{M}_{2}(\rho)\right|\stackrel{p}{\rightarrow}0$. 
	
	From Assumption \ref{ass:uniform}.(ii), it can be directly verified that the above result follows if we replace $\{Y_{t,T}\}$, $\{Z_{t,T}\}$ and $K_{ut}$ in the above with the simulated counterparts $\{\tilde{y}_{u,t}\}$, $\{\tilde{z}_{u,t}\}$ and $1$, respectively (and for any $\theta\in\Theta$). From this we conclude, with obvious notations for this simulated counterpart, for any fixed $\theta\in\Theta$ and any fixed $u\in\mathcal{U}$,
		$$
		\tilde{M}_T(\rho,\theta)-\tilde{M}_T(\rho_0;\theta)={E}\left[\tilde{g}(\rho,\theta)-\tilde{g}(\rho_0,\theta)\right]+o_p(1).
		$$where $\tilde{g}(\rho,\theta)=g[\tilde{y}_{u,t};f(\tilde{z}_{u,t},\rho)]$. Moreover, due to Assumption \ref{ass:auxiliary}.(iv), the right hand side of the above satisfies, uniformly in $\theta$, $\sup_{\rho\in\Gamma}|{E}\left[\tilde{g}(\rho,\theta)-\tilde{g}(\rho_0,\theta)\right]|>0$, so that, by Assumption \ref{ass:auxiliary}.(iv) applied to the simulated data, we can conclude that the right hand side is uniquely minimized at $\rho_0(\cdot;\theta)$. The above pointwise convergence, the continuity of $\tilde{M}_T(\rho,\theta)$ in $\rho$, and the compactness of $\Theta$ and $\Gamma$, allows us to conclude, via the usual equicontinuity arguments (Assumption \ref{ass:auxiliary}.(v)), that $$\sup_{\theta\in\Theta}|\tilde{M}_T(\rho,\theta)-{E}\left[\tilde{g}(\rho,\theta)\right]|=o_p(1).$$
		
		Now, using	continuity of $\theta\mapsto\rho_0(\cdot;\theta)$, Assumption \ref{ass:auxiliary}.(vi), conclude that, for any $\delta>0$ there exists some $\varepsilon>0$ such that
		$$\sup_{\theta\in\Theta}\|\rho-\rho_0(u;\theta)\|\geq\delta\implies \sup_{\theta\in\Theta}|{E}\left[\tilde{g}(\rho,\theta)-\tilde{g}(\rho_0,\theta)\right]|>\varepsilon.$$ The remainder of the result follows the same lines as Theorem 5.7 in \cite{van2000asymptotic}, and hence is omitted.
\end{proof}

In what follows, we provide Lemma \ref{Lem:uniform_u} and its proof. For the proof of Lemma  \ref{Lem:uniform_u}, we need Lemma \ref{lem:uniformity}.

\begin{lemma}\label{lem:uniformity}
	Let $\{W_{t,T}\}$ be a triangular array such that 
	\[
	EW_{t,T}=0
	\]
	with $\left|W_{t,T}\right|\le d$ and $E\left|W_{t,T}\right|\le\delta$
	and $EW_{t,T}^{2}\le D$. 
	$\{W_{t,T}\}$ are also $\phi$-mixing and we denote $\phi(k)$
	as the $\phi$-mixing coefficient such that $\tilde{\phi}(m)=\sum_{j=1}^{m}\phi(j)$.
	Let there exist an increasing sequence $m_T:T\in \mathbb{N}$ of positive integers such that
	\begin{equation}
	\exists C<\infty: T\phi(m_T)/m_T\le C, 1\le m_T\le T, \forall T\in \mathbb{N}.\label{eq:m_T}
	\end{equation}
	Then, for any positive number $\epsilon$ and $c$, we have
	\[
	P\text{\ensuremath{\left(\left|\sum_{t=1}^{T}W_{t,T}\right|\ge\epsilon\right)}\ensuremath{\ensuremath{\le c_{1}\exp\left(-c\epsilon+c^2c_{2}T\right)}}}
	\]
	where $\phi(m_T)\rightarrow0$ as $m_T\rightarrow0$, $c_1=2e^{\frac{3T}{m_T}e^{1/2}\phi(m)}$ and $c_2=6c^2[D+4\delta d\tilde{\phi}(m_T)]$.
\end{lemma}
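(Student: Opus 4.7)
The plan is to combine a Chernoff-type bound with a block decomposition adapted to $\phi$-mixing, yielding a Bernstein-type exponential inequality whose constants reflect the imposed mixing-rate condition \eqref{eq:m_T}.

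First I would set $S_T := \sum_{t=1}^{T} W_{t,T}$ and apply Markov's inequality to $\exp(\pm c S_T)$, obtaining $P(|S_T|\ge \epsilon) \le e^{-c\epsilon}\big[Ee^{cS_T} + Ee^{-cS_T}\big]$. The factor of $2$ in $c_1$ comes from absorbing the two tails into a single maximum. This reduces the proof to an upper bound of the form $Ee^{cS_T} \le c_1' \exp(c^2 c_2 T)$ with $c_1' = e^{3(T/m_T)e^{1/2}\phi(m_T)}$.

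Next I would partition $\{1,\ldots,T\}$ into $k:=\lceil T/m_T\rceil$ consecutive blocks and write $S_T = \sum_{j=1}^{k}U_j$, where $U_j$ is the sum over the $j$th block. The heart of the argument is a $\phi$-mixing decoupling: using the standard inequality $|E(fg)-E(f)E(g)|\le 2\phi(r) E|f|$ valid when $f$ is measurable with respect to the past $\sigma$-field and $g$ is bounded and measurable with respect to the future $\sigma$-field separated by $r$, I would inductively show that
\[
E\exp\!\left(c\sum_{j=1}^{k}U_j\right) \le \prod_{j=1}^{k}E\exp(cU_j) \,+\, \text{error},
\]
where the accumulated error, after applying the trivial envelope $|e^{cU_j}|\le e^{cm_T d}$ and then an exponential comparison, contributes the factor $c_1' = \exp\big(3(T/m_T)e^{1/2}\phi(m_T)\big)$; the $T/m_T$ counts the number of blocks, and the assumption $T\phi(m_T)/m_T \le C$ keeps this factor $O(1)$.

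For the intra-block MGFs, I would perform a Bernstein-type Taylor expansion, using the elementary bound $e^{cx}\le 1+cx+\tfrac{c^2 x^2}{2}e^{c|x|}$ together with the boundedness $|W_{t,T}|\le d$, to reduce the problem to controlling $\mathrm{Var}(U_j)$. Here the covariances inside a block are nonzero under mixing, so I would invoke the $\phi$-mixing covariance inequality $|\mathrm{Cov}(W_s,W_t)|\le 2\phi(|s-t|)\, d\,\delta$, producing
\[
\mathrm{Var}(U_j) \,\le\, m_T D + 4 m_T\, d\,\delta\,\tilde{\phi}(m_T) \,=\, m_T\bigl[D+4\delta d\,\tilde{\phi}(m_T)\bigr].
\]
Combining these per-block bounds over the $k\approx T/m_T$ blocks gives $\prod_{j}E\exp(cU_j) \le \exp\bigl(6 c^{2} T\,[D+4\delta d\,\tilde{\phi}(m_T)]\bigr)$, identifying $c_2 = 6[D+4\delta d\,\tilde{\phi}(m_T)]$. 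Assembling the pieces produces the stated bound.

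The main obstacle, and the step that will require the most care, is the $\phi$-mixing decoupling in the second paragraph: the iterative application of the coupling inequality must be arranged so that the mixing error enters \emph{additively in the exponent} rather than as a growing multiplicative prefactor, which is the only way to preserve the exponential concentration. I would handle this through a standard Volkonskii--Rozanov / Bradley type lemma for bounded variables (together with the truncation arithmetic that produces the $e^{1/2}$ inside $c_1$), after which the remaining computations--variance bound and Chernoff optimization in $c$--are routine.
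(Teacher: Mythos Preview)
Your overall strategy---Chernoff bound, block decomposition, $\phi$-mixing decoupling of the block MGFs, and a covariance bound giving $\mathrm{Var}(U_j)\le m_T[D+4\delta d\,\tilde\phi(m_T)]$---matches the paper's proof closely, and your identification of the constants is correct. However, there is a structural gap in your block construction that would make the decoupling step fail as written.

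You partition $\{1,\dots,T\}$ into $k=\lceil T/m_T\rceil$ \emph{consecutive} blocks $U_1,\dots,U_k$. But then adjacent blocks $U_j$ and $U_{j+1}$ are separated by no gap at all, so the $\phi$-mixing inequality you quote gives a factor $\phi(1)$ (or $\phi(0)$), not $\phi(m_T)$. The whole point of the mixing-rate condition \eqref{eq:m_T} is that $\phi(m_T)$ is small; $\phi(1)$ need not be, and the accumulated error $\exp\bigl(\text{const}\cdot (T/m_T)\phi(1)\bigr)$ would blow up. The paper avoids this via the classical Bernstein alternating-block trick: it writes $S=B_{1,n_0}+B_{2,n_0}$ where $B_{1,n_0}$ sums the odd blocks and $B_{2,n_0}$ the even blocks, so that \emph{within each sub-sum} consecutive blocks are separated by a full block of length $m$, and the decoupling inequality (Billingsley (20.28)) legitimately produces $\phi(m)$. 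The two sub-sums are then recombined via $E e^{cS}\le \tfrac12\bigl(E e^{2cB_{1,n_0}}+E e^{2cB_{2,n_0}}\bigr)$, which is also where the doubling in the exponent (and ultimately the constant $6$ in $c_2$) originates.

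A secondary point: the paper does not use your Bernstein remainder $e^{cx}\le 1+cx+\tfrac{c^2x^2}{2}e^{c|x|}$, which would introduce an uncontrolled factor $e^{cm_Td}$. Instead it imposes $cmd=1/4$ so that $|2cA_{j,k}|\le 1/2$ and then applies the cleaner bound $e^x\le 1+x+x^2$ for $|x|\le 1/2$, followed by $1+x\le e^x$. This is what keeps the constants in the stated form and produces the $e^{1/2}$ inside $c_1$.
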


\begin{proof}[Proof of Lemma \ref{lem:uniformity}]
	Define $S=\sum_{t=1}^{T}W_{t,T}$. Consider a number $n_{0}$ such
	that $2m(n_{0}-1)\le T\le2mn_{0}$ with $m=m_T$. For all $j=1,2$ and $k=1,...,n_{0}$,
	we consider $A_{j,k}=\sum_{t=t_{1}}^{t_{2}}W_{t,T}$ where $t_{1}=\inf[(2k+j-3)m+1,T]$
	and $t_{2}=\inf[t_{1}+m-1,T]$. Note that the size of block for $A_{j,k}$ is $m$. Then, 
	\begin{equation}
	S=B_{1,n_{0}}+B_{2,n_{0}}\label{eq:sum of A}
	\end{equation} where  
	$B_{j,k}=\sum_{t=1}^{k}A_{j,t}$ for $j=1,2$
	with $B_{j,0}=0$. By construction,
	for some constant $c$,
	\begin{align}
	E\exp\{cS\} & \le(E\exp\{2cB_{1,n_{0}}\}+E\exp\{2cB_{2,n_{0}}\})\big/2.\label{eq:decompse}
	\end{align}	
	From (\ref{eq:sum of A}), applying (20.28) in \citet[pp 171]{billingsley1968}, we have
	\begin{flalign}
	E\exp\{2cB_{j,k}\}= & E\exp\{2cB_{j,k-1}\}\exp\{2cA_{j,k}\}.\nonumber \\
	\le & E\exp\{2cB_{j,k-1}\}E\exp\{2cA_{j,k}\}+2E\exp\{2cB_{j,k-1}\}\|\exp\{2cA_{j,k}\}\|_{\infty}\phi(m)\label{eq:billingsley}
	\end{flalign}
	Setting $cmd=1/4$ yields
	\begin{equation}
	\left|2cA_{j,k}\right|\le2cmd=\frac{1}{2}\label{eq:abs A_inequal}
	\end{equation}
	This implies that since $e^{x}\le1+x+x^{2}$ for $|x|\le1/2$, 
	\[
	\exp\{2cA_{j,k}\}\le1+2cA_{j,k}+4c^{2}A_{j,k}^{2}.
	\]
	Moreover, from $1+x\le e^{x}$, $1+4c^{2}EA_{j,k}^{2}\le e^{4c^{2}EA_{j,k}^{2}}$.
	Combining the above two inequalities,
	\begin{equation}
	Ee^{2cA_{j,k}}\le e^{4c^{2}EA_{j,k}^{2}}.\label{eq:inequality_A1}
	\end{equation}
	From the definition of $A_{j,k}$,
	\begin{align*}
	EA_{j,k}^{2}= & \sum_{t=t_{1}}^{t_{2}}EW_{t,T}^{2}+\sum_{t=t_{1}}^{t_{2}}\sum_{s=t_{1},s\neq t}^{t_{2}}EW_{t,T}W_{s,T}\\
	\le & m[D+4\delta d\tilde{\phi}(m)].
	\end{align*}
	where the inequality comes from $\left|EW_{t,T}W_{s,T}\right|\le2\delta d\phi(|t-s|)$.
	With this and (\ref{eq:inequality_A1}), 
	\[
	Ee^{2cA_{j,k}}\le e^{4c^{2}EA_{j,k}^{2}}\le e^{4c^{2}mC}
	\]
	where $C=[D+4\delta d\tilde{\phi}(m)]$. In combination with (\ref{eq:billingsley})
	and (\ref{eq:abs A_inequal}), the inequality leads to 
	\begin{flalign*}
	Ee^{2cB_{j,k}} & \le[e^{4c^{2}mC}+2e^{1/2}\phi(m)]Ee^{2cB_{j,k-1}}\\
	& =e^{4c^{2}mC}[1+2e^{1/2-4c^{2}mC}\phi(m)]Ee^{2cB_{j,k-1}}\\
	& \le e^{4c^{2}mC}[1+2e^{1/2}\phi(m)]Ee^{2cB_{j,k-1}}.
	\end{flalign*}
	Iterating the same procedure yields 
	\[
	Ee^{2cB_{j,n_{0}}}\le e^{4c^{2}n_{0}mC}(1+2e^{1/2}\phi(m))^{n_{0}}
	\]
	Recalling that $n_{0}$ is chosen such that $2m(n_{0}-1)\le T\le2mn_{0}$,
	we set $n_{0}\le\frac{3T}{2m}$. From (\ref{eq:decompse}),
	\[
	E\exp\{cS\}\le c_{1}\exp\{c_2T\}
	\]
	where $c_{1}=[1+2e^{1/2}\phi(m)]^{\frac{3T}{2m}}=\exp\{\frac{3T}{2m}\log[1+2e^{1/2}\phi(m)]\}\le\exp\{\frac{3T}{m}e^{1/2}\phi(m)\}$ and $c_2=6c^2[D+4\delta d\tilde{\phi}(m)]$.
	This is due to the fact that $\forall x\ge0$, $\log(1+x)\le x$. Finally, due to Markov inequality,
	\[
	P(|S|>\epsilon)\le P(S>\epsilon)\le e^{-c\epsilon}Ee^{c|S|} \le 2e^{-c\epsilon}Ee^{cS}.
	\]
	This completes the proof. 
\end{proof}

\begin{lemma}
	\label{Lem:uniform_u}
	Under the Assumptions of Theorem \ref{thm0}, for some positive constants, $\epsilon$, $C_1$ and $C_2$, 
	\[
	P(\mathcal{R}_{T}>\epsilon)\le NC_1 e^{-C_2 \epsilon Th/m_T}.
	\]	
	where 
	\[
	\mathcal{R}_{T}=\max_{1\le j\le N}\sup_{u\in\mathcal{U}}|\Psi_{T}(u,\rho_{0}(u)+\delta_{j})-E\Psi_{T}(u,\rho_{0}(u)+\delta_{j})|
	\]
	with $\Psi_{T}(u,\rho(u))=\sum_{t=1}^{T}w_{t}(u)g(Y_{t,T};f(Z_{t,T},\rho))$.
	
\end{lemma}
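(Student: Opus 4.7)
The plan is to peel off the maximum over $j$ by a union bound, reduce the supremum over $u\in\mathcal{U}$ to a maximum over a polynomial grid via a Lipschitz argument, and then apply Lemma \ref{lem:uniformity} pointwise to obtain the desired exponential decay. Writing $\Delta_j(u):=\Psi_T(u,\rho_0(u)+\delta_j)-E\Psi_T(u,\rho_0(u)+\delta_j)$, the union bound gives $P(\mathcal{R}_T>\epsilon)\le\sum_{j=1}^N P(\sup_{u\in\mathcal{U}}|\Delta_j(u)|>\epsilon)$, so it suffices to control each summand by $C_1 e^{-C_2\epsilon Th/m_T}$.

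Next I would cover $\mathcal{U}=[\delta,1-\delta]$ with a grid $u_1,\dots,u_M$ of spacing $\eta$, and exploit the Lipschitz continuity of the kernel (Assumption \ref{ass:AAker}(i.d)) together with the uniform boundedness of $q$ (Assumption \ref{ass:auxiliary}(iii)) and the differentiability of $\rho_0(\cdot)$ implied by Assumptions \ref{ass:1} and \ref{ass:anormal}(vi). Differentiating $\Psi_T$ in $u$ brings out a factor $K'((u-t/T)/h)/(Th^2)$ and a chain-rule contribution from $\rho_0(u)$; using $|K'|\le C$, $|q|\le c_q$, and $\|\partial\rho_0/\partial u\|\le C$, one obtains $|\Delta_j(u)-\Delta_j(u')|\le C h^{-2}|u-u'|$. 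Choosing $\eta=\epsilon h^2/(2C)$ keeps the fluctuation within each cell below $\epsilon/2$, so
\[
P\bigl(\sup_{u\in\mathcal{U}}|\Delta_j(u)|>\epsilon\bigr)\le M\,\max_{1\le k\le M}P\bigl(|\Delta_j(u_k)|>\epsilon/2\bigr),
\]
with $M=O(1/(h^2\epsilon))$, which contributes only a polynomial-in-$T$ factor.

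For each fixed $(u_k,\delta_j)$ I would apply Lemma \ref{lem:uniformity} to the triangular array $W_{t,T}=(Th)^{-1}K_{u_k t}\bigl[q_t(\rho_0(u_k)+\delta_j)-Eq_t(\rho_0(u_k)+\delta_j)\bigr]$. Assumption \ref{ass:auxiliary}(iii) gives $|W_{t,T}|\le d:=2c_q\|K\|_\infty/(Th)$; the variance bound $EW_{t,T}^2\le D\le C K_{u_k t}^2/(Th)^2$ is supported on the $O(Th)$ indices with $K_{u_k t}\neq 0$, thanks to Assumption \ref{ass:AAker}(i.b); and the $\phi$-mixing condition of Assumption \ref{ass:uniform}(i), $T\phi(m_T)/m_T\le C$, keeps the prefactor $c_1=2\exp(3Te^{1/2}\phi(m_T)/m_T)$ uniformly bounded. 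With the choice $c=1/(4m_Td)$ from the proof of Lemma \ref{lem:uniformity}, the linear term in the exponent becomes $c\epsilon=\epsilon Th/(8 c_q\|K\|_\infty m_T)$, producing the target rate $e^{-C_2\epsilon Th/m_T}$; the polynomial factor $M$ and the grid factor $N$ are absorbed into $C_1$ after using that $e^{-C_2\epsilon Th/m_T}$ decays super-polynomially under Assumption \ref{ass:AAker}(ii).

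The main obstacle is verifying that the quadratic term $6c^2 T[D+4\delta d\tilde{\phi}(m_T)]$ in the exponent of Lemma \ref{lem:uniformity} is dominated by $c\epsilon$. This is where the compact-support (or rapid-decay) structure of $K$ is essential: naively bounding every $K_{u_k t}^2$ by a constant would yield an effective variance of order $T/m_T^2$ instead of $1/m_T$, destroying the bound. Properly book-keeping the $O(Th)$ non-zero summands, together with the mixing summability $\tilde{\phi}(m_T)=O(1)$ and the bandwidth condition $Th/(m_T\log T)\to\infty$, keeps the quadratic term of smaller order than $c\epsilon$ for fixed $\epsilon>0$, yielding the stated inequality with constants $C_1,C_2$ possibly depending on $\epsilon$ but independent of $T$ and $j$.
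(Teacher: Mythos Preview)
Your proposal follows the same core strategy as the paper: apply the exponential inequality of Lemma~\ref{lem:uniformity} to the centred kernel-weighted sums with the free constant chosen as $c=1/(4m_Td)\propto Th/m_T$ (since $d=O(1/(Th))$), and then union-bound over the $N$ values of $\delta_j$. The paper's proof is extremely terse on this point: it defines $S(\delta_j)=\Psi_T(u,\rho_0(u)+\delta_j)-E\Psi_T(u,\rho_0(u)+\delta_j)$, invokes Lemma~\ref{lem:uniformity} with ``$c$ proportional to $Th/m_T$'', absorbs the quadratic exponent term into a correction $\kappa(d_j,\tilde\phi(m_T)/m_T)$, and then applies the union bound over $j$---without ever discretising over $u\in\mathcal{U}$.

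The substantive difference is your explicit covering argument in $u$: you grid $\mathcal{U}$ at mesh $\eta=O(\epsilon h^2)$, use Lipschitz continuity of $K$ and uniform boundedness of $q$ to control the oscillation between grid points, and absorb the resulting polynomial factor $M=O(1/(\epsilon h^2))$ into $C_1$ using the super-polynomial decay of the exponential. The paper's proof does not carry out this step; it effectively treats the bound from Lemma~\ref{lem:uniformity} as already uniform in $u$ (the input constants $d_j$, $D$, $\delta$ are indeed uniform in $u$, but that alone does not bound the supremum of a random process). In that sense your version is more complete. One minor caveat: you invoke Assumption~\ref{ass:anormal}(vi) for smoothness of $\rho_0(\cdot)$, but Lemma~\ref{Lem:uniform_u} is stated only under the hypotheses of Theorem~\ref{thm0} (Assumptions~\ref{ass:1}--\ref{ass:AAker}); you should instead obtain the needed Lipschitz behaviour of $u\mapsto\Psi_T(u,\rho_0(u)+\delta_j)$ directly from Assumption~\ref{ass:AAker}(i.d) and Assumption~\ref{ass:auxiliary}(iii), which suffices since the dominant variation in $u$ comes from the kernel factor.
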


\begin{proof}[Proof of Lemma \ref{Lem:uniform_u}]
	Let $S(\delta_{j}):=\sum_{t=1}^{T}W_{t,T}(\delta_{j})=\Psi_{T}(u,\rho(u)+\delta_{j})-E\Psi_{T}(u,\rho(u)+\delta_{j})$ where $|W_{t,T}(\delta_{j})|\le d_j$.
	Under the Assumptions of Theorem \ref{thm0}, $g(\cdot)$ is bounded and the Kernel function satisfies boundedness and Lipschitz continuity. Due to Assumption \ref{ass:uniform}, there exists $m_T$ satisfying \eqref{eq:m_T}. Setting a constant $c$ proportional to $Th/m_T$, applying Lemma \ref{lem:uniformity} yields that, for some finite positive constants $C_1$ and $C_2$,
	\begin{equation}
	\sup_{\delta_{j},j=1,...,N}P\left(\left|S(\delta_{j})\right|>\epsilon\right)\le \sup_{\delta_{j},j=1,...,N}C_{1}e^{-C_{2}(\epsilon-\kappa(d_j,\tilde{\phi}(m_T)/m_T))Th/m_{T}}\le C_{1}e^{-C_{2}\epsilon Th/m_T}\label{eq:sup_p_S}
	\end{equation}
	where $\kappa(d_j,\tilde{\phi}(m_T)/m_T)$ is proportional to $c_2$ in Lemma \ref{lem:uniformity}.
	
	Note that
	\[
	P(\max_{1\le j\le N}\left|S(\delta_{j})\right|>\epsilon)\le\sum_{1\le j\le N}P\left(\left|S(\delta_{j})\right|>\epsilon\right)\le N\sup_{\delta_{j},j=1,...,N}P\left(\left|S(\delta_{j})\right|>\epsilon\right).
	\]
	From (\ref{eq:sup_p_S}), 
	\[
	P(\max_{1\le j\le N}\left|S(\delta_{j})\right|>\epsilon)\le NC_{1}e^{-C_{2}\epsilon Th/m_T},
	\]
	which completes the proof.
\end{proof}

\subsection{Proof of Corollary \ref{Lem:uniform_theta}}
\begin{proof}
By construction, $\sup_{u\in\mathcal{U};|u-t/T|\le T^{-1}}|f(Z_{t,T};\rho_0(t/T))-f(Z_{t,T};\rho_0(u))|=O(T^{-1})$ and therefore 
$Y_{t,T}=f(Z_{t,T};\rho(t/T))+\eta_{t}=f(Z_{t,T};\rho(u))+\eta_{t}+O(T^{-1})$. In what follows, $O(T^{-1})$ is suppressed. 

Define $p_{t}(\rho)=f(Z_{t,T};\rho)-f(Z_{t,T};\rho_{0})$
for a given $u\in\mathcal{U}$ where $\rho:=\rho(u) \in \Gamma $ and $\rho_{0}:=\rho_{0}(u)$.
Then, we have
\[
M_{T}(\rho)-M_{T}(\rho_{0})=\mathcal{M}_{1}(\rho)+\mathcal{M}_{2}(\rho)
\]
where, for a given $u\in \mathcal{U}$ such that $|u-t/T|\le T^{-1}$,
\begin{flalign*}
\mathcal{M}_{1}(\rho)= & E\left\{ f(Z_{t,T};\rho(u))-f(Z_{t,T};\rho_0(u)\right\} ^{2}=E[p_{t}(\rho)]^{2}\\
\mathcal{\mathcal{M}}_{2}(\rho)= & \mathcal{M}_{1}(\rho)\left(\frac{1}{Th}\sum_{t=1}^{T}K_{ut}-1\right)+\frac{1}{Th}\sum_{t=1}^{T}K_{ut}\left\{ p_{t}^{2}(\rho)-\mathcal{M}_{1}(\rho)\right\} \\
& -\frac{2}{Th}\sum_{t=1}^{T}K_{ut}\eta_{t}p_{t}(\rho).
\end{flalign*}		

Once noting that the absolute summability implies the square summability, 
everything else is analogous to the proof of Theorem \ref{thm0}. This completes the proof.
\end{proof}

\subsection{Proof of Theorem \ref{thm1}}
\begin{proof} The proof is similar to others found in the literature on semiparametric estimation, see. e.g., \cite{chen2003estimation} (pg 1604), and in particular is similar to Lemma 1 in \cite{frazier2019simple} (pg, 136-137). 

From the definitions of $\rho_0(u)$ and $\rho_0(u,;\theta)$, and the injectivity and continuity of $\rho_0(\cdot;\theta)$, for all $\delta>0,$ there exists some $\epsilon>0$ such that, if $\sup_{u}\|\theta-\theta_0(u)\|\geq\delta$, then $$\sup_{u}\{Q_0[u,\theta_0(u)]-Q_0[u,\theta]\}\geq\epsilon.$$ Applying this fact we see that 
\begin{flalign}\label{new1}
\text{P}\left(\sup_{u}\|\hat{\theta}(u)-\theta_0(u)\|\geq\delta\right)\leq\text{P}\left(\sup_{u}\{Q_0[u,\theta_0(u)]-Q_0[u,\hat{\theta}(u)]\}\geq\epsilon\right)
\end{flalign}and the results follows if the right hand side of the above is $o_{p}(1)$. 

To this end, first note that, by the definitions of $Q_{T}(u,\theta)$ and $Q_{0}(u,\theta)$,
\begin{flalign*}
\sup_{u\in\mathcal{U}}\sup_{\theta\in\Theta}\big{|}Q_{T}(u,\theta)-Q_{0}(u;\theta)\big{|}&=\sup_{u\in\mathcal{U}}\sup_{\theta\in\Theta}\big{|}- \|\hat{\rho}(u)-\hat{\rho}(u;\theta)\|+\|{\rho}_{0}(u)-{\rho}_{0}(u;\theta)\|\big{|}\\&\leq \sup_{u\in\mathcal{U}}\sup_{\theta\in\Theta}\| \hat{\rho}(u)-\hat{\rho}(u;\theta) -{\rho}_{0}(u)+{\rho}_{0}(u;\theta)\|\\&\leq{\sup_{u\in\mathcal{U}}\| \hat{\rho}(u) -{\rho}_{0}(u)\|}+\sup_{u\in\mathcal{U}}\sup_{\theta\in\Theta}\|{\rho}_{0}(u;\theta)-\hat{\rho}(u;\theta)\|,
\end{flalign*}where the second inequality follows from the reverse triangle inequality and the third from the regular triangle inequality. The uniform convergence now follows from the results in Theorem \ref{thm0}. 

Now, we show that for any $\tau>0$
	$$\lim_{T\rightarrow\infty}\text{P}\left(\sup_{u\in\mathcal{U}}\left\{Q_{0}[u,{\theta}_{0}(u)]-Q_{0}[u,\hat{\theta}(u)]\right\}<\tau\right)=1.$$ 
	From the definition of $\hat{\theta}(u)$, for every $u\in\mathcal{U}$, 
	\begin{equation*}
	Q_{T}[u,\hat{\theta}(u)]\geq Q_{T}[u,\theta_{0}(u)],
	\end{equation*}and 
	\begin{equation}\label{one}
	\sup_{u\in\mathcal{U}}\left\{ Q_{T}[u,\theta_{0}(u)]-Q_{T}[u,\hat{\theta}(u)]\right\}\leq0. 
	\end{equation}Moreover, by uniform convergence of $Q_{T}[u,\theta]$ to $Q_{0}[u,\theta]$ we have, \begin{flalign}
	\lim_{T\rightarrow\infty}\text{P}\bigg{(}\sup_{u\in\mathcal{U}}&\left\{ Q_{T}[u,\hat{\theta}(u)]-Q_{0}[u,\hat{\theta}(u)]\right\}<\tau/2\bigg{)}=1\label{two}\\\lim_{T\rightarrow\infty}\text{P}\bigg{(}\sup_{u\in\mathcal{U}}&\left\{ Q_{0}[u,{\theta}_{0}(u)]-Q_{T}[u,{\theta}_{0}(u)]\right\}<\tau/2\bigg{)}=1\label{three}
	\end{flalign}
	Now, consider
	\begin{flalign*}
	\sup_{u}&\left\{Q_{0}[u,\theta_{0}(u)]-Q_{0}[u,\hat{\theta}(u)]\right\}=\sup_{u}\left\{Q_{0}[u,\theta_{0}(u)]-Q_{0}[u,\hat{\theta}(u)]+Q_{T}[u,{\theta}_{0}(u)]-Q_{T}[u,{\theta}_{0}(u)]\right\}\\\leq&\sup_{u}\left\{Q_{0}[u,\theta_{0}(u)]-Q_{T}[u,{\theta}_{0}(u)]\right\}+\sup_{u}\left\{Q_{T}[u,\theta_{0}(u)]-Q_{0}[u,\hat{\theta}(u)]+Q_{T}[u,\hat{\theta}(u)]-Q_{T}[u,\hat{\theta}(u)]\right\}\\\leq& \sup_{u}\left\{Q_{0}[u,\theta_{0}(u)]-Q_{T}[u,{\theta}_{0}(u)]\right\}+\sup_{u}\left\{Q_{T}[u,\theta_{0}(u)]-Q_{T}[u,\hat{\theta}(u)]\right\}+\sup_{u}\left\{Q_{T}[u,\hat{\theta}(u)]-Q_{0}[u,\hat{\theta}(u)]\right\}\\\leq& \sup_{u}\left\{Q_{0}[u,\theta_{0}(u)]-Q_{T}[u,{\theta}_{0}(u)]\right\}+\sup_{u}\left\{Q_{T}[u,\hat{\theta}(u)]-Q_{0}[u,\hat{\theta}(u)]\right\}
	\end{flalign*}  where the last inequality comes from equation \eqref{one}. Therefore, from the uniform convergence in \eqref{two} and \eqref{three},
	\begin{equation}\label{goal1}
	\lim_{T\rightarrow\infty}\text{P}\bigg{(}\sup_{u\in\mathcal{U}}\left\{Q_{0}[u,\theta_{0}(u)]-Q_{0}[u,\hat{\theta}(u)]\right\}<\tau\bigg{)}=1.
	\end{equation} 
	
	The result then follows by taking $\tau=\epsilon$ in \eqref{new1}.

\end{proof}

\subsection{Proof of Theorem \ref{thm2}}
We break the proof down into two parts: first, we derive the asymptotic expansion of the estimating equations based on the observed estimator and derives the order of these expansions; we then use this result to deduce the stated result.

\medskip 

\noindent\textbf{Part 1:} To simplify notation, in what follows we take $q(Y_{t,T},\rho(u))=q[Y_{t,T};f(Z_{t,T},\rho(u))]$. By the definition of $\hat{\rho}(u)$,
\begin{flalign*}
0=&\frac{1}{Th}\sum_{t=1}^{T}q(Y_{t,T},\hat{\rho}(u))K\left(\frac{u-t/T}{h} \right)\\=&\frac{1}{Th}\sum_{t=1}^{T}q(Y_{t,T}, {\rho}_{0}(u))K\left(\frac{u-t/T}{h} \right)+\left\{\frac{1}{Th}\sum_{t=1}^{T}\frac{ \partial q(Y_{t,T}, \rho_{0}(u))}{\partial\rho'}K\left(\frac{u-t/T}{h} \right)\right\}(\hat{\rho}(u)-\rho_{0}(u))\\&+O_{p}(\|\hat{\rho}(u)-\rho_{0}(u)\|^2),
\end{flalign*}
where $\partial q(x_0)\big{/}\partial x:=\partial q(x)\big{/}\partial x\big{|}_{x=x_0}$. It can be rewritten as 
\begin{flalign}
0= & \underbrace{\frac{1}{Th}\sum_{t=1}^{T}q(Y_{t,T},\rho_{0}(u))K\left(\frac{u-t/T}{h}\right)}_{\eqref{tse1}.1}+\underbrace{\frac{\partial \Psi_{0}(\rho_{0}(u);u)}{\partial\rho'}(\hat{\rho}(u)-\rho_{0}(u))}_{\eqref{tse1}.2}+O_{p}(\|\hat{\rho}(u)-\rho_{0}(u)\|^{2})\nonumber\\
& +\underbrace{\left\{ \frac{1}{Th}\sum_{t=1}^{T}\frac{\partial q(Y_{t,T},\rho_{0}(u))}{\partial\rho'}K\left(\frac{u-t/T}{h}\right)-\frac{\partial \Psi_{0}(\rho_{0}(u);u)}{\partial\rho'}\right\} }_{\eqref{tse1}.3}(\hat{\rho}(u)-\rho_{0}(u))\label{tse1}
\end{flalign}
where $\partial \Psi_{0}(\rho_{0}(u);u)\big/\partial\rho'=\lim_{T\rightarrow\infty}(Th)^{-1}\sum_{t=1}^{T}E\left[\partial q(Y_{t,T},\rho_{0}(u))/\partial\rho'\right]K((u-t/T)/h)$.

Firstly, the term, \eqref{tse1}.3 is $o_p(1)$ given that  for each $u\in\mathcal{U}$ and $t\in\mathbb{N}\le T$ such that $|u-t/T|<T^{-1}$,
\begin{equation}
\frac{1}{Th}\sum_{t=1}^{T}\frac{\partial q(Y_{t,T},\rho_{0}(u))}{\partial\rho'}K\left(\frac{u-t/T}{h}\right)=\frac{\partial\Psi_{0}(\rho_{0}(u);u)}{\partial\rho'}+o_{p}(1),
\label{eq:localmatch}
\end{equation}
due to local stationarity of ${Y_{t,T}}$ and Assumptions \ref{ass:uniform}-\ref{ass:anormal}. 
To see this, for each $u_0=t_0/T$,
	\begin{flalign}
	\frac{1}{Th}\sum_{t=1}^{T}K\left(\frac{u_0-t/T}{h}\right)Z_{t,T} &= \frac{1}{Th} \sum_{k=-M}^{M}K\left(\frac{k}{Th}\right)Z_{k-t_0,T}\nonumber\\
	&=\underbrace{\frac{1}{Th}\sum_{k=-M}^{0}K\left(\frac{k}{Th}\right)Z_{k-t_0,T}}_{\eqref{eq:local_mom}.1}+\underbrace{\frac{1}{Th}\sum_{k=1}^{M}K\left(\frac{k}{Th}\right)Z_{k-t_0,T}}_{\eqref{eq:local_mom}.2}
	\label{eq:local_mom}
	\end{flalign}
	where $Z_{t,T}=\left[\frac{\partial q(Y_{t,T},\rho_{0}(u))}{\partial\rho'}-E\left[\frac{\partial q(Y_{t,T},\rho_{0}(u)))}{\partial\rho'}\right]\right]$, and $M=ThL$ with $L$ being the bound of support of a Kernel function as in Assumption \ref{ass:AAker}. Regarding \eqref{eq:local_mom}.2,
	\begin{flalign*}
	\left|\frac{1}{Th}\sum_{k=1}^{M}K\left(\frac{k}{Th}\right)Z_{k-t_0,T}\right|=&\left|\frac{1}{Th}\sum_{k=1}^{M-1}\left[K\left(\frac{k}{Th}\right)-K\left(\frac{k+1}{Th}\right)\right]\mathcal{S}_k
	+\frac{1}{Th}K\left(\frac{M}{Th}\right)\mathcal{S}_M\right|\\
	\le & \frac{C}{Th}\sup_{k\le M} |\mathcal{S}_k| \stackrel{p}{\rightarrow} 0
	\end{flalign*}
	where $\mathcal{S}_k=\sum_{i=1}^{k} Z_{i-t_0,T}$ and a generic constant $C$. The first equality comes from summation by parts and the inequality is due to Assumption \ref{ass:AAker}, i.e. $K(\cdot)$ is of bounded variation. The convergence to zero in probability is ensured by the ergodic theorem. Applying the same argument to the term in equation \eqref{eq:local_mom}.1., the result in equation \eqref{eq:localmatch} follows.
	It is worth noting that similar arguments to \eqref{eq:localmatch} are used under various modeling set-ups. For instance, see Lemma A.5 in \citet{DSR06}, {Lemma A.1} in \citet{fryzlewicz2008normalized} and the Proof of Theorem 2 in \citet{KL12}.

From Assumption \ref{ass:anormal}.1, the term \eqref{tse1}.1 in \eqref{tse1} satisfies,
\begin{flalign*}
\frac{1}{Th}\sum_{t=1}^{T}q(Y_{t,T}, {\rho}_{0}(u))K\left(\frac{u-t/T}{h} \right)&=O_p(1/\sqrt{Th})
\end{flalign*}
and we can rearrange terms in equation \eqref{tse1} to obtain the result: for $|u-t/T|<T^{-1}$, apply Assumption \ref{ass:anormal}.5 to obtain
\begin{flalign}
(\hat{\rho}(u)-\rho_{0}(u))=-\left\{\frac{\partial \Psi_{0}(\rho_{0}(u);u)}{\partial\rho'}\right\}^{-1}\Psi_T(\rho_0;u)+O_p(T^{-1})\label{eq:rho_exp},
\end{flalign}
where $\Psi_T(\rho_0;u)=\frac{1}{Th}\sum_{t=1}^{T}q(Y_{t,T}, {\rho}_{0}(u))K\left(\frac{u-t/T}{h} \right)$. 

\medskip

\noindent\textbf{Part 2:} We now use the above expansion to deduce the asymptotic distribution of the L-II estimator. 

From the definition of $\hat{\theta}:=\hat{\theta}(u)$,
\begin{flalign*}
0=\frac{\partial \hat{\rho}(u;\hat{\theta})^{\prime}}{\partial\theta}\Omega (\hat{\rho}(u)-\hat{\rho}(u;\hat{\theta}))
\end{flalign*}
Note that 
\begin{flalign*}
(\hat{\rho}(u)-\hat{\rho}(u;\hat{\theta}))&=[\hat{\rho}(u)-\rho_{0}(u)]-[\hat{\rho}(u;\hat{\theta})-\rho_{0}(u)]
\end{flalign*}and
\begin{flalign}
[\hat{\rho}(u;\hat{\theta})-\rho_{0}(u)]&=[\hat{\rho}(u;\hat{\theta})-\hat{\rho}(u;\theta_0(u))]+[\hat{\rho}(u;\theta_0(u))-\rho_{0}(u)]]\nonumber\\
&=\frac{\partial \rho(u,\theta_{0}(u))}{\partial\theta'}(\hat{\theta}(u)-\theta_{0}(u))+[\rho_{0}(u)-\hat{\rho}(u,\theta_{0}(u))]+O_p(T^{-1/2})\label{th_tse1}
\end{flalign}Using equation \eqref{th_tse1} within the FOCs, and the consistency of $\hat{\rho}(u)$ and $\hat{\theta}(u)$ obtained in Theorem \ref{thm0} and Theorem \ref{thm1} respectively, we obtain
\begin{flalign*}
0=\frac{\partial \rho(u,\theta_{0}(u))^{\prime}}{\partial\theta}\Omega\bigg{\{ }
(\hat{\rho}(u)-\rho_{0}(u))-\frac{\partial \rho(u,\theta_{0}(u))}{\partial\theta'}(\hat{\theta}(u)-\theta_{0}(u))+[\rho_{0}(u)-\rho(u,\theta_{0}(u))]+O_p(T^{-1/2})
\bigg{ \} }
\end{flalign*}which implies
\begin{flalign}\label{tse1_theta}
\hat{\theta}(u)-\theta_{0}(u)=\left\{\frac{\partial \rho(u,\theta_{0}(u))^{\prime}}{\partial\theta}\Omega\frac{\partial \rho(u,\theta_{0}(u))}{\partial\theta'}\right\}^{-1}\frac{\partial \rho(u,\theta_{0}(u))^{\prime}}{\partial\theta}\Omega\left\{(\hat{\rho}(u)-\rho_{0}(u))+O_p(T^{-1/2})\right\},
\end{flalign}where we have used the injectivity of $\rho(u,\theta)$ in $\theta$. The result now follows by substituting in the expansion for $\{\hat\rho(u)-\rho_0(u)\}$ given in \eqref{eq:rho_exp} and multiplying by $\sqrt{Th}$.
\clearpage

\section{Figures and tables}
\label{app: tf}

\subsection{Figures}
\begin{figure}[H]
	\centering 
	\includegraphics[height=0.4\textheight, width=0.8\linewidth]{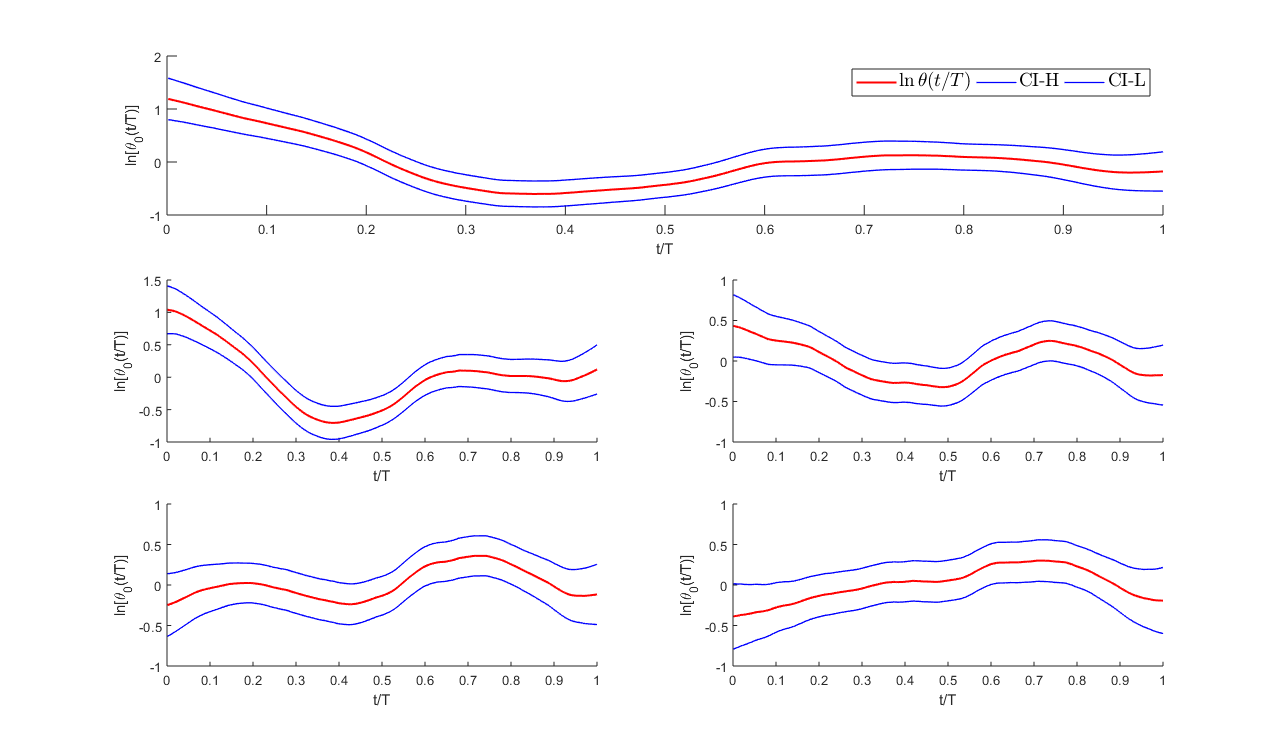}
	\caption{First five Fama-French portfolios: First quintile of size, intersected with the five quintiles of book-to-market. The top most figure in the panel represents the results for the portfolio formed from the intersection of the first quintile of size, and the first quintile of book-to-market. Running from right to left, the results then correspond to the first quintile of size and the second through fifth quintiles of book-to-market. CI-H and CI-L, represent pointwise confidence bands and are calculated using the local block bootstrap (LBB) at each time point $u=t/T$. For presentation of the results, the time span represented on the $x$-axis has been placed on the unit interval. The following correspondence can be used to aid interpretation of the results. The first time point in the sample corresponds to January 1952, the time point $u=0.5$ corresponds to June 1985, and the time point $u=.99$ corresponds to December 2018.}
	\label{FF_fig1}
\end{figure}

\clearpage

\begin{figure}[H]
	\centering 
		\includegraphics[height=0.4\textheight, width=0.8\linewidth]{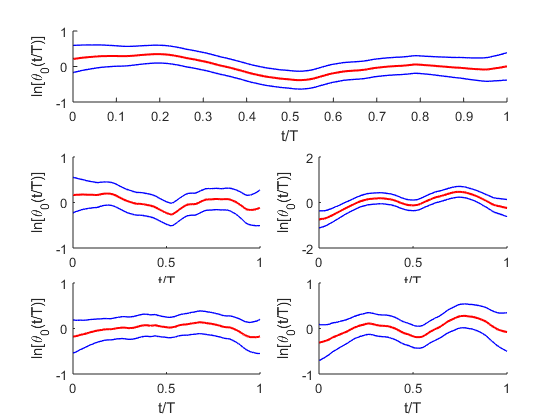}
	\caption{Second five Fama-French portfolios: Second quintile of size, intersected with the five quintiles of book-to-market. The figure has the same interpretation as Figure \ref{FF_fig1} but the results correspond to the second quintile of size.}
	\label{FF_fig2}
\end{figure}
\begin{figure}[H]
	\centering 
		\includegraphics[height=0.4\textheight, width=0.8\linewidth]{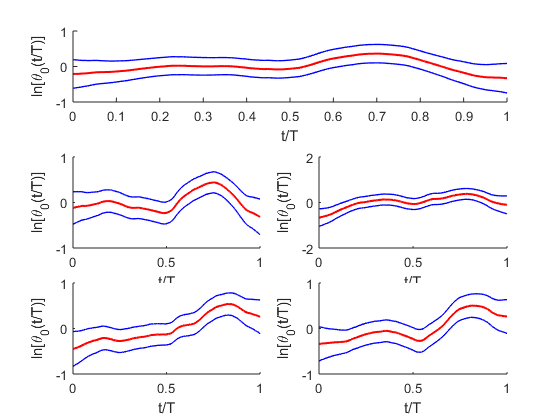}
	\caption{Third five Fama-French portfolios: Third quintile of size, intersected with the five quintiles of book-to-market. The figure has the same interpretation as Figure \ref{FF_fig1} but the results correspond to the third quintile of size.}
	\label{FF_fig3}
\end{figure}
\begin{figure}[H]
	\centering 
		\includegraphics[height=0.4\textheight, width=0.8\linewidth]{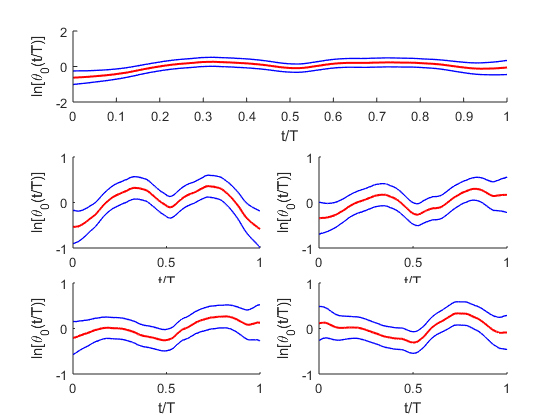}
	\caption{Fourth five Fama-French portfolios: Fourth quintile of size, intersected with the five quintiles of book-to-market. The figure has the same interpretation as Figure \ref{FF_fig1} but the results correspond to the fourth quintile of size.}
	\label{FF_fig4}
\end{figure}
\begin{figure}[H]
	\centering 
		\includegraphics[height=0.4\textheight, width=0.8\linewidth]{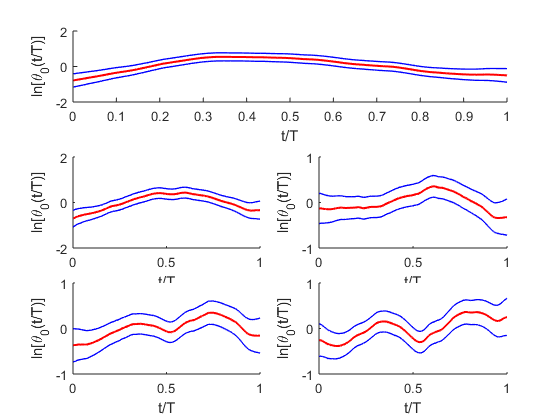}
	\caption{Last five Fama-French portfolios: Fifth quintile of size, intersected with the five quintiles of book-to-market. The figure has the same interpretation as Figure \ref{FF_fig1} but the results correspond to the fifth quintile of size.}
	\label{FF_fig5}
\end{figure}

\newpage

\subsection{Tables}

\begin{landscape}
	\begin{table}[htbp]
		\centering
		\caption{Estimated $\alpha$ and $\beta$ parameters from the modified three-factor Fama-French model. For each column, Est denotes the point estimator and SE the standard error. Standard errors are calculated using the LBB.}
		\begin{tabular}{rrrrrrrrrrrrrrrrrr}
			&       &       &       &       &       &       &       &       &       &       &       &       &       &       &       &       &  \\\hline
			& \multicolumn{1}{l}{} &       & \multicolumn{1}{l}{Size-1} &       &       & \multicolumn{1}{l}{Size-2} &       &       & \multicolumn{1}{l}{Size-3} &       &       & \multicolumn{1}{l}{Size-4} &       &       & \multicolumn{1}{l}{Size-5} &       &  \\\hline		
			&       &                            &  Est     &  SE   &       &   Est    &  SE   &       & Est      &   SE &       &   Est    &  SE   &       &   Est    & SE  & \\\hline
			
			&       & \multicolumn{1}{l}{$\alpha$} &        -0.544 & 0.181 &       & -0.056 & 0.108 &       & -0.054 & 0.075 &       & 0.229 & 0.079 &       & 0.285 & 0.080 &  \\			
			&       & \multicolumn{1}{l}{$\beta_1$} &    1.103 & 0.030 &       & 1.003 & 0.022 &       & 0.929 & 0.014 &       & 0.888 & 0.014 &       & 0.958 & 0.016 &  \\
			& \multicolumn{1}{l}{BM-1} & \multicolumn{1}{l}{$\beta_2$} &    1.388 & 0.044 &       & 1.298 & 0.059 &       & 1.083 & 0.024 &       & 1.054 & 0.025 &       & 1.086 & 0.033 &  \\
			&       & \multicolumn{1}{l}{$\beta_3$} &    -0.220 & 0.052 &       & 0.079 & 0.044 &       & 0.310 & 0.027 &       & 0.460 & 0.025 &       & 0.691 & 0.028 &  \\
			&       & \multicolumn{1}{l}{$\beta_4$} &    1.118 & 0.391 &       & 1.044 & 0.203 &       & 1.035 & 0.174 &       & 0.836 & 0.172 &       & 0.558 & 0.163 &  \\
						&       &       &       &       &       &       &       &       &       &       &       &       &       &       &       &       &  \\
			&       & \multicolumn{1}{l}{$\alpha$} &    -0.256 & 0.100 &       & -0.026 & 0.085 &       & 0.127 & 0.070 &       & -0.043 & 0.074 &       & -0.032 & 0.075 &  \\
			&       & \multicolumn{1}{l}{$\beta_1$} &    1.126 & 0.020 &       & 1.014 & 0.016 &       & 0.956 & 0.017 &       & 0.964 & 0.013 &       & 1.072 & 0.014 &  \\
			& \multicolumn{1}{l}{BM-2} & \multicolumn{1}{l}{$\beta_2$} &   1.012 & 0.035 &       & 0.886 & 0.042 &       & 0.769 & 0.056 &       & 0.722 & 0.036 &       & 0.884 & 0.023 &  \\
			&       & \multicolumn{1}{l}{$\beta_3$} &    -0.361 & 0.038 &       & 0.112 & 0.051 &       & 0.368 & 0.055 &       & 0.562 & 0.034 &       & 0.773 & 0.030 &  \\
			&       & \multicolumn{1}{l}{$\beta_4$} &    1.130 & 0.215 &       & 1.099 & 0.167 &       & 0.850 & 0.156 &       & 1.276 & 0.176 &       & 1.071 & 0.154 &  \\
			&       &       &       &       &       &       &       &       &       &       &       &       &       &       &       &       &  \\
			&       & \multicolumn{1}{l}{$\alpha$} &    -0.109 & 0.074 &       & 0.064 & 0.081 &       & 0.050 & 0.081 &       & 0.163 & 0.078 &       & -0.075 & 0.120 &  \\
			&       & \multicolumn{1}{l}{$\beta_1$} &    1.097 & 0.018 &       & 1.016 & 0.016 &       & 0.983 & 0.021 &       & 0.980 & 0.017 &       & 1.067 & 0.023 &  \\
			& \multicolumn{1}{l}{BM-3} & \multicolumn{1}{l}{$\beta_2$} &   0.752 & 0.024 &       & 0.553 & 0.058 &       & 0.442 & 0.065 &       & 0.433 & 0.054 &       & 0.573 & 0.071 &  \\
			&       & \multicolumn{1}{l}{$\beta_3$} &   -0.426 & 0.024 &       & 0.156 & 0.060 &       & 0.407 & 0.060 &       & 0.597 & 0.064 &       & 0.795 & 0.054 &  \\
			&       & \multicolumn{1}{l}{$\beta_4$} &     1.142 & 0.156 &       & 1.035 & 0.181 &       & 0.832 & 0.159 &       & 0.746 & 0.159 &       & 1.223 & 0.227 &  \\
			&       &       &       &       &       &       &       &       &       &       &       &       &       &       &       &       &  \\
			&       & \multicolumn{1}{l}{$\alpha$} &    -0.012 & 0.068 &       & 0.060 & 0.088 &       & 0.049 & 0.102 &       & 0.086 & 0.074 &       & -0.235 & 0.117 &  \\
			&       & \multicolumn{1}{l}{$\beta_1$} &   1.078 & 0.015 &       & 1.042 & 0.022 &       & 1.030 & 0.020 &       & 1.005 & 0.017 &       & 1.152 & 0.029 &  \\
			& \multicolumn{1}{l}{BM-4} & \multicolumn{1}{l}{$\beta_2$} &   0.403 & 0.032 &       & 0.212 & 0.061 &       & 0.188 & 0.062 &       & 0.223 & 0.031 &       & 0.293 & 0.061 &  \\
			&       & \multicolumn{1}{l}{$\beta_3$} &     -0.395 & 0.027 &       & 0.171 & 0.063 &       & 0.405 & 0.073 &       & 0.558 & 0.051 &       & 0.795 & 0.046 &  \\
			&       & \multicolumn{1}{l}{$\beta_4$} &    1.253 & 0.181 &       & 0.717 & 0.204 &       & 0.874 & 0.230 &       & 0.865 & 0.169 &       & 1.269 & 0.251 &  \\
			&       &       &       &       &       &       &       &       &       &       &       &       &       &       &       &       &  \\
			&       & \multicolumn{1}{l}{$\alpha$} &     0.124 & 0.065 &       & -0.023 & 0.067 &       & 0.180 & 0.094 &       & -0.325 & 0.128 &       & -0.197 & 0.172 &  \\
			&       & \multicolumn{1}{l}{$\beta_1$} &   0.985 & 0.013 &       & 0.983 & 0.016 &       & 0.934 & 0.021 &       & 1.033 & 0.025 &       & 1.121 & 0.031 &  \\
			& \multicolumn{1}{l}{BM-5 } & \multicolumn{1}{l}{$\beta_2$} &    -0.240 & 0.019 &       & -0.204 & 0.035 &       & -0.246 & 0.042 &       & -0.211 & 0.031 &       & -0.094 & 0.043 &  \\
			&       & \multicolumn{1}{l}{$\beta_3$} &   -0.365 & 0.025 &       & 0.074 & 0.046 &       & 0.298 & 0.045 &       & 0.649 & 0.047 &       & 0.839 & 0.040 &  \\
			&       & \multicolumn{1}{l}{$\beta_4$} &   1.050 & 0.176 &       & 1.098 & 0.158 &       & 0.615 & 0.271 &       & 1.265 & 0.255 &       & 0.997 & 0.316 &  \\\hline\hline
		\end{tabular}%
		\label{tab:beta_parms}%
	\end{table}%
\end{landscape}
\clearpage

\begin{table}[h]
	\centering
	\caption{Short-term stochastic volatility parameter estimates across the 25 Fama-French portfolios. For each column, Est denotes the point estimator and SE the standard error. Standard errors are calculated using the LBB.}
	\begin{tabular}{rlrrrrrrrrrrrrrr}\hline
		\multicolumn{1}{l}{} &       & {Size-1} &       &       & {Size-2} &       &       & {Size-3} &       &       & {Size-4} &       &       & {Size-5} &  \\\hline&     &Est. & SE &       & Est & SE &       &Est  &SE  &       & Est & SE &       & Est  &SE  \\\hline
		& $\phi$   &     0.693 & 0.075 &       & 0.627 & 0.242 &       & 0.347 & 0.293 &       & 0.679 & 0.341 &       & 0.523 & 0.176 \\
		\multicolumn{1}{l}{BM-1 } & $\sigma_v$  &     1.902 & 0.327 &       & 1.690 & 0.622 &       & 1.709 & 0.317 &       & 1.317 & 0.655 &       & 1.573 & 0.363 \\
		& $\gamma_\nu$   &  0.259 & 0.154 &       & -0.162 & 0.372 &       & -0.513 & 0.285 &       & 0.153 & 0.330 &       & 0.235 & 0.348 \\
		&       &       &       &       &       &       &       &       &       &       &       &       &       &       &  \\
		& $\phi$    &      0.654 & 0.314 &       & 0.543 & 0.150 &       & 0.584 & 0.202 &       & 0.499 & 0.292 &       & 0.565 & 0.244 \\
		\multicolumn{1}{l}{BM-2 } & $\sigma_v$  &    1.602 & 0.484 &       & 1.742 & 0.310 &       & 1.694 & 0.480 &       & 1.560 & 0.408 &       & 1.656 & 0.507 \\
		& $\gamma_\nu$   &   0.123 & 0.315 &       & -0.262 & 0.242 &       & 0.007 & 0.503 &       & -0.011 & 0.179 &       & 0.320 & 0.507 \\
		&       &       &       &       &       &       &       &       &       &       &       &       &       &       &  \\
		& $\phi$    &   0.603 & 0.104 &       & 0.529 & 0.299 &       & 0.600 & 0.212 &       & 0.619 & 0.226 &       & 0.762 & 0.172 \\
		\multicolumn{1}{l}{BM-3 } & $\sigma_v$  &   1.879 & 0.373 &       & 1.832 & 0.465 &       & 1.703 & 0.620 &       & 1.598 & 0.453 &       & 1.416 & 0.364 \\
		& $\gamma_\nu$   &    0.270 & 0.217 &       & -0.018 & 0.257 &       & -0.127 & 0.626 &       & 0.165 & 0.360 &       & -0.101 & 0.214 \\
		&       &       &       &       &       &       &       &       &       &       &       &       &       &       &  \\
		& $\phi$    &   0.473 & 0.239 &       & 0.686 & 0.109 &       & 0.523 & 0.264 &       & 0.596 & 0.093 &       & 0.592 & 0.123 \\
		\multicolumn{1}{l}{BM-4 } & $\sigma_v$  &     1.668 & 0.547 &       & 1.609 & 0.432 &       & 1.692 & 0.467 &       & 1.580 & 0.215 &       & 1.783 & 0.139 \\
		& $\gamma_\nu$   &      -0.069 & 0.319 &       & -0.245 & 0.218 &       & 0.242 & 0.467 &       & -0.096 & 0.242 &       & -0.149 & 0.374 \\
		&       &       &       &       &       &       &       &       &       &       &       &       &       &       &  \\
		& $\phi$    &    0.572 & 0.274 &       & 0.427 & 0.237 &       & 0.592 & 0.303 &       & 0.428 & 0.302 &       & 0.530 & 0.249 \\
		\multicolumn{1}{l}{BM-5} & $\sigma_v$  &     1.507 & 0.618 &       & 1.879 & 0.396 &       & 1.668 & 0.584 &       & 1.623 & 0.619 &       & 1.747 & 0.414 \\
		& $\gamma_\nu$   &    0.192 & 0.416 &       & -0.095 & 0.503 &       & 0.428 & 0.429 &       & -0.452 & 0.271 &       & -0.157 & 0.483 \\\hline
	\end{tabular}%
	\label{tab:sv_parms}%
\end{table}%

\begin{table}[h]
	\centering
	\caption{99\% Confidence intervals for asymmetry parameter $\gamma$. For the entires in the table, (x,y) refers to the lower and upper level of the confidence interval, respectively, as calculated using QMLE robust standard errors. Across all 25 portfolios, only a single asymmetry parameter is statistically significant, which we mark in bold text. }
	\begin{tabular}{rrrrrrrr}
		&       &       &       & &       &  &  \\\hline
		&       & Size-1     & Size-2     & Size-3     & Size-4     & Size-5     &  \\\hline
		& BM-1     & \multicolumn{1}{l}{-0.0549,   0.1610} & \multicolumn{1}{l}{-0.0622,  0.1516} & \multicolumn{1}{l}{-0.0919,  0.1859} & \multicolumn{1}{l}{-0.1203,   0.0712} & \multicolumn{1}{l}{-0.0023,   0.1900} &  \\
		& BM-2     & \multicolumn{1}{l}{-0.1915,    0.0556} & \multicolumn{1}{l}{-0.2716,    0.0399} & \multicolumn{1}{l}{-0.0687,  0.0874} & \multicolumn{1}{l}{-0.0324,    0.1940} & \multicolumn{1}{l}{-0.1255,  0.0738} &  \\
		\multicolumn{1}{l}{} & BM-3     & \multicolumn{1}{l}{-0.2185,   0.0268} & \multicolumn{1}{l}{-0.0597,   0.1614} & \multicolumn{1}{l}{-0.0188,   0.2030} & \multicolumn{1}{l}{-0.0383,   0.2178} & \multicolumn{1}{l}{-0.1372,   0.0626} &  \\
		& BM-4     & \multicolumn{1}{l}{-0.1331,   0.0865} & \multicolumn{1}{l}{-0.0403,  0.0981} & \multicolumn{1}{l}{-0.0742,   0.1751} & \multicolumn{1}{l}{-0.1258,  0.0755} & \multicolumn{1}{l}{-0.0718,  0.1619} &  \\
		& BM-5     & \multicolumn{1}{l}{\textbf{0.0175,  0.1154}} & \multicolumn{1}{l}{-0.0762,  0.1446} & \multicolumn{1}{l}{-0.0558,  0.1325} & \multicolumn{1}{l}{-0.0072,  0.1698} & \multicolumn{1}{l}{-0.0931,   0.1709} &  \\\hline
		&       &       &       &       &       &       &  \\
	\end{tabular}%
	\label{tab:tarch}%
\end{table}%

\clearpage

\appendix
\begin{center}
	\textbf{\Large{{Supplementary Appendix}}}	
\end{center}

\section{Proof of Corollary \ref{corol1}}
Note that due to $\sup_{u \in \mathcal{U}}\vert \theta_0(u) \vert <1$ and local stationarity, there exists an invertible moving average process corresponding to the structural model \eqref{eq: ex1_str} in the vicinity of any given time point $u\in \mathcal{U}$. Therefore, there exists a Autoregressive process such that
$$y_{u,t}=\varepsilon_{u,t}+\sum_{s=1}^{\infty}(-\theta(u))^s y_{u,t-s},$$
in the neighborhood of a given time point $u\in\mathcal{U}$.
The auxiliary model AR(1) process is a misspecified version of the above model with a wrong order of lags.
Define $M_T(\cdot)$ and $M_0(\cdot)$ as  
$$M_{T}[\{Y_{t,T}\}_{t=0}^{T};\rho(u)]:=\frac{1}{Th}\sum_{t=1}^{T}(Y_{t,T}-\rho(t/T)Y_{t-1,T})^2K\left(\frac{u-t/T}{h}\right)$$
and $M_0(\cdot)=\lim_{T\rightarrow \infty}{E}M_T(\cdot)$.
$M_T(\cdot)$ and $M_0(\cdot)$ in this setting are well-defined and well behaved.	Furthermore, for any given time point $u\in \mathcal{U}$, the pseudo-true value, $\rho_{0}(u)$ can be represented by
$\rho_{0}(u)={\theta(u)}/(1+{\theta(u)}^2)$.
The minimizer $\rho_0(u)$ for any $u \in \mathcal{U}$ is continuous and strictly monotonic in $\theta(u)$. All of these ensure that the map $\theta\mapsto\rho_{0}(u;\theta)$ is continuous and injective in $\theta$ and hence Assumptions \ref{ass:auxiliary}.(vi) is met. Also, the assumption that $\sup_{u\in\mathcal{U}}|\rho_0(u)|<1$ implies the compactness of $\Theta$ and $\Gamma$ in conjunction with injectivity. With Lemma \ref{lemma1} that ensures Theorem \ref{thm0} and Corollary\ref{Lem:uniform_theta}, the proof of corollary \ref{corol1} follows directly from verification of the requisite regularity conditions stated in Theorem \ref{thm1}.
\begin{lemma}\label{lemma1}
	Under Assumptions \ref{ass:uniform} and \ref{ass:AAker}, the following are satisfied, 
	\begin{flalign}
	\sup_{u\in\mathcal{U}}&| \hat{\rho}(u)-\rho_{0}(u)|=o_p(1),\label{eq: uniform_nonpara}\\\sup_{\hat{\theta}\in[-1+\delta,1-\delta]}&| \hat{\rho}(u;\theta(u))-\rho_{0}(u;\theta(u))|= o_p\left(1\right) \ \ \ \text{a.s.}\label{eq: auxiliary_uniform}
	\end{flalign}
\end{lemma}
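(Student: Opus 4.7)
The plan is to exploit the explicit ratio form of both estimators in the LS-MA(1) setting and reduce the proof to uniform concentration of kernel-weighted sample moments plus a bias calculation coming from local stationarity. Writing $\hat\rho(u)=N_T(u)/D_T(u)$ with
\begin{equation*}
N_T(u):=\frac{1}{Th}\sum_{t=1}^T Y_{t-1,T}Y_{t,T}K\!\left(\tfrac{u-t/T}{h}\right),\qquad D_T(u):=\frac{1}{Th}\sum_{t=1}^T Y_{t-1,T}^2 K\!\left(\tfrac{u-t/T}{h}\right),
\end{equation*}
the target pseudo-true value factors as $\rho_0(u)=\theta_0(u)/(1+\theta_0(u)^2)$, with numerator limit $\theta_0(u)$ and denominator limit $1+\theta_0(u)^2$ (uniformly bounded away from zero since $\sup_u|\theta_0(u)|<1$). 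So the first step is to establish uniform convergence of $N_T$ and $D_T$ to these deterministic functions on $\mathcal{U}$ separately, then invoke a uniform continuous-mapping argument for the ratio.

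First I would decompose $N_T(u)-\theta_0(u) = [N_T(u)-EN_T(u)] + [EN_T(u)-\theta_0(u)]$ (and analogously for $D_T$). For the variance term, since $Y_{t-1,T}Y_{t,T}$ is a bounded-moment function of a $\phi$-mixing, locally stationary triangular array (Assumption \ref{ass:uniform}), and the kernel satisfies Assumption \ref{ass:AAker}, the exponential inequality developed in Lemma \ref{lem:uniformity} together with a finite $\epsilon$-net over $\mathcal{U}$ (using Lipschitz continuity of $K$ to control the supremum between grid points) and the Borel--Cantelli argument used in the proof of Theorem \ref{thm0} yield $\sup_{u\in\mathcal{U}}|N_T(u)-EN_T(u)|=o_p(1)$ and similarly for $D_T$. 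No new machinery is needed here; this is the exact specialization of the covering/concentration argument already employed.

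For the bias term, I would use Lemma \ref{Lem:struct_ls} to replace $Y_{t,T}$ by the stationary approximation $y_{u,t}$ in the kernel window $|u-t/T|\le Lh$, with $|Y_{t,T}-y_{u,t}|=O_p(h+T^{-1})$, so that
\begin{equation*}
EN_T(u) = \frac{1}{Th}\sum_{t:|u-t/T|\le Lh} K\!\left(\tfrac{u-t/T}{h}\right)E[y_{u,t-1}y_{u,t}] + O(h),
\end{equation*}
and then use $E[y_{u,t-1}y_{u,t}]=\theta_0(u)$ with the Riemann sum approximation $(Th)^{-1}\sum_t K((u-t/T)/h)\to 1$ and the smoothness of $\theta_0(\cdot)$ from Assumption \ref{ass:1} to get $EN_T(u)\to\theta_0(u)$ uniformly in $u$. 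An identical calculation gives $ED_T(u)\to 1+\theta_0(u)^2$ uniformly. Combining with the concentration step and the fact that the denominator limit is uniformly bounded below yields \eqref{eq: uniform_nonpara}.

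For \eqref{eq: auxiliary_uniform}, the simulated series $\{\tilde y_{u,t}(\theta)\}_{t\le T}$ is strictly stationary for each fixed $u$ and $\theta\in\Theta=[-1+\delta,1-\delta]$ (Assumption \ref{ass:uniform}(ii)), so for fixed $(u,\theta)$ the ergodic theorem gives $\hat\rho(u;\theta)\to\rho_0(u;\theta)=\theta/(1+\theta^2)$. To upgrade pointwise to uniform convergence over $\theta$, I would verify stochastic equicontinuity of $\theta\mapsto\hat\rho(u;\theta)$ on the compact set $\Theta$: since the MA(1) simulator is smooth in $\theta$ (explicit polynomial dependence of $\tilde y_{u,t}(\theta)$ on $\theta$), Lipschitz bounds on $\tilde y_{u,t-1}(\theta)\tilde y_{u,t}(\theta)$ and $\tilde y_{u,t-1}(\theta)^2$ with integrable envelopes are immediate, which together with compactness of $\Theta$ gives a uniform LLN by the standard bracketing/Glivenko--Cantelli argument. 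I expect the main technical obstacle to be the bias analysis in the first part: specifically, controlling the interaction between the local stationarity approximation error, the kernel bandwidth, and the second-derivative smoothness of $\theta_0(\cdot)$, so that the replacement of $Y_{t,T}$ by $y_{u,t}$ inside the kernel window introduces only a $o(1)$ error uniformly in $u\in\mathcal{U}$ — everything else reduces to previously established concentration and ergodic arguments.
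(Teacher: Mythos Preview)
Your proposal is correct, but the route differs from the paper's in both parts.

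For \eqref{eq: uniform_nonpara}, the paper does not carry out a bias--variance decomposition at all: it simply invokes Theorem~2 of \citet{kristensen2009uniform} as a black box to obtain $\sup_{u\in\mathcal U}|\hat\rho(u)-\rho_0(u)|=O_p(h^2)+O_p(\sqrt{\ln T/(Th)})$. Your argument---ratio structure, concentration via Lemma~\ref{lem:uniformity} and a covering of $\mathcal U$, bias via the local-stationarity approximation of Lemma~\ref{Lem:struct_ls} plus Riemann sums---is a legitimate self-contained derivation of the same conclusion, and the identification of the bias step as the only genuinely new ingredient is accurate. What you gain is independence from the external reference; what you lose is brevity.

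For \eqref{eq: auxiliary_uniform}, the paper also writes the estimator as a ratio $\hat\psi_1/\hat\psi_2$ and reduces to uniform convergence of numerator and denominator separately, but then proceeds quite differently: it truncates $Z_t(\theta)=\tilde y_{u,t-1}(\theta)\tilde y_{u,t}(\theta)$ at level $\gamma_T=\tau_T^{-1/(k-1)}$ with $\tau_T=\sqrt{\ln T/T}$, covers $\Theta$ by $N\le c/\tau_T$ balls, and on each grid point applies Liebscher's (1996) exponential inequality for strong-mixing partial sums, followed by Borel--Cantelli to obtain almost-sure convergence with an explicit rate. Your route via the ergodic theorem for pointwise convergence plus a Lipschitz/envelope stochastic-equicontinuity upgrade is more elementary and avoids the truncation machinery entirely; it works because the MA(1) simulator is polynomial in $\theta$ on a compact set and the paper assumes $E|\epsilon_t|^{4+\eta}<\infty$. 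The paper's approach buys an explicit rate and a direct almost-sure statement via Borel--Cantelli; your approach is shorter but you should note explicitly that the uniform LLN you invoke holds almost surely (not merely in probability) under the stationary-ergodic-plus-integrable-envelope conditions here, since the lemma's statement is phrased with ``a.s.''.
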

\begin{proof}	
	For (\ref{eq: uniform_nonpara}), under Assumptions \ref{ass:uniform}, and \ref{ass:AAker}, note that the following result follows straightforwardly from Theorem 2 in \citet{kristensen2009uniform}.
	\begin{equation*}
	\sup_{u\in\mathcal{U}}| \hat{\rho}(u)-\rho_{0}(u)|=O_p(h^2)+O_p\left(\sqrt{\frac{\ln T}{Th}}\right)=o_p(1)
	\end{equation*}
	
	For (\ref{eq: auxiliary_uniform}), recall that \begin{equation*}
	\hat{\rho}(u;\theta(u))={\hat{\psi}_1}(u;\theta(u))\small{/}{\hat{\psi}_2}(u;\theta(u)).
	\label{eq:limit_rho}
	\end{equation*}
	where $\hat{\psi}_1(u;\theta(u))=T^{-1}\sum_{t=1}^{T}\tilde{y}_{u,t-1}(u;\theta(u))\tilde{y}_{u,t}(u;\theta(u))$ and $\hat{\psi}_2(u;\theta(u))=T^{-1}\sum_{t=1}^{T}\tilde{y}_{u,t-1}^{2}(u;\theta(u))$.	
	For the sake of notation simplicity, we drop $(u)$ since it is clear that our argument is based on the fixed time point $u$.
	
	Due to the mean value theorem, 
	\begin{flalign*}
	\sup_{\theta\in \theta} \left|\hat{\rho}(\theta(u))-\rho_{0}(\theta(u))\right|
	&\le
	\sup_{\theta\in \theta} \left|\frac{\hat{\psi}_1(\theta)}{\hat{\psi}_2(\theta)}-\frac{{\psi}_1(\theta)}{{\psi}_2(\theta)}\right|\\
	&=
	\sup_{\theta\in \theta} \left|\frac{\hat{\psi}_1(\theta)-{\psi}_1(\theta)}{\bar{\psi}_2(\theta)}-\frac{\bar{\psi}_1(\theta)}{\bar{\psi}_2(\theta)}(\hat{\psi}_2(\theta)-\psi_2(\theta))\right|\\
	&\le
	\sup_{\theta\in \theta} \left|\frac{\hat{\psi}_1(\theta)-{\psi}_1(\theta)}{\bar{\psi}_2(\theta)}\right|+\sup_{\theta\in \theta}\left|\frac{\bar{\psi}_1(\theta)}{\bar{\psi}_2(\theta)}(\hat{\psi}_2(\theta)-\psi_2(\theta))\right|
	\end{flalign*}
	where $\theta=[-1+\delta,1-\delta]$ and $\bar{\psi}_k(\theta)\in[\hat{\psi}_k(\theta),{\psi}_k(\theta)]<\infty$ for $k=1,2$. 
	This implies that the uniform convergence rate for the left hand side is determined by $\vert \hat{\psi}_1(\theta)-{\psi}_1(\theta) \vert$ and $\vert \hat{\psi}_2(\theta)-\psi_2(\theta) \vert$ only.
	\begin{eqnarray}
	\sup_{\theta\in\theta}&| \hat{\psi}_1(\theta)-{\psi}_1(\theta)|\le
	\underbrace{\sup_{\theta\in\theta}| \hat{\psi}_1(\theta)-{E}\hat{\psi}_1(\theta)|}_{A.1}+
	\underbrace{\sup_{\theta\in\theta}| {E}\hat{\psi}_1(\theta)-{\psi}_1(\theta)|}_{A.2}
	\end{eqnarray}
	For A.2, $o_p(1)$ by construction.
	For A.1, we have to show that
	\begin{equation}
	\sup_{\theta\in\theta}| \hat{\psi}_1(\theta)-{E}\hat{\psi}_1(\theta)|\rightarrow0 \ \ \ a.s.\label{eq:A.1}
	\end{equation}
	
	The proof for (\ref{eq:A.1}) is organized as follows. Define $Z_t(\theta)=\tilde{y}_{u,t-1}(\theta)\tilde{y}_{u,t}(\theta)$. We replace $Z_t(\theta)$ with the truncated process $Z_t(\theta)\mathbb{I}(\vert Z_t(\theta) \vert \le \gamma_T)$ where $\mathbb{I}$ is the indicator function and $\gamma_T=\tau_T^{-1/(k-1)}$ such that $\tau_T=\sqrt{\ln T/T}$ for some $k>2$. Note that $\tau_T=o(1)$. Then, we replace the supremum in (\ref{eq:A.1}) with a maximization over a finite $N$ grids. Finally, we use the exponential inequality in Theorem 2.1. in \citet{liebscher1996strong} to bound the remainder. 
	
	First, consider truncation of $Z_t(\theta)$.
	\begin{eqnarray*}
		R_T(\theta)&=&\hat{\psi}_1(\theta)-\frac{1}{T}\sum_{t=1}^{T}Z_t(\theta)\mathbb{I}(\vert Y_t(\theta) \vert \le \gamma_T)\\
		&=&\frac{1}{T}\sum_{t=1}^{T}Z_t(\theta)\mathbb{I}(\vert Y_t(\theta) \vert > \gamma_T)
	\end{eqnarray*}
	where  $Z_t(\theta)=\tilde{y}_{u,t-1}(\theta)\tilde{y}_{u,t}(\theta)$.
	Then, 
	\begin{eqnarray}
	\vert {E}R_T(\theta)\vert&\le&{E}\left[\vert Y_t(\theta)\vert\mathbb{I}(\vert Z_t(\theta) \vert > \gamma_T)\right]\\
	&\le&{E}\left[\vert Z_t(\theta)\vert \vert Z^{k-1}_t(\theta) \gamma_T^{-(k-1)} \vert\mathbb{I}(\vert Z_t(\theta) \vert > \gamma_T)\right]\\
	&\le&\gamma_T^{-(k-1)}{E}\left[ \vert Z^{k}_t(\theta) \vert\right]
	\end{eqnarray}
	Due to Markov's inequality,
	$$
	\vert R_T(\theta)-{E}R_T(\theta) \vert =O_p(\gamma_T^{-(k-1)})=O_p(\tau_T).
	$$
	Therefore, we can focus on $Z_t(\theta)\mathbb{I}(\vert Z_t(\theta)\vert \le \gamma_T)$ since replacing $Z_t(\theta)$ with $Z_t(\theta)\mathbb{I}(\vert Z_t(\theta)\vert \le \gamma_T)$ incurs only an approximation error of order $O_p(\tau_T)$, which can be made arbitrarily small. In what follows, 
	$\vert Y_t(\theta)\vert \le \gamma_T$.
	
	Next, consider a set of grids or coverings of the form such that $B_j=\{\theta:\|\theta-\theta_j\|\le \tau_T \};j=1,...,N$. Since $\theta$ is compact, it can be covered by a finite number of $B_j$s for $j=1,...,N$ and $N\le c/\tau_T$. 
	Note that
	\begin{eqnarray*}
		\sup_{\theta\in \theta}| \hat{\psi}_1(\theta)-{E}\hat{\psi}_1(\theta)|&=&\max_{1\le j \le N}\sup_{\theta \cap B_j}\vert 
		\hat{\psi}_1(\theta)-{E}\hat{\psi}_1(\theta)\vert\\
		&\le&\max_{1\le j \le N}\sup_{\theta \cap B_j}\vert 
		\hat{\psi}_1(\theta)-\hat{\psi}_1(\theta_j)\vert\\
		&&+\max_{1\le j \le N}\vert 
		\hat{\psi}_1(\theta_j)-{E}\hat{\psi}_1(\theta_j)\vert\\
		&&+\max_{1\le j \le N}\sup_{\theta \cap B_j}\vert 
		{E}\hat{\psi}_1(\theta_j)-{E}\hat{\psi}_1(\theta)\vert\\
		&=&\mathcal{S}_1+\mathcal{S}_2+\mathcal{S}_3.
	\end{eqnarray*}
	For $\mathcal{S}_1$, due to the assumption of Lipschitz condition and boundedness of the first derivative, $\dot{Z}_t(\cdot)$ , 
	\begin{eqnarray}
	\max_{1\le j \le N}\sup_{\theta \cap B_j}\vert 
	\hat{\psi}_1(\theta)-\hat{\psi}_1(\theta_j)\vert&\le& C\dot{Z}_t(\bar{\theta})\|\theta-\theta_j\|=O_p(\tau_T) 
	\label{eq: s1}
	\end{eqnarray}
	For $\mathcal{S}_3$, the similar argument applies and hence
	\begin{equation}
	\text{P}\left(\max_{1\le j \le N}\sup_{\theta \cap B_j}\vert 
	{E}\hat{\psi}_1(\theta_j)-{E}\hat{\psi}_1(\theta)\vert \right)=O_p(\tau_T)
	\label{eq: s3}
	\end{equation}
	For $\mathcal{S}_2$, let $T^{-1}\sum_{t=1}^{T}D_t(\theta_j)= 
	\hat{\psi}_1(\theta_j)-{E}\hat{\psi}_1(\theta_j)$, i.e. $D_t(\theta_j)=Z_t(\theta)-{E}Z_t(\theta)$
	\begin{eqnarray*}
		\text{P}\left(\mathcal{S}_2>\tau_T\right)&=&\text{P}\left(\max_{1\le j \le N}\left| 
		\sum_{t=1}^{T}D_t(\theta_j)\right|>T\tau_T\right)\\
		&\le& \sum_{j=1}^{N}\text{P}\left(\left|\sum_{t=1}^{T}D_t(\theta_j)\right|>T\tau_T\right)\\
		&\le & N\sup_{\theta\in\theta}\text{P}\left(\left|\sum_{t=1}^{T}D_t(\theta)\right|>T\tau_T\right)\\
		&\le&c\tau_T^{-1}\sup_{\theta\in\theta}\text{P}\left(\left|\sum_{t=1}^{T}D_t(\theta)\right|>T\tau_T\right)
	\end{eqnarray*}
	Here, we apply the result of Theorem 2.1. in \citet{liebscher1996strong} (pg 71) on the strong convergence of sums of dependent strong mixing processes defined as follows with
	its mixing coefficients $\alpha (k)$ such that for $k>0$,
	\begin{equation*}
		\alpha (k)=\sup_{-T\leq t\leq T}\sup_{A\in \mathcal{F}_{-\infty }^{T,t},B\in
			\mathcal{F}_{T,t+k}^{\infty }}\left\vert P_{T}\left( A\cap B\right)
		-P_{T}\left( A\right) P_{T}\left( B\right) \right\vert
	\end{equation*}%
	where $\alpha (k)$ converges exponentially fast to zero as $k\rightarrow
	\infty$.\footnote{Note that $\phi$-mixing in Assumption \ref{ass:uniform} implies strong mixing and hence it is consistent with Assumption \ref{ass:uniform}.}
	For a stationary zero mean real valued process $M_t$ such that $\vert M_t \vert\le b_T$ with strong mixing coefficients $\alpha_m$, 
	\begin{equation}
	P\left(\left|\sum_{t=1}^{T} D_t\right| > \varepsilon   \right)
	\le 
	4\exp\left[-\frac{\varepsilon^2}{64\sigma^2_m \frac{T}{m}+\frac{8}{3}\varepsilon m b_T}\right]+4\frac{T}{m}\alpha_m
	\label{eq: exp_inequal}
	\end{equation}
	where $\sigma^2_m={E}\left(\sum_{i=1}^{m}D_i\right)^2$.
	We will use this exponential inequality to prove (\ref{eq:A.1}). Set $m=\gamma_T^{-1}\tau_T^{-1}$ and note that $m<T$ and $m<\varepsilon b/4$ where $\varepsilon=T\tau_T$ and $b=\tau_T$ for any $\theta$ and sufficiently large $T$.  
	Also, note that
	\[
	{E}\left(\sum_{t=1}^{m}D_t(\theta)\right)^2 \le Cm.
	\]
	From (\ref{eq: exp_inequal}),
	\begin{eqnarray*}
		\text{P}\left(\left|\sum_{t=1}^{T}D_t(\theta)\right|>T\tau_T\right)&\le&4\exp\left[-\frac{T^2\tau^2_T}{64CT +\frac{8}{3}T\tau_T  \gamma^{-1}_T \tau^{-1}_T \tau_T}\right]+4\frac{T}{m}\alpha_m\\
		&\le& 4\exp\left[-\frac{\ln T}{C}\right]+4\frac{T}{m}\alpha_m\\
		&\le& 4T^{-1/C}+o(1)
	\end{eqnarray*}
	where $C$ is a constant and the second term tends to zero due to the assumption on the strong mixing coefficient, which is assumed in the statement of the result. 
	Note that the last bound is independent of $\theta$, it is the uniform bound. Then,
	$$
	\text{P}\left(\mathcal{S}_2>\tau_T\right) \le O_p(\tau^{-1}_T T^{-1/C})
	$$
	Moreover, with sufficiently large strong mixing coefficient decay rate, $\beta$, 
	\begin{equation}
	\sum_{t=1}^{\infty}\text{P}\left(\mathcal{S}_2>\tau_T\right)\le \infty.
	\label{eq: s2}
	\end{equation}
	Then, the desired result follows from the Borel-Cantelli Lemma.
	Combining all the results, (\ref{eq: s1}), (\ref{eq: s3}) and (\ref{eq: s2}) proves (\ref{eq:A.1}). 
	The proof in relation to $\hat{\psi}_2(\theta)$ is similar and hence is omitted. This completes the proof.
\end{proof}

\section{Additional details for the locally stationary multiplicative SV model example}
\subsection{Local stationarity}
Recall the locally stationary SV (LS-SV) model: for all $u\in[0,1]$, $0<\xi(u)<\infty$, 
\begin{align*}
Y_{t,T} &= \sqrt{\xi(t/T)}\exp{(h_t/2)}\nu_{1,t} ,\nonumber \\
h_{t+1} &= \mu + \phi h_t + \nu_{2,t},
\end{align*}
where $\nu_{1,t}\sim_{iid} N(0,1)$. Let $\eta_{t}\sim_{iid} N(0,1)$, with $\nu_{1,t},\eta_{t}$ independent and define $$\nu_{2,t}= \gamma_\nu \nu_{1,t}+\sqrt{(1-\gamma_\nu^2)}\sigma \eta_{t}.$$ Partition the unknown parameter $\theta$ as $$\theta(u)=(\theta_1(u),\theta_2')'=(\xi(u),\mu,\phi,\sigma,\rho)'.$$ 
Using these definitions, the LS-SV model can be placed in the general form of the structural model in equation \eqref{struct1}:
\begin{flalign*}
Y_{t,T}&=r(\epsilon_{t,T};\theta)=\sqrt{\theta_1(t/T)}\epsilon_{t,T},\;\\\epsilon_{t,T}&=\varphi(\nu_t,\theta_2)=\exp(h_t(\theta_2,\nu_t)/2)\nu_{1,t}
\end{flalign*}and where we have $h_t(\theta,\nu_t)=\mu+\phi h_t(\theta,\nu_{t-1})+\nu_{2,t}$.

Under a weak assumption regarding the growth of $\xi(\cdot)$, and under compactness for the remaining components of $\theta$, this model is locally stationary.
\begin{corollary}For all $u_1,u_2\in[0,1]$, assume that $|\sqrt{\xi(u_1)}-\sqrt{\xi(u_2)}|_{}\leq K|u_1-u_2|$, with $K$ finite and independent of $u_1,u_2$. Assume that $\theta_2\in\Theta_2$, and $\Theta_2$ compact, with $|\rho|\leq 1-\epsilon$ and $|\phi|\leq 1-\epsilon$ for some $\epsilon>0$. Then, there exists a measurable random variable $C_t$ such that, for some $\eta>0$, $\sup_{t\leq T}E(|C_t|^\eta)<\infty$, and 
$$
\text{P}\left(\max_{1\leq t\leq T}|Y_{t,T}-y_{t,t/T}|\leq C_{T}/T\right)=1.
$$
where $C_T=\sup_{t\le T} C_t$.
\end{corollary}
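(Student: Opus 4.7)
The plan is to build the approximating stationary family $\{y_{u,t}\}$ by coupling on the noise sequences and exploiting the structural feature that, within the LS-SV model \eqref{m1}, the only component carrying rescaled-time dependence is the scalar factor $\sqrt{\xi(t/T)}$; the log-volatility recursion $h_{t+1}=\mu+\phi h_t+\nu_{2,t}$ has strictly time-invariant coefficients. Under $|\phi|\le 1-\epsilon$, I would first show that the AR(1) admits the causal stationary solution
\[
h_t=\frac{\mu}{1-\phi}+\sum_{j\ge 0}\phi^j\nu_{2,t-j},
\]
with almost-sure convergence of the series since $\mathrm{Var}(\nu_{2,t})=\sigma^2<\infty$ by compactness of $\Theta_2$ and $\sum_j \phi^{2j}<\infty$. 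I would then set $y_{u,t}:=\sqrt{\xi(u)}\exp(h_t/2)\nu_{1,t}$ on the same probability space that carries the noise driving $Y_{t,T}$.

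Next I would verify the two properties required by Definition \ref{def: ls}. Strict stationarity of $\{y_{u,t}\}_t$ for each fixed $u$ follows because $\{h_t\}$ is a causal MA($\infty$) of i.i.d.\ Gaussian shocks, so $\{(h_t,\nu_{1,t})\}$ is jointly strictly stationary under the constant covariance of $(\nu_{1,t},\nu_{2,t})$, and the multiplicative constant $\sqrt{\xi(u)}$ preserves stationarity. The required integrability follows because $h_t$ is Gaussian with variance $\sigma^2/(1-\phi^2)$, which is bounded uniformly over $\theta_2\in\Theta_2$ by compactness together with $|\phi|\le 1-\epsilon$; hence $\exp(h_t/2)$, and therefore $y_{u,t}$, has moments of all orders.

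The approximation bound is then immediate from the coupling: for every $t$,
\[
Y_{t,T}-y_{t/T,t}=\sqrt{\xi(t/T)}\exp(h_t/2)\nu_{1,t}-\sqrt{\xi(t/T)}\exp(h_t/2)\nu_{1,t}=0,
\]
so I may take $C_t\equiv 0$, giving $C_T=\sup_{t\le T}C_t=0$ a.s. The probability statement in the corollary then holds with probability one, and $\sup_T E(|C_T|^\eta)=0<\infty$ trivially for any $\eta>0$.

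The Lipschitz hypothesis on $\sqrt{\xi(\cdot)}$ is not used in the approximation bound itself; rather, it is what places $\xi(\cdot)$ within the smoothness class of Assumption \ref{ass:1}(i) so that downstream results (Lemma \ref{Lem:struct_ls} and the asymptotic theory of Section \ref{BBBasymptotics}) apply to this model. The main substantive hurdle is therefore not the approximation step, which is rendered trivial by the common-noise construction, but the existence of a causal stationary $h_t$ with uniformly bounded moments; this is delivered by compactness of $\Theta_2$ together with $|\phi|\le 1-\epsilon$. If one allowed $|\phi|\to 1$, the argument would break at the moment-control step rather than at the approximation step.
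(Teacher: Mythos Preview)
Your argument is correct and more direct than the paper's. Both the paper and you build the approximating family by the same-noise coupling $y_{u,t}=\sqrt{\xi(u)}\exp(h_t/2)\nu_{1,t}$, but you then exploit the key structural fact that at $u=t/T$ this gives $y_{t/T,t}=Y_{t,T}$ exactly, so the bound in Definition~\ref{def: ls} holds with $C_t\equiv 0$. The paper instead bounds $|Y_{t,T}-y_{u,t}|=|\sqrt{\xi(t/T)}-\sqrt{\xi(u)}|\,\exp(h_t/2)|\nu_{1,t}|$ by $K\exp(h_t/2)|\nu_{1,t}|$ uniformly in $u$, sets $A_t=K\exp(h_t/2)\nu_{1,t}$, carries out an explicit log-normal moment calculation to obtain $E[A_t^2]<\infty$, and takes $C_t=A_t^2$ with $\eta=1$. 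That computation is really geared toward a bound that holds for \emph{all} $u$ (the content of Lemma~\ref{Lem:struct_ls}) and toward verifying that the bounding process has uniformly controlled moments over $\Theta_2$; for the corollary as stated, your shortcut suffices. Your remark that the Lipschitz hypothesis on $\sqrt{\xi}$ is not needed for the approximation at $u=t/T$ itself, but rather feeds into Assumption~\ref{ass:1}(i) and the general approximation in Lemma~\ref{Lem:struct_ls}, is also on point.
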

\begin{proof}
	Note that 
	$$
	Y_{t,T}-y_{u,t}:=\left\{\zeta(t/T)-\zeta(u)\right\}\exp(h_t/2)v_{1,t}, \text{ where }\zeta(u):=\sqrt{\xi(u)}
	$$Now, for any $u\in[0,1]$, 
	$$
	Y_{t,T}-y_{u,t}=\left\{\zeta(t/T)-\zeta(u)\right\}\exp(h_t/2)v_{1,t}=\left\{\zeta(t/T)-\zeta(u)\right\}\varphi(\nu_t,\theta_2)
	$$
	By the Lipschitz assumption on $\zeta(\cdot)$, for any $t/T,u$, for some finite $K$, and any $u\in[0,1]$, 
	$$
	\max_{t\leq T}|\zeta(t/T)-\zeta(u)|\leq K \max_{t\leq T}|t/T-u|\leq K.
	$$
	As a consequence, for any $t\leq T$, $$|Y_{t,T}-y_{t,t/T}|\leq K\left\{\exp(h_t/2)|v_{1,t}|\right\}.$$ 
	
	Define $A_t=K\left\{\exp(h_t/2)\nu_{1,t}\right\}$. We now demonstrate that $E[A_t^2]<\infty$ for all $t\ge1$. By construction 
	$$
	\nu_t\sim\mathcal{N}\left(\begin{bmatrix}
	0\\0
	\end{bmatrix},\begin{bmatrix}
	1&\gamma_\nu\sigma\\\gamma_\nu\sigma&\sigma^2
	\end{bmatrix}\right).
	$$By assumption $\nu_t,\nu_{t-1}$, for any $t\ge1$, and it then follows that
	$$
	\begin{bmatrix}\nu_{1,t}\\h_{t}
	\end{bmatrix}\sim\mathcal{N}\left(\begin{bmatrix}
	0\\\mu/(1-\phi)
	\end{bmatrix},\begin{bmatrix}
	1&\gamma_\nu\sigma_h\\\gamma_\nu\sigma_h&\sigma^2_h
	\end{bmatrix}\right),
	$$where $\sigma^2_h=\sigma^2/(1-\phi^2)$, and from which we conclude that 
	\begin{equation}\label{eq:cond1}
	h_t|\nu_{1,t},h_{t-1}\sim\mathcal{N}\left(\mu+\phi h_{t-1}-\gamma_{\nu}\sigma_h \nu_{1,t},(1-\gamma_{\nu}^2)\sigma^2_h\right)
	\end{equation}
	Now, $A_t^2=\nu_{1,t}^2\exp(h_t)$ and let us calculate $E[A_t^2]$,
	\begin{flalign*}
	E\left[\nu_{1,t}\exp\left(h_{t}\right)\right]&=E\left[\nu_{1,t}E\left[\exp(h_t)|\nu_{1,t} \right]\right].
	\end{flalign*}
	Let $Z_t\sim_{iid} \mathcal{N}(0,1)$, with $Z_t\perp\nu_{1,t}$, for
	$$
	\tilde{Z}_t=\mu+\phi h_{t-1}+Z_t\sqrt{\sigma^2_h}
	$$ the moments of the random variable in \eqref{eq:cond1} are equivalent to those of $\tilde{Z}_t-\gamma_{\nu}\sigma_h \nu_{1,t}$. Moreover, since $\nu_{t}\perp\nu_{t-1}$, it follows that $\tilde{Z}_t\perp\nu_{1,t}$. Using this independence and the log-normality of $\exp(\tilde{Z}_t)$, deduce that 
	\begin{flalign*}
	E\left[\exp(h_t)|\nu_{1,t} \right]&=\exp(-\gamma_{\nu}\sigma_h\nu_{1,t})E[\exp(\tilde{Z}_t)]\\&=\exp(-\gamma_{\nu}\sigma_h\nu_{1,t})\exp\left(\frac{\mu}{1-\phi}+\frac{(1-\gamma_{\nu}^2)\sigma_h^2}{2}\right),
	\end{flalign*}
	Apply this formula to calculate
	\begin{flalign*}
	E\left[\nu^2_{1,t}\exp\left(h_{t}\right)\right]&=\exp\left(\frac{\mu}{1-\phi}+\frac{(1-\gamma_{\nu}^2)\sigma_h^2}{2}\right)E\left[\nu_{1,t}\exp(-\gamma_{\nu}\sigma_h\nu_{1,t})\right]\\&=\frac{\exp\left(\frac{\mu}{1-\phi}+\frac{(1-\gamma_{\nu}^2)\sigma_h^2}{2}\right)}{\sqrt{2\pi}}\int_{\mathbb{R}}\nu^2_1\exp\left(-\nu_1^2/2-\gamma_{\nu}\sigma_h\nu_{1}\right)\text{d}\nu_1\\&=M(1+\gamma_{\nu}^2\sigma^2_h){\exp\left(\frac{\mu}{1-\phi}+\frac{(1-\gamma_{\nu}^2)\sigma_h^2+2\gamma_{\nu}^2\sigma_h^2}{2}\right)},
	\end{flalign*}where $M>0$ is finite and does not depend on any parameters. 
	
	Taking $C_t=A_t^2$, for all $t\ge1$, the definition is then satisfied with $\eta=1$. 
	
\end{proof}

\subsection{Additional empirical results}
\subsubsection{Alternative conditional mean specification}
Recall the locally stationary SV (LS-SV) model considered in Section 5.2: for $r_{t,j}$ denoting excess returns on the $j$-th portfolio under analysis, $r_{t,j}$ evolves according to
\begin{flalign*}
r_{t,j}&=m_{t,j}+\epsilon_{t,j},\\\epsilon_{t,j}&=\sqrt{\xi_j(t/T)}\exp(h_{t,j}/2)\nu_{t,1j},
\end{flalign*}where the regression function $m_{t,j}$ is given by 
$$
m_{t,j}=\alpha+\beta_1 r_{t,m}+\beta_{2}\text{SMB}_{t}+\beta_{3}\text{HML}_{t},
$$
and where $r_{t,m}$ denotes excess returns on the market factor, $\text{SMB}_t$ is the size factor, and $\text{HML}_t$ is the value factor. The short-run volatility component $h_{t,j}$ is modeled as a AR(1) stochastic volatility process with leverage effects:
\begin{flalign*}
h_{t,j}&=\phi_j h_{t-1,j}+\sigma_{j}\nu_{t,2j},\;\;\;\text{corr}(\nu_{t,1j},\nu_{t,2j})=\gamma_{\nu,j},
\end{flalign*}
where we fix the mean of the short-run SV component to zero to ensure the scale of $\xi(\cdot)$ is identified. 

In this section, we revisit the empirical analysis in Section 5.2 under an alternative specification for the conditional mean. In this section, we consider an LS-SV model where the conditional mean function is a time-varying regression of the following form:$$m_{t,j}=\alpha_j(t/T)+\beta_j(t/T)r_{t,m},$$where $r_{t,m}$ denotes the market factor. Such a model is akin to a time-varying $\alpha,\;\beta$ model with a LS volatility components.

Estimation and inference for this modified LS-SV model is carried out using a similar two-step procedure to that considers in the linear regression version of the model. First, we estimate the time-varying regression parameters to obtain $\hat{\alpha}(t/T),\;\hat{\beta}(t/T)$; in the second step, the residuals $$y_{t,j}=\left(r_{t,j}-\hat{\alpha}(t/T)-\hat{\beta}(t/T)r_{t,m} \right)$$ are used within the L-II algorithm for the LS-SV model, along with a LS-GJR-GARCH auxiliary model (see Algorithm \ref{alg:msv} for specific implementation details). 

We estimate this modified version of the LS-SV model across all 25 portfolios used in Section 5.2. Pointwise confidence intervals are formed for each of the estimated functions using the local block-bootstrap approach discussed in Section 5.2, and where all elements of the LBB discussed in Section 5.2 are replicated for this analysis. 

The estimated results for $\beta_j(\cdot)$ are given graphically in Figures \ref{beta1_D2_fig1}-\ref{beta1_D2_fig5} for each portfolio, while the results for $\alpha_j(\cdot)$ are given in Figures \ref{alpha1_D2_fig1}-\ref{alpha1_D2_fig5}. To ensure the results can be easily interpreted, we only present results for $u\in[0.05,0.95]$. For values of $u$ not in this region, it is well-known that local kernel methods can display significant boundary bias and can be unreliable. Analyzing the estimates we see that across the majority of the sample the estimated functions for both $\alpha_j$ and $\beta_j$ are nearly constant, and across virtually all of the portfolios. We note, however, that certain of the estimates do appear to display some nonlinear behavior at the beginning and end of the sample. We believe that this is due to the boundary bias exhibited by the local kernel estimation method, and therefore is not genuine nonlinearity. All told, these results demonstrate that considering fixed values of $\alpha(\cdot),\beta(\cdot)$, as was done in Section 5.2 of the main paper, is likely a tenable empirical specification.

The resulting estimated functions for $\xi_j(\cdot)$ are given in Figures \ref{tau1_D2_fig1}-\ref{tau1_D2_fig5}. Largely speaking, the results for $\xi_j$ follow a very similar patter to those obtained under the linear regression specification for the conditional mean of returns. Note that there is no reason to expect that the two sets of results should be equivalent, since the results are based on two entirely different sets of residuals. 

\subsubsection{Estimated Values of $\beta(t/T)$}

\begin{figure}[H]
	\centering 
		\includegraphics[height=0.4\textheight, width=0.8\linewidth]{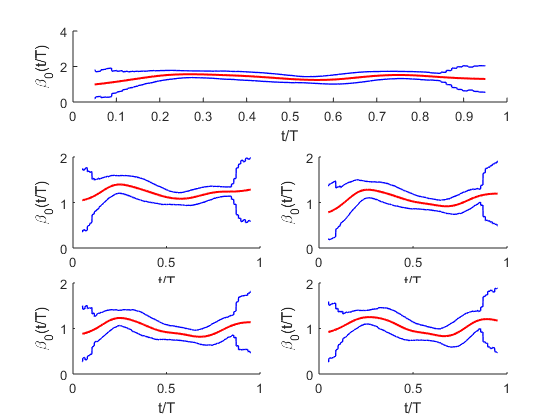}
	\caption{Estimated time-varying $\beta$ for the first five Fama-French portfolios, which is the first quintile of size, intersected with the five quintiles of book-to-market. The top most figure in the panel represents the results for the portfolio formed from the intersection of the first quintile of size, and the first quintile of book-to-market. Running from right to left, the results then correspond to the first quintile of size and the second through fifth quintiles of book-to-market. {The upper and lower confidence intervals are constructed pointwise, at each time point $u=t/T$, using the local block bootstrap.} For presentation of the results, the time span represented on the $x$-axis has been placed on the unit interval. The following correspondence can be used to aid interpretation of the results. The first time point in the sample corresponds to January 1952, the time point $u=0.5$ corresponds to June 1985, and the time point $u=.99$ corresponds to December 2018.}
	\label{beta1_D2_fig1}
\end{figure}

\begin{figure}[H]
	\centering 
			\includegraphics[height=0.4\textheight, width=0.8\linewidth]{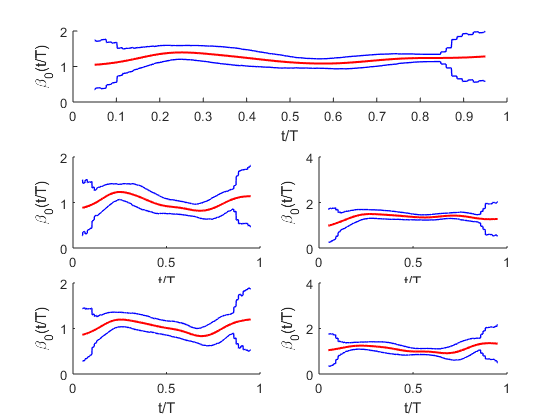}
	\caption{Second five Fama-French portfolios: Second quintile of size, intersected with the five quintiles of book-to-market. The figure has the same interpretation as Figure \ref{FF_fig1} but the results correspond to the second quintile of size.}
	\label{beta1_D2_fig2}
\end{figure}
\begin{figure}[H]
	\centering 
			\includegraphics[height=0.4\textheight, width=0.8\linewidth]{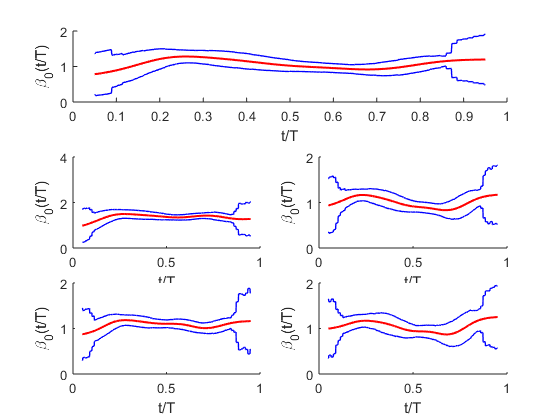}
	\caption{Third five Fama-French portfolios: Third quintile of size, intersected with the five quintiles of book-to-market. The figure has the same interpretation as Figure \ref{FF_fig1} but the results correspond to the third quintile of size.}
	\label{beta1_D2_fig3}
\end{figure}
\begin{figure}[H]
	\centering 
			\includegraphics[height=0.4\textheight, width=0.8\linewidth]{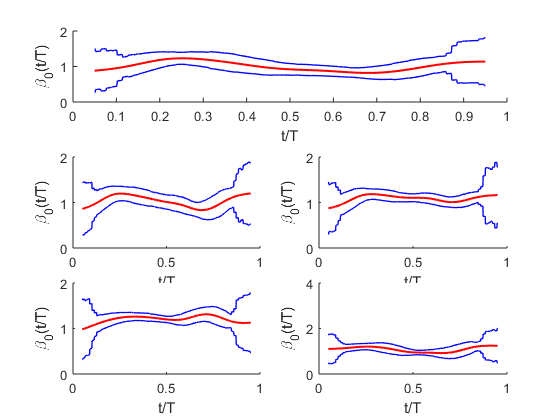}
	\caption{Fourth five Fama-French portfolios: Fourth quintile of size, intersected with the five quintiles of book-to-market. The figure has the same interpretation as Figure \ref{FF_fig1} but the results correspond to the fourth quintile of size.}
	\label{beta1_D2_fig4}
\end{figure}
\begin{figure}[H]
	\centering 
			\includegraphics[height=0.4\textheight, width=0.8\linewidth]{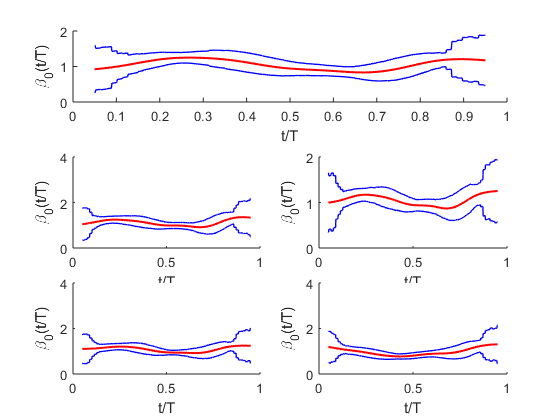}
	\caption{Last five Fama-French portfolios: Fifth quintile of size, intersected with the five quintiles of book-to-market. The figure has the same interpretation as Figure \ref{FF_fig1} but the results correspond to the fifth quintile of size.}
	\label{beta1_D2_fig5}
\end{figure}

\newpage

\subsubsection{Figures: $\alpha(t/T)$}
\begin{figure}[H]
	\centering 
		\includegraphics[height=0.4\textheight, width=0.8\linewidth]{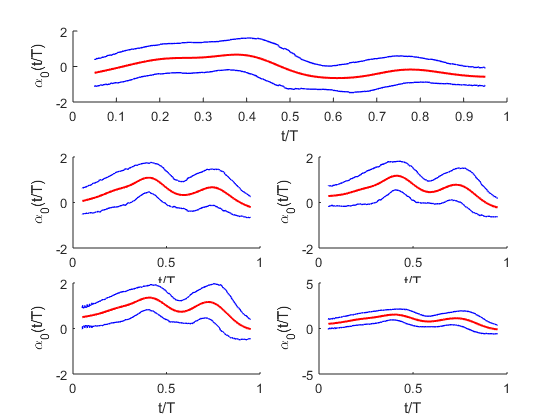}
	\caption{Estimated time-varying $\alpha$ for the first five Fama-French portfolios:  First quintile of size, intersected with the five quintiles of book-to-market. The top most figure in the panel represents the results for the portfolio formed from the intersection of the first quintile of size, and the first quintile of book-to-market. Running from right to left, the results then correspond to the first quintile of size and the second through fifth quintiles of book-to-market.  {The upper and lower confidence intervals are constructed pointwise, at each time point $u=t/T$, using the local block bootstrap.} For presentation of the results, the time span represented on the $x$-axis has been placed on the unit interval. The following correspondence can be used to aid interpretation of the results. The first time point in the sample corresponds to January 1952, the time point $u=0.5$ corresponds to June 1985, and the time point $u=.99$ corresponds to December 2018.}
	\label{alpha1_D2_fig1}
\end{figure}

\begin{figure}[H]
	\centering 
			\includegraphics[height=0.4\textheight, width=0.8\linewidth]{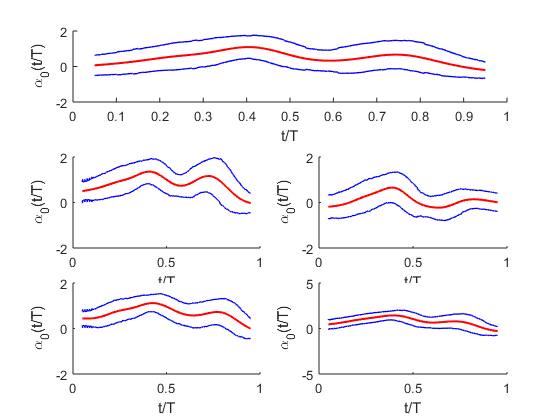}
	\caption{Second five Fama-French portfolios: Second quintile of size, intersected with the five quintiles of book-to-market. The figure has the same interpretation as Figure \ref{FF_fig1} but the results correspond to the second quintile of size.}
	\label{alpha1_D2_fig2}
\end{figure}
\begin{figure}[H]
	\centering 
			\includegraphics[height=0.4\textheight, width=0.8\linewidth]{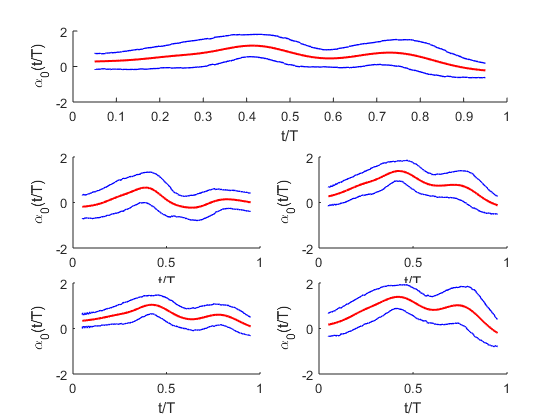}
	\caption{Third five Fama-French portfolios: Third quintile of size, intersected with the five quintiles of book-to-market. The figure has the same interpretation as Figure \ref{FF_fig1} but the results correspond to the third quintile of size.}
	\label{alpha1_D2_fig3}
\end{figure}
\begin{figure}[H]
	\centering 
			\includegraphics[height=0.4\textheight, width=0.8\linewidth]{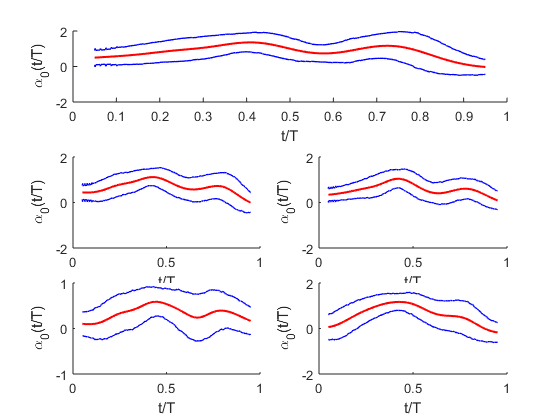}
	\caption{Fourth five Fama-French portfolios: Fourth quintile of size, intersected with the five quintiles of book-to-market. The figure has the same interpretation as Figure \ref{FF_fig1} but the results correspond to the fourth quintile of size.}
	\label{alpha1_D2_fig4}
\end{figure}
\begin{figure}[H]
	\centering 
			\includegraphics[height=0.4\textheight, width=0.8\linewidth]{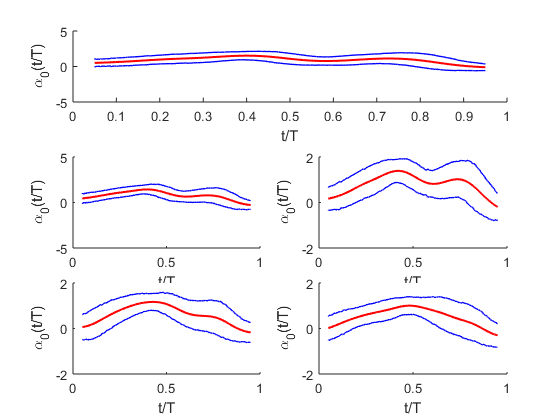}
	\caption{Last five Fama-French portfolios: Fifth quintile of size, intersected with the five quintiles of book-to-market. The figure has the same interpretation as Figure \ref{FF_fig1} but the results correspond to the fifth quintile of size.}
	\label{alpha1_D2_fig5}
\end{figure}

\newpage
\subsubsection{Figures: $\xi(t/T)$}

\begin{figure}[H]
	\centering 
		\includegraphics[height=0.4\textheight, width=0.8\linewidth]{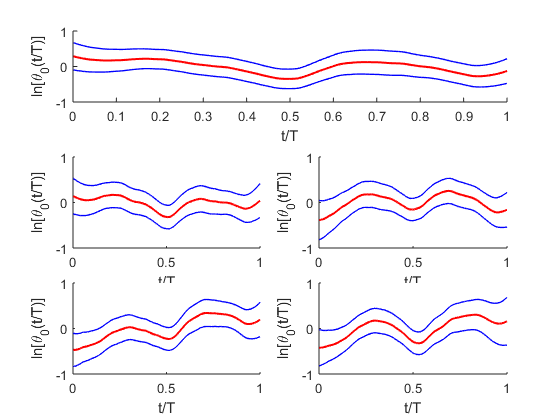}
	\caption{Estimated time-varying $\xi$ for the first five Fama-French portfolios:  First quintile of size, intersected with the five quintiles of book-to-market. The top most figure in the panel represents the results for the portfolio formed from the intersection of the first quintile of size, and the first quintile of book-to-market. Running from right to left, the results then correspond to the first quintile of size and the second through fifth quintiles of book-to-market.  {The upper and lower confidence intervals are constructed pointwise, at each time point $u=t/T$, using the local block bootstrap.} For presentation of the results, the time span represented on the $x$-axis has been placed on the unit interval. The following correspondence can be used to aid interpretation of the results. The first time point in the sample corresponds to January 1952, the time point $u=0.5$ corresponds to June 1985, and the time point $u=.99$ corresponds to December 2018.}
	\label{tau1_D2_fig1}
\end{figure}

\begin{figure}[H]
	\centering 
		\includegraphics[height=0.4\textheight, width=0.8\linewidth]{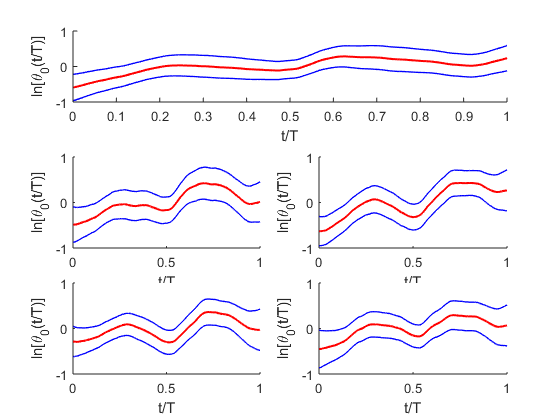}
	\caption{Second five Fama-French portfolios: Second quintile of size, intersected with the five quintiles of book-to-market. The figure has the same interpretation as Figure \ref{FF_fig1} but the results correspond to the second quintile of size.}
	\label{tau1_D2_fig2}
\end{figure}
\begin{figure}[H]
	\centering 
			\includegraphics[height=0.4\textheight, width=0.8\linewidth]{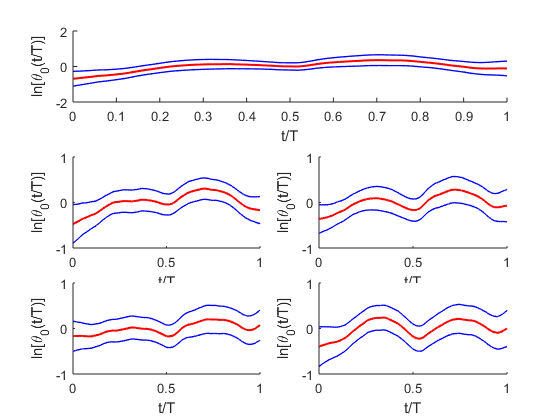}
	\caption{Third five Fama-French portfolios: Third quintile of size, intersected with the five quintiles of book-to-market. The figure has the same interpretation as Figure \ref{FF_fig1} but the results correspond to the third quintile of size.}
	\label{tau1_D2_fig3}
\end{figure}
\begin{figure}[H]
	\centering 
			\includegraphics[height=0.4\textheight, width=0.8\linewidth]{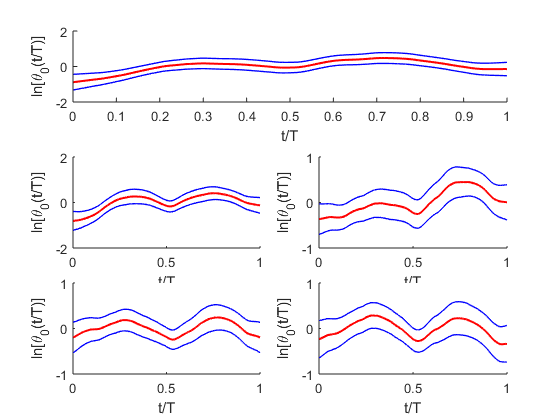}
	\caption{Fourth five Fama-French portfolios: Fourth quintile of size, intersected with the five quintiles of book-to-market. The figure has the same interpretation as Figure \ref{FF_fig1} but the results correspond to the fourth quintile of size.}
	\label{tau1_D2_fig4}
\end{figure}
\begin{figure}[H]
	\centering 
			\includegraphics[height=0.4\textheight, width=0.8\linewidth]{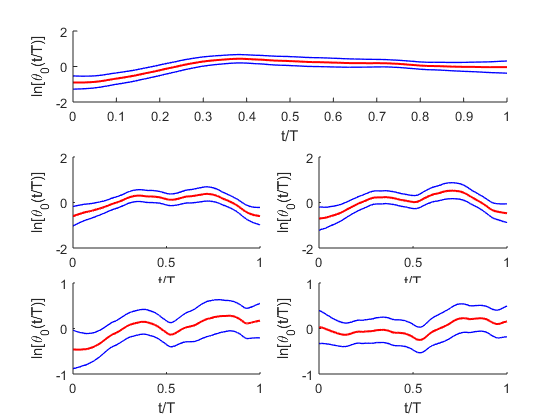}
	\caption{Last five Fama-French portfolios: Fifth quintile of size, intersected with the five quintiles of book-to-market. The figure has the same interpretation as Figure \ref{FF_fig1} but the results correspond to the fifth quintile of size.}
	\label{tau1_D2_fig5}
\end{figure}

\newpage
\subsubsection{ARCH Testing Results}
\begin{table}[H]
	\centering
	\caption{ARCH test statistics for the 25 Fama-French portfolios, calculated using centered returns and five lags for the auxiliary regression. The corresponding $\chi^2_5(.01)$ critical value is 15.08. For each portfolio, we can reject the null at the 1\% significance level. Furthermore, a similar conclusion remains at the .1\% level for all but three of the 25 Fama-French portfolios. Size-j, and BM-j, $j=1,...,5$, refer to the quintiles of size and book-to-market, respectively  }
	\begin{tabular}{rrrrrrr}
		\hline
		&       & Size-1     & Size-2     & Size-3     & Size-4     & Size-5 \\\hline
		& {BM}-1     & 25.97 & 44.31 & 37.22 & 53.03 & 48.77 \\
		& {BM}-2     & 24.49 & 40.55 & 57.38 & 60.51 & 45.35 \\
		& {BM}-3     & 41.31 & 22.39 & 39.51 & 50.72 & 35.74 \\
		& {BM}-4     & 17.54 & 36.94 & 28.51 & 35.89 & 44.28 \\
		& {BM}-5     & 18.72 & 24.92 & 41.95 & 48.33 & 16.09 \\\hline\hline
	\end{tabular}%
	\label{tab:arch}%
\end{table}%

\end{document}